\documentclass{article}

\usepackage{parskip}

\usepackage{amsmath,mathtools}

\usepackage[margin = 1.5in]{geometry}

\usepackage{times}
\usepackage{graphicx} 
\usepackage{subfigure}
\usepackage{epstopdf}

\usepackage[numbers]{natbib}
\usepackage{amsmath}
\usepackage{amssymb}
\usepackage{amsfonts}

\usepackage{color}
\usepackage{booktabs}

\usepackage{algorithm}
\usepackage{algorithmic}

\usepackage{hyperref}
\usepackage[normalem]{ulem}
\usepackage{verbatim} 

\newtheorem{theorem}{Theorem}
\newtheorem{lemma}[theorem]{Lemma}
\newtheorem{corollary}[theorem]{Corollary}

\newtheorem{problem}{Problem}

\newenvironment{proof}{\setlength\parindent{0pt}{\bf Proof.    }}{\hfill\rule{2mm}{2mm}}

\def\R{\mathbb{R}}

\begin{document}

\title{Approximating Spectral Sums of Large-scale Matrices using Stochastic Chebyshev Approximations\footnote{{This article is partially based on preliminary results published in the proceeding of the 32nd International Conference on Machine Learning (ICML 2015).}}}



\author{
Insu Han \thanks{School of Electrical Engineering, Korea Advanced Institute of Science and Technology, Korea. 
Emails: hawki17@kaist.ac.kr
} \and 
Dmitry Malioutov \thanks{Business Analytics and Mathematical Sciences, IBM Research, Yorktown Heights, NY, USA. 
Email: dmalioutov@us.ibm.com
} \and 
Haim Avron \thanks{Department of Applied Mathematics, Tel Aviv University. 
Email: haimav@post.tau.ac.il
} \and
Jinwoo Shin \thanks{School of Electrical Engineering, Korea Advanced Institute of Science and Technology, Korea. 
Email: jinwoos@kaist.ac.kr
}
}

\maketitle

\begin{abstract} 
Computation of the trace of a matrix function plays an important role in many scientific computing applications,
including applications in machine learning, computational physics (e.g., lattice quantum chromodynamics), network analysis and computational biology (e.g., protein folding), just to name a few application areas. 
We propose a linear-time randomized algorithm for approximating the trace of 
matrix functions of large symmetric matrices. Our algorithm is based on coupling 
function approximation using Chebyshev interpolation with stochastic trace estimators (Hutchinson's
method), and as such requires only implicit access to the matrix, in the form of a function that
maps a vector to the product of the matrix and the vector. We provide rigorous approximation error in terms
of the extremal eigenvalue of the input matrix, and the Bernstein ellipse that corresponds to the function
at hand. Based on our general scheme, we provide algorithms with provable guarantees for important 
matrix computations, including log-determinant, trace of matrix inverse, Estrada index, Schatten $p$-norm,
and 
testing
positive definiteness. We experimentally evaluate our algorithm and demonstrate its effectiveness
on matrices with tens of millions dimensions. 
\end{abstract} 

\section{Introduction}

Given a symmetric matrix  $A \in \mathbb R^{d\times d}$ and function {$f:\mathbb R\rightarrow \mathbb R$}, we study how to efficiently compute
\begin{equation}
\Sigma_f(A) = {\tt tr}(f(A)) = \sum_{i=1}^d f( \lambda_i),\label{eq:goal}
\end{equation} 
where $\lambda_1 , \dots , \lambda_d$ are eigenvalues of $A$. We refer to such sums
as {\em spectral sums}. Spectral sums depend only on the eigenvalues of $A$ and so
they are {\em spectral functions}, although not every spectral function is a spectral sum.
Nevertheless, the class of spectral sums is rich and includes useful spectral functions.
For example, if $A$ is also positive definite then $\Sigma_{\log}(A)=\log\det(A)$, i.e. the log-determinant
of $A$.

Indeed, there are many real-world applications in which spectral sums play an important role.
For example, the log-determinant appears ubiquitously in machine learning applications
including Gaussian graphical and Gaussian process models \cite{rue_GMRF, rasmussen_GP, dempster_cov_sel}, 
partition functions of discrete graphical models \cite{ma2013estimating},
minimum-volume ellipsoids \cite{MVE_rousseuw}, metric learning and kernel learning \cite{Dhillon_metric_learning}.
The trace of the matrix inverse ($\Sigma_f(A)$ for $f(x)=1/x$)  is frequently computed for the covariance matrix in uncertainty quantification~\cite{dashti2011uncertainty,kalantzis2013accelerating}
and lattice quantum chromodynamics~\cite{stathopoulos2013hierarchical}.
The Estrada index ($\Sigma_{\exp}(A)$) has been initially developed for topological index of protein folding
in the study of protein functions and protein-ligand interactions \cite{estrada2000char, de2007estimating},
and currently it appears in numerous other applications, e.g., statistical thermodynamics \cite{estrada2007statistical,estrada2008atom}, information theory \cite{carbo2008smooth} and network theory \cite{estrada2005spectral,estrada2007topological} ; see Gutman et al.~\cite{gutman2011estrada} for more applications.
The Schatten $p$-norm {($\Sigma_f(A^\top A)^{1/p}$} for $f(x)=x^{p/2}$ {for $p \geq 1$} ) has been applied to recover low-rank matrix \cite{nie2012low} and sparse MRI reconstruction \cite{majumdar2011algorithm}.

The computation of the aforementioned spectral sums for large-scale matrices is a challenging task.
For example, the standard method for computing the log-determinant uses the 
Cholesky decomposition (if $A = L L^T$ is a Cholesky decomposition, then $\log \det (A) = 2 \sum_i \log L_{ii}$).
In general, the computational complexity of Cholesky decomposition is cubic with respect to the number of
variables, i.e. $O(d^3)$.
For large-scale applications involving more than tens of thousands
of dimensions, this is obviously not feasible. 
If the matrix is sparse, one might try to take advantage of sparse decompositions. As long
as the amount of fill-in during the factorizations is not too big, a substantial improvement
in running time can be expected. Nevertheless, the worst case still requires $\Theta(d^3)$. In particular,
if the sparsity structure of $A$ is random-like, as is common in several of the aforementioned applications,
then little improvement can be expected with sparse methods.

Our aim is to design an efficient algorithm that is able to
compute accurate {approximations to  spectral sums for matrices with {\em tens of
millions} of variables.}

\subsection{Contributions}
We propose a randomized algorithm for approximating spectral sums based on a combination of stochastic trace-estimators 
and Chebyshev interpolation. Our algorithm first computes the coefficients of a  Chebyshev approximation of $f$. This immediately
leads to an approximation of the spectral sum as the trace of power series of the input matrix. 
We then use a stochastic trace-estimator to estimate this trace. In particular, we use {\em Hutchinson's method}~\cite{hutchinson1989stochastic}.  

One appealing aspect of Hutchinson's method is that
it does not require an explicit representation of the input matrix; Hutchinson's method 
requires only an implicit representation of the matrix as an operation that maps a vector 
to the product of the matrix with the vector. In fact, this property 
is inherited by our algorithm to its entirety: our algorithm only needs access to 
an implicit representation of the matrix as an operation that maps a vector 
to the product of the matrix with the vector. In accordance, we measure the complexity of our 
algorithm in terms of the number of matrix-vector products that it requires.
We establish rigorous bounds on the number of matrix-vector products for attaining 
a $\varepsilon$-multiplicative approximation of the spectral sum based on $\varepsilon$, the 
failure probability and the range of the function over its Bernstein ellipse 
(see Theorem \ref{main} for details). 
In particular, Theorem~\ref{main}
implies that if the range is $\Theta(1)$, then the algorithm provides
$\varepsilon$-multiplicative approximation guarantee using a constant amount of matrix-vector products
for any constant $\varepsilon>0$ and constant failure probability.

The overall time complexity of our algorithm is $O(t \cdot \| A\|_{\tt mv})$ where $t$ is the 
number of matrix-vector products (as established by our analysis) and $\| A\|_{\tt mv}$ is the cost
of multiplying $A$ by a vector. One overall assumption is that matrix-vector products can be computed 
efficiently, i.e. $\| A\|_{\tt mv}$ is small. For example, if $A$ is sparse then 
$\| A\|_{\tt mv} = O({\tt nnz}(A))$, i.e., the number of non-zero entries in $A$. 
Other cases that admit fast matrix-vector products are 
low-rank matrices  (which allow fast multiplication by factorization), 
or Fourier (or Hadamard, Walsh, Toeplitz) matrices using the fast Fourier transform.
The proposed algorithm is also very easy to parallelize.

We then proceed to discuss applications of the proposed algorithm. We give rigorous bounds for
using our algorithm for approximating the log-determinant, trace of the inverse of a matrix, the 
Estrada index, and the Schatten $p$-norm. These correspond to continuous functions
$f(x) = \log x$, $f(x)=1/x$, $f(x) =\exp(x)$ and $f(x)=x^{p/2}$, respectively. 
We also use our algorithm to construct a novel algorithm for testing positive definiteness
in the property testing framework. Our algorithm, which is based on approximating the spectral sum for 
$1-\mbox{sign}(x)$, is able to test positive definiteness of a matrix with a sublinear (in matrix
size) number of matrix-vector products. 

Our experiments show that our proposed algorithm 
is orders of magnitude faster than the standard methods for
sparse matrices and provides approximations with less than $1\%$ error for
the examples we consider. It
can also solve problems of tens of millions dimension in a few minutes on our single
commodity computer with 32 GB memory. Furthermore, as reported in our experimental results, 
it achieves much better accuracy compared to a similar approach
based on Talyor expansions~\cite{leithead_GP_stoch}, while both have similar running times.
In addition, it outperforms the recent method based on Cauchy integral formula~\cite{aune2014_GP} in both running time
and accuracy.\footnote{{Aune et al.'s method~\cite{aune2014_GP} is implemented in the SHOGUN machine learning toolbox, http://www.shogun-toolbox.org.}}
The proposed algorithm is also very easy to parallelize and hence has a potential to handle
{even larger problems.}
For example, the
Schur method was used as a part of QUIC algorithm for sparse
inverse covariance estimation with over million variables~\cite{hsieh2013big}, hence our log-determinant algorithm could
be used to further improve its speed and scale.

\subsection{Related Work}

Bai et al.~\cite{bai96largescale} were the first to consider the problem of approximating spectral
sums, and its specific use for approximating the log-determinant and the trace of the 
matrix inverse. Like our method, their method combines stochastic trace estimation with approximation of 
bilinear forms. However, their method for approximating bilinear forms is fundamentally different 
than our method and is based on a Gauss-type quadrature of a Riemann-Stieltjes integral. They do not 
provide rigorous bounds for the bilinear form approximation. In addition, recent progress on analyzing
stochastic trace estimation~\cite{avron2011randomized,roosta2013improved} allow us to provide 
rigorous bounds for the entire procedure.

Since then, several authors considered the use of stochastic trace estimators to compute
certain spectral sums. Bekas et al.~\cite{bekas2007diag} and Malioutov et al.~\cite{malioutov2006low} 
consider the problem of computing the diagonal of a matrix or of the matrix inverse. 
Saad et al.~\cite{eigenvalue_histograms_Saad} use polynomial approximations and rational approximations
of high-pass filter to count the number of eigenvalues in an input interval. They do not provide
rigorous bounds. Stein et al.~\cite{stein2013stochastic} use stochastic approximations of score functions
to learn large-scale Gaussian processes. 

Approximation of the log-determinant in particular has received considerable
treatment in the literature. Pace and LeSage~\cite{pace2004chebyshev} use both Taylor and Chebyshev
based approximation to the logarithm function to design an algorithm for log-determinant approximation, but do not use stochastic trace estimation. Their method is determistic, can entertain only low-degree approximations, and
has no rigorous bounds. Zhang and Leithead~\cite{leithead_GP_stoch} consider the problem
of approximating the log-determinant in the setting of Gaussian process parameter learning. They use Taylor 
expansion in conjunction with stochastic trace estimators, and propose novel error compensation methods. 
They do not provide rigorous bounds as we provide for our method. 
Boutsidis et al.~\cite{boutsidis2015randomized} use a similar scheme based on Taylor expansion for 
approximating the log-determinant, and do provide rigorous bounds. Nevertheless, our experiments demonstrate
that our Chebyshev interpolation based method provides superior accuracy. Aune et al.~\cite{aune2014_GP} 
approximate the log-determinant using a Cauchy integral formula. Their method requires the multiple use of a 
Krylov-subspace linear system solver, so their method is rather expensive. Furthermore, no rigorous bounds
are provided. 

Computation of the trace of the matrix inverse has also been researched extensively. One recent example
is the work of Wu et al.~\cite{wu15interpolating} use a combination of stochastic trace estimation 
and interpolating an approximate inverse. In another example, Chen~\cite{chen15accurate} considers how 
accurately should  linear systems be solved when stochastic trace estimators are used to approximate the trace of the inverse. 

To summarize, the main novelty of our work is combining Chebyshev interpolation with Hutchinson's trace
estimator, which allows to design an highly effective linear-time algorithm with rigorous approximation guarantees
for general spectral sums.

\subsection{Organization}
The structure of the paper is as follows. We introduce the necessary background
in Section \ref{sec:back}. 
Section \ref{sec:main} provides
the description of our algorithm with approximation
guarantees, and  its applications to 
the log-determinant, the trace of matrix inverse, the Estrada index, the Schatten $p$-norm and testing positive definiteness are described in Section \ref{sec:application}.
We report experimental results in Section \ref{sec:exp}.

\section{Preliminaries} \label{sec:back}

Throughout the paper, $A \in \mathbb{R}^{d \times d}$ is a symmetric matrix with eigenvalues $\lambda_1, \dots, \lambda_d \in \mathbb{R}$ {and $I_d$ is the $d$-dimensional identity matrix}. We use ${\tt tr}(\cdot)$ to denote the trace of the matrix. 
We denote the Schatten $p$-norm by $\|\cdot\|_{(p)}$,
and the induced matrix $p$-norm by $\|\cdot\|_p$ (for $p=1,2,\infty$) .
We also use $\lambda_{\min}(A)$ and $\lambda_{\max}(A)$ to denote the smallest and largest eigenvalue of  $A$. In particular, we assume that an interval $[a,b]$ in which contains all of $A$'s eigenvalues is given. In some cases, such bounds are known a-priori due to properties of the downstream use (e.g., the application considered in Section~\ref{sec:gmrf}). In others, a crude bound like  $a = - \|A\|_{\infty}$ and $b= \|A\|_{\infty}$ or via Gershgorin's Circle Theorem \cite[Section 7.2]{golub2012matrix} might be obtained.
For some functions, our algorithm has additional requirements on $a$ and $b$ (e.g. for log-determinant, we need $a > 0$).


Our approach combines two techniques, which we discuss in detail in the next two subsections:
(a) designing polynomial expansion for given function via Chebyshev interpolation \cite{mason2002chebyshev} 
and (b) approximating the trace of matrix via Monte Carlo methods \cite{hutchinson1989stochastic}.

\subsection{Function Approximation using Chebyshev Interpolation}\label{sec:cheby}
Chebyshev interpolation approximates an analytic function by interpolating the function at the 
Chebyshev nodes using a polynomial. Conveniently, the interpolation can be expressed in terms
of basis of Chebyshev polynomials. Specifically, the Chebyshev interpolation $p_n$ of degree $n$ for a given function $f : [-1,1] \rightarrow \mathbb{R}$ is given by (see Mason and Handscomb~\cite{mason2002chebyshev}):
\begin{equation}
f(x) \approx p_n(x) = \sum_{j=0}^n c_j T_j(x) \label{eq:cheb0def}
\end{equation}
where the coefficient $c_j$, the $j$-th Chebyshev polynomial $T_j(x)$ and Chebyshev nodes $\{x_k\}^n_{k=0}$ are defined as
\begin{align}
&c_j
= \begin{dcases}
      \frac{1}{n+1}  \sum_{k=0}^n f(x_k) \ T_0(x_k) & \text{if $\ j=0$}\\
      \frac{2}{n+1}  \sum_{k=0}^n f(x_k) \ T_j(x_k) & \text{otherwise} 
      \end{dcases} \label{eq:cdef}\\
      \nonumber \\
&T_0(x) = 1, T_1(x) = x \nonumber \\
&T_{j+1}(x) = 2 x T_j (x) - T_{j-1} (x) \label{eq:chebreq} \qquad \text{for $\ j \ge 1$} \\
&x_k = \cos \Big( \frac{ \pi (k + 1/2 )}{ n+1} \Big)\,. \nonumber 
\end{align}

Chebyshev interpolation better approximates the functions as the degree $n$ increases.
In particular, the following error bound is known~\cite{berrut2004barycentric, xiang2010error}.
\begin{theorem}\label{converge_unit}
Suppose $f$ is analytic function with $\left| f ( z ) \right| \le U$ in the region bounded by the so-called
Bernstein ellipse
with foci $+1,-1$ and sum of major and minor semi-axis lengths equals to $\rho > 1$. Let $p_n$ denote the degree $n$ Chebyshev interpolant of $f$ as defined by equations~\eqref{eq:cheb0def},~\eqref{eq:cdef} and~\eqref{eq:chebreq}. We have,
\begin{align*}
\max_{x \in [-1,1]} \left| f(x) - p_n(x) \right| \le  \frac{4U}{\left( \rho -1 \right) \rho^{n}}.
\end{align*}
\end{theorem}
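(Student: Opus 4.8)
The plan is to derive the interpolation error bound from the classical Chebyshev series (not the interpolant) error and the aliasing relation between the two. The starting point is that an analytic function $f$ bounded by $U$ on the Bernstein ellipse $E_\rho$ admits a uniformly convergent Chebyshev expansion $f(x) = \sum_{j=0}^{\infty} a_j T_j(x)$, where the coefficients satisfy the decay estimate $|a_j| \le 2U \rho^{-j}$ for $j \ge 1$ (and $|a_0| \le U$). This coefficient bound is obtained by writing $a_j$ as a contour integral of $f$ against the generating function of the Chebyshev polynomials (equivalently, via the Joukowski map $x = (z + z^{-1})/2$ sending $E_\rho$ to the annulus $|z| = \rho$), and then bounding the integrand using $|f| \le U$ on $E_\rho$ and $|T_j(x)| \sim \rho^j/2$ on that curve.

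Next I would relate the degree-$n$ interpolant $p_n$ to the truncated series. The key identity is aliasing: because the Chebyshev nodes $x_k$ are chosen so that the discrete orthogonality relations hold, the interpolation coefficients $c_j$ defined in \eqref{eq:cdef} are not the $a_j$ themselves but satisfy $c_j = a_j + \sum_{m \ge 1}\bigl(a_{2m(n+1)+j} + a_{2m(n+1)-j}\bigr)$ for $0 \le j \le n$. Consequently, for $x \in [-1,1]$,
\begin{align*}
f(x) - p_n(x) = \sum_{j=n+1}^{\infty} a_j T_j(x) - \sum_{j=0}^{n} \Bigl(\sum_{m \ge 1} \bigl(a_{2m(n+1)+j} + a_{2m(n+1)-j}\bigr)\Bigr) T_j(x).
\end{align*}
Using $|T_j(x)| \le 1$ on $[-1,1]$ and the triangle inequality, every term on the right is controlled by the tail of the geometric series $\sum |a_j|$.

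The final step is the routine geometric-series bookkeeping: plug in $|a_j| \le 2U\rho^{-j}$, bound the first sum by $\sum_{j=n+1}^\infty 2U\rho^{-j} = 2U\rho^{-(n+1)}/(1-\rho^{-1})$, bound the double (aliasing) sum by a comparable geometric tail starting at index $n+1$ as well (each alias index $2m(n+1)\pm j$ is at least $n+1$), and combine the constants to reach the stated $\dfrac{4U}{(\rho-1)\rho^{n}}$. I expect the main obstacle to be getting the constant exactly right: one has to be careful that the aliasing contributions, after summing over both $m$ and $j$, still collapse into a single geometric series with the same ratio $\rho^{-1}$ so that the total is at most twice the naive truncation bound, which is precisely what produces the factor $4$ rather than something larger. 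The analyticity hypothesis enters only through the coefficient decay estimate, so once that is in hand the rest is elementary; alternatively, one may simply cite the error bound from \cite{berrut2004barycentric, xiang2010error} as the excerpt already does.
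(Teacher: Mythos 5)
Your proposal is correct in outline, but note that the paper does not actually prove this theorem: it is quoted as a known result with citations to Berrut--Trefethen and Xiang et al., and the constant $4U/((\rho-1)\rho^n)$ is taken directly from that literature (it is Theorem 8.2 in Trefethen's \emph{Approximation Theory and Approximation Practice}). What you have written is essentially a reconstruction of the standard proof in those sources: the contour-integral (Joukowski map) argument giving $|a_0|\le U$, $|a_j|\le 2U\rho^{-j}$, the aliasing relation tying the interpolation coefficients $c_j$ to the series coefficients, and the geometric-tail bookkeeping that shows the aliased mass is at most another copy of the truncation tail, whence the factor $4$ rather than $2$. Two small corrections to your sketch, neither of which harms the bound since you only use absolute values: for the first-kind nodes $x_k=\cos\bigl(\pi(k+1/2)/(n+1)\bigr)$ used in the paper, the aliasing identity carries alternating signs, $T_{2m(n+1)\pm j}(x_k)=(-1)^m T_j(x_k)$, so the correct relation is $c_j = a_j + \sum_{m\ge1}(-1)^m\bigl(a_{2m(n+1)+j}+a_{2m(n+1)-j}\bigr)$ (with the obvious adjustment at $j=0$), and the coefficients $a_{(2m+1)(n+1)}$ do not appear at all because $T_{(2m+1)(n+1)}$ vanishes at these nodes. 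Also, the coefficient decay is cleanest if argued on the transplanted circle $|z|=\rho$ rather than via the heuristic $|T_j|\sim\rho^j/2$ on the ellipse. With those touch-ups your argument is a complete and self-contained proof of the statement, which is more than the paper itself provides.
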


The interpolation scheme described so far assumed a domain of $[-1,1]$. To allow a more general 
domain of $[a,b]$ one can use the linear mapping $g(x) = \frac{b-a}{2}x+\frac{b+a}{2}$ to map $[-1,1]$ 
to $[a,b]$. Thus, $f \circ g$ is a function on $[-1,1]$ which can be approximated using the scheme
above. The approximation to $f$ is then ${\widetilde p}_n= p_n \circ g^{-1}$ where $p_n$ is the approximation to 
$f \circ g$. Note that $\widetilde p_n$ is
a polynomial with degree $n$ as well. In particular, we have the following  approximation scheme for a general $f : [a,b] \rightarrow \mathbb{R}$:
\begin{equation}
f(x) \approx {\widetilde p}_n(x) = \sum_{j=0}^n {\widetilde c}_j T_j\left(\frac{2}{b-a}x - \frac{b+a}{b-a}\right) \label{eq:chebdef}
\end{equation}
where the coefficient $\widetilde c_j$ are defined as
\begin{align}
&{\widetilde c}_j
= \begin{dcases}
      \frac{1}{n+1}  \sum_{k=0}^n f\left(\frac{b-a}{2}x_k+\frac{b+a}{2}\right) \ T_0(x_k) & \text{if $\ j=0$}\\
      \frac{2}{n+1}  \sum_{k=0}^n f\left(\frac{b-a}{2}x_k+\frac{b+a}{2}\right) \ T_j(x_k) & \text{otherwise}
      \end{dcases}  \label{eq:ctildedef}
\end{align}

The following is simple corollary of Theorem~\ref{converge_unit}.
\begin{corollary}\label{converge}
Suppose that $a,b\in \mathbb{R}$ with $a <b$. Suppose $f$ is analytic function with $\left| f ( \frac{b-a}{2}z + \frac{b+a}{2} ) \right| \le U$ in the region bounded by the ellipse with foci $+1,-1$ and sum of major and minor semi-axis lengths equals to $\rho > 1$. Let $\widetilde{p}_n$ denote the degree $n$ Chebyshev interpolant of $f$ on $[a,b]$ as defined by equations~\eqref{eq:chebdef},~\eqref{eq:ctildedef} and~\eqref{eq:chebreq}. We have,
\begin{align*}
\max_{x \in [a,b]} \left| f(x) - \widetilde{p}_n(x) \right| \le  \frac{4U}{\left( \rho -1 \right) \rho^{n}}.
\end{align*}
\end{corollary}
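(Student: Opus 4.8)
The plan is to reduce Corollary~\ref{converge} directly to Theorem~\ref{converge_unit} via the affine change of variables, without re-deriving any interpolation theory. Define $g:[-1,1]\to[a,b]$ by $g(x) = \frac{b-a}{2}x + \frac{b+a}{2}$, which is the linear map already introduced in the text, and set $h = f \circ g$. Since $a < b$, the map $g$ is an affine bijection between $[-1,1]$ and $[a,b]$, and $g^{-1}(y) = \frac{2}{b-a}y - \frac{b+a}{b-a}$.

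The first substantive step is to verify that $h$ satisfies the hypotheses of Theorem~\ref{converge_unit}. Affine maps are entire, so $h$ is analytic wherever $f$ is; in particular $h$ is analytic in the region bounded by the Bernstein ellipse $E_\rho$ (foci $\pm 1$, semi-axis sum $\rho$), because by hypothesis $f$ is analytic on $g(E_\rho)$. Moreover, the hypothesis of the corollary states precisely that $\left|f\bigl(\frac{b-a}{2}z + \frac{b+a}{2}\bigr)\right| = |h(z)| \le U$ for $z$ in that region. Hence $h$ meets the $U$ and $\rho$ conditions of Theorem~\ref{converge_unit} verbatim.

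The second step is to identify the Chebyshev interpolant. Applying Theorem~\ref{converge_unit} to $h$ gives a degree-$n$ interpolant $p_n(x) = \sum_{j=0}^n c_j T_j(x)$ with $c_j$ given by~\eqref{eq:cdef} applied to $h$; but $h(x_k) = f\bigl(\frac{b-a}{2}x_k + \frac{b+a}{2}\bigr)$, so these $c_j$ are exactly the $\widetilde c_j$ of~\eqref{eq:ctildedef}. Therefore $\widetilde p_n = p_n \circ g^{-1}$ as defined in~\eqref{eq:chebdef}. Theorem~\ref{converge_unit} yields
\begin{align*}
\max_{x \in [-1,1]} \left| h(x) - p_n(x) \right| \le \frac{4U}{(\rho-1)\rho^n}.
\end{align*}
Finally, substituting $x = g^{-1}(y)$ and using that $g^{-1}$ is a bijection from $[a,b]$ onto $[-1,1]$,
\begin{align*}
\max_{y \in [a,b]} \left| f(y) - \widetilde p_n(y) \right| = \max_{x \in [-1,1]} \left| f(g(x)) - p_n(x) \right| = \max_{x \in [-1,1]} \left| h(x) - p_n(x) \right| \le \frac{4U}{(\rho-1)\rho^n},
\end{align*}
which is the claimed bound.

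There is no real obstacle here; the only point requiring a modicum of care is making sure the analyticity and boundedness hypotheses transfer cleanly — i.e. that ``$f$ analytic and bounded by $U$ on $g(E_\rho)$'' is equivalent to ``$h$ analytic and bounded by $U$ on $E_\rho$'' — and that the interpolation nodes and coefficients genuinely correspond under the change of variables so that $\widetilde p_n$ in~\eqref{eq:chebdef} is literally $p_n \circ g^{-1}$. Both are immediate from the affine, invertible nature of $g$, so the proof is short.
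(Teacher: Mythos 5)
Your proof is correct and follows essentially the same route as the paper: apply Theorem~\ref{converge_unit} to $f \circ g$ and observe that $\max_{x \in [-1,1]} |(f\circ g)(x) - p_n(x)| = \max_{x \in [a,b]} |f(x) - \widetilde p_n(x)|$. The extra care you take in checking that the coefficients $\widetilde c_j$ coincide with those of the interpolant of $f\circ g$ is a detail the paper leaves implicit, but the argument is the same.
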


\begin{proof}
Follows immediately from Theorem~\ref{converge_unit} and observing that for $g(x) = \frac{b-a}{2}x+\frac{b+a}{2}$
we have 
$$
\max_{x \in [-1,1]} \left| (f \circ g)(x) - p_n \left( x\right)\right| 
= 
\max_{x \in [a,b]} \left| f(x) - \widetilde p_n \left( x\right)\right|. 
$$
\end{proof}

Chebyshev interpolation for scalar functions can be naturally generalized to matrix functions \cite{higham2008functions}.
Using the Chebyshev interpolation $\widetilde p_n$ for function $f$, 
we obtain the following approximation formula:
\begin{align*}
\Sigma_f (A) &= \sum_{i=1}^d f( \lambda_i) \approx \sum_{i=1}^d \widetilde p_n(\lambda_i) = \sum_{i=1}^d \sum_{j=0}^n \widetilde c_j  T_j\left(\frac{2}{b-a}\lambda_i - \frac{b+a}{b-a}\right) \\
&= \sum_{j=0}^n \widetilde c_j \sum_{i=1}^d T_j\left(\frac{2}{b-a}\lambda_i - \frac{b+a}{b-a}\right) 
= \sum_{j=0}^n \widetilde c_j {\tt tr}\left(T_j \left(\frac{2}{b-a}A - \frac{b+a}{b-a}I_d\right)\right) \\
&={\tt tr}\left( \sum^n_{j=0} \widetilde c_j  T_j \left(\frac{2}{b-a}A - \frac{b+a}{b-a}I_d\right) \right)
\end{align*}
where 
the equality before the last follows from the fact that $\sum_{i=1}^d p(\lambda_i) = {\tt tr}(p(A))$ for
any polynomial $p$, and the last equality from the linearity of the trace operation.

We remark that other polynomial approximations, e.g. Taylor, can also be used.  
However, it known that Chebyshev interpolation, in addition to its simplicity, is nearly optimal~\cite{trefethen2012atap}
with respect to the $\infty$-norm so is well-suited for our uses.


\subsection{Stochastic Trace Estimation (Hutchinson's Method)} \label{sec:trace}

The main challenge in utilizing the approximation formula at the end of the last subsection is how 
to compute 
$$
{\tt tr}\left( \sum^n_{j=0} \widetilde c_j  T_j \left(\frac{2}{b-a}A - \frac{b+a}{b-a}I_d\right) \right)
$$
without actually computing the matrix involved (since the latter 
is expensive to compute). In this paper we turn to the stochastic trace estimation method.
In essence, it is a Monte-Carlo approach: to estimate the trace of an arbitrary matrix $B$, first a random vector  $\mathbf{z}$ is drawn from some fixed distribution, such that
the expectation of $\mathbf{z}^\top B \mathbf{z}$ is equal to the trace of $B$. By sampling
$m$ such i.i.d. random vectors, and averaging we obtain an estimate of ${\tt tr}(B)$. 
Namely, given random vectors  {$\mathbf{v}^{(1)},\dots,\mathbf{v}^{(m)}$}, the estimator is 
$$
{\tt tr}_m (B) = \frac1m \sum_{i=1}^m \mathbf{v}^{(i)\top} B \mathbf{v}^{(i)} \,.
$$

Random vectors can be used for the above trace estimator
as long as they have zero means and unit covariances \cite{hutchinson1989stochastic}.
Examples include those from 
Gaussian (normal) distribution 
and Rademacher distribution. 
The latter sampling entries uniformly at random from $\{ -1,+1 \}$
is known to have the smallest variance among such Monte-Carlo methods \cite{avron2011randomized}.
This is called as the {\em Hutchinson's estimator} and
satisfies the following equalities:
\begin{eqnarray*}
\mathbf{E}\left[{\tt tr}_m \left( B\right) \right] & = &{\tt tr}\left( B \right) \\
\mathbf{Var}\left[ {\tt tr}_m \left( B\right) \right] & = & \frac2{m} \left( \| B \|_{F}^2 - \sum_{i=1}^d B_{i,i}^2 \right)
\end{eqnarray*}
However, $(\varepsilon,{\zeta})$-bounds, as introduced by Avron et al.~\cite{avron2011randomized}, are more appropriate for our needs. Specifically, we use the following bound due to Roosta-Khorasani and Ascher~\cite{roosta2013improved}.
\begin{theorem}\label{thm:trace} 
Let $B \in \mathbb{R}^{d \times d}$ be a positive (or negative) semi-definite matrix.
Given $\varepsilon , \zeta \in (0,1)$, 
\begin{align*}
\Pr \left[ \left| {\tt tr}_m(B) - {\tt tr}(B) \right| \le \varepsilon \left| {\tt tr}(B)\right| \right] \ge 1 - {\zeta}
\end{align*}
holds if sampling number $m$ is larger than $6 \varepsilon^{-2}  \log \left( \frac2{\zeta} \right)$.
\end{theorem}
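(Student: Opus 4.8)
The plan is to reduce to the positive semi-definite case and then establish the two-sided bound through two independent one-sided Chernoff-type estimates -- one for the upper deviation, one for the lower -- combined at the end by a union bound. First I would dispose of the sign: if $B$ is negative semi-definite, apply the statement to $-B$, noting that ${\tt tr}_m(-B) = -{\tt tr}_m(B)$ and ${\tt tr}(-B) = -{\tt tr}(B)$, so nothing changes. So assume $B$ is positive semi-definite with eigenvalues $\lambda_1 \ge \dots \ge \lambda_d \ge 0$, put $\tau := {\tt tr}(B) = \sum_i \lambda_i$ (the case $\tau = 0$ forces $B = 0$ and is trivial, so take $\tau > 0$), and write $Q_k := \mathbf{v}^{(k)\top} B \mathbf{v}^{(k)}$, so the $Q_k$ are i.i.d., nonnegative, with $\mathbf{E}[Q_k] = \tau$ and ${\tt tr}_m(B) = \frac1m \sum_{k=1}^m Q_k$. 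The one structural fact I would use everywhere is that, because $B$ is positive semi-definite, $\|B\|_F^2 = \sum_i \lambda_i^2 \le \big(\sum_i \lambda_i\big)^2 = \tau^2$ and $\lambda_1 \le \tau$; the first also controls the variance $\mathbf{Var}[Q_k] = 2\big(\|B\|_F^2 - \sum_i B_{ii}^2\big) \le 2\tau^2$.

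For the \emph{upper} deviation I would bound the moment generating function of a single $Q = \mathbf{z}^\top B \mathbf{z}$, with $\mathbf{z}$ Rademacher, by that of the corresponding Gaussian quadratic form. Diagonalizing $B = \sum_j \lambda_j \mathbf{u}_j \mathbf{u}_j^\top$ and applying the Gaussian-integral (Hubbard--Stratonovich) linearization $e^{a^2/2} = \mathbf{E}_g[e^{ag}]$, $g \sim \mathcal{N}(0,1)$, term by term, then Fubini and the elementary bound $\mathbf{E}_{\mathbf{z}}[e^{\mathbf{z}^\top \mathbf{w}}] = \prod_i \cosh(w_i) \le e^{\|\mathbf{w}\|^2/2}$, one obtains $\mathbf{E}[e^{tQ}] \le \prod_j (1-2t\lambda_j)^{-1/2}$ for $0 \le t < 1/(2\lambda_1)$. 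Since the $\lambda_j \ge 0$ sum to $\tau$, we have $\prod_j(1-2t\lambda_j) \ge 1-2t\tau$, giving the cleaner bound $\mathbf{E}[e^{tQ}] \le (1-2t\tau)^{-1/2}$ for $t \in [0,1/(2\tau))$. Taking the product over the $m$ samples, applying Markov's inequality to $e^{t \sum_k Q_k}$, and optimizing over $t$ yields $\Pr[{\tt tr}_m(B) \ge (1+\varepsilon)\tau] \le e^{-c_+ m \varepsilon^2}$ for a suitable constant $c_+$.

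For the \emph{lower} deviation the Gaussian comparison is genuinely unavailable -- the analogous inequality $\mathbf{E}[e^{-sQ}] \le \prod_j(1+2s\lambda_j)^{-1/2}$ is simply false for Rademacher $\mathbf{z}$, as one already sees for a spread-out rank-one $B$ -- so I would fall back on an elementary second-moment estimate: for $s \ge 0$, $e^{-sQ} \le 1 - sQ + \tfrac12 s^2 Q^2$, hence $\mathbf{E}[e^{-sQ}] \le 1 - s\tau + \tfrac12 s^2 \mathbf{E}[Q^2] \le \exp\!\big(-s\tau + \tfrac12 s^2(\mathbf{Var}[Q] + \tau^2)\big) \le \exp\!\big(-s\tau + \tfrac32 s^2 \tau^2\big)$, using the variance bound above. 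Tensorizing over the $m$ samples, applying Markov to $e^{-s\sum_k Q_k}$ and optimizing over $s$ (the minimizer is $s = \varepsilon/(3\tau)$) gives $\Pr[{\tt tr}_m(B) \le (1-\varepsilon)\tau] \le e^{-m\varepsilon^2/6}$. A union bound then produces $\Pr[|{\tt tr}_m(B) - \tau| > \varepsilon\tau] \le 2 e^{-c m\varepsilon^2}$, and tracking the constants in both tails -- crucially using $\varepsilon \in (0,1)$, so that expressions like $2+\varepsilon$ are bounded by $3$ -- one verifies that the right-hand side is at most $\zeta$ as soon as $m \ge 6 \varepsilon^{-2}\log(2/\zeta)$, which is the claim.

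The main obstacle is the lower-tail estimate: because Rademacher quadratic forms are not stochastically dominated by Gaussian ones in that direction, the clean MGF comparison cannot be reused there and one must make do with the cruder $e^{-x} \le 1 - x + x^2/2$ bound, which is exactly what pins the constant at $6$; reconciling this with the (somewhat stronger) upper tail and extracting the stated constant uniformly over $\varepsilon \in (0,1)$ is where the genuine bookkeeping lies, whereas the positive-semidefinite reduction and the Gaussian-linearization step for the upper tail are routine.
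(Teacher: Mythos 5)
The paper does not prove this statement at all: Theorem~\ref{thm:trace} is imported verbatim, with a citation, from Roosta-Khorasani and Ascher \cite{roosta2013improved}, so your attempt has to stand on its own rather than be compared with an in-paper argument. Much of it does stand: the reduction of the negative semi-definite case, the facts $\|B\|_F^2\le \tau^2$ and $\mathbf{Var}[Q]\le 2\tau^2$ for PSD $B$, and the entire lower-tail argument are correct, and the optimization at $s=\varepsilon/(3\tau)$ indeed yields $\Pr\left[{\tt tr}_m(B)\le(1-\varepsilon)\tau\right]\le e^{-m\varepsilon^2/6}$, exactly matching the claimed constant.

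The genuine gap is in the upper tail, and precisely in the final ``bookkeeping'' that you describe as routine. Your chain $\mathbf{E}[e^{tQ}]\le(1-2t\tau)^{-1/2}$, tensorization, and Chernoff optimization at $t=\varepsilon/\left(2\tau(1+\varepsilon)\right)$ give $\Pr\left[{\tt tr}_m(B)\ge(1+\varepsilon)\tau\right]\le\exp\left(-\tfrac{m}{2}\left(\varepsilon-\log(1+\varepsilon)\right)\right)$. For this to be at most $\zeta/2$ under $m\ge 6\varepsilon^{-2}\log(2/\zeta)$ you need $\varepsilon-\log(1+\varepsilon)\ge\varepsilon^2/3$ on all of $(0,1)$, which fails for $\varepsilon\gtrsim 0.79$ (at $\varepsilon=1$, $1-\log 2\approx 0.307<1/3$), so on that range your two tails sum to more than $\zeta$ and the union bound does not close. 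Nor can this be repaired by a sharper choice of $t$ or by avoiding the Weierstrass step: for a rank-one, spread-out $B=\tau\,\mathbf{u}\mathbf{u}^\top$ the bound $\mathbf{E}[e^{tQ}]\le(1-2t\tau)^{-1/2}$ is asymptotically tight (the Rademacher quadratic form is essentially a scaled $\chi^2_1$ variable), so $\tfrac12\left(\varepsilon-\log(1+\varepsilon)\right)$ is the best per-sample upper-tail exponent any Chernoff argument built on this MGF comparison can deliver, and it sits strictly below $\varepsilon^2/6$ near $\varepsilon=1$; this is the same barrier that forces the larger constant $8$ for the Gaussian estimator in the cited work. As written, your proposal therefore proves the statement only for $\varepsilon$ bounded away from $1$ (say $\varepsilon\le 3/4$), or for all $\varepsilon\in(0,1)$ with a slightly larger constant (about $6.6$ in place of $6$); obtaining the stated constant $6$ uniformly over $\varepsilon\in(0,1)$ requires a genuinely different handling of the upper deviation, which is exactly the delicate part of the Roosta-Khorasani--Ascher analysis that the paper sidesteps by citation.
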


Note that computing $\mathbf{v}^{(i)\top} B \mathbf{v}^{(i)}$ requires only multiplications between a matrix and a vector, which is particularly appealing when evaluating $B$ itself is expensive, e.g., 
$$B= \sum^n_{j=0} \widetilde c_j  T_j \left(\frac{2}{b-a}A - \frac{b+a}{b-a}I_d\right)$$ 
as
in our case. In this case,
$$
\mathbf{v}^{(i)\top} B \mathbf{v}^{(i)} = \sum^n_{j=0} \widetilde c_j \mathbf{v}^{(i)\top} T_j \left(\frac{2}{b-a}A - \frac{b+a}{b-a}I_d\right)  \mathbf{v}^{(i)} = \sum^n_{j=0} \widetilde c_j \mathbf{v}^{(i)\top} \mathbf{w}_{j}^{(i)}
$$ where
$$
\mathbf{w}_{j}^{(i)} = T_{j} \left(\frac{2}{b-a}A - \frac{b+a}{b-a}I_d\right)  \mathbf{v}^{(i)}\,.
$$
The latter can be computed efficiently (using $n$ matrix-vector products with $A$) by observing that
due to equation~\eqref{eq:chebreq} we have
that
\begin{align*}
&\mathbf{w}_{0}^{(i)} = \mathbf{v}^{(i)}, \mathbf{w}_{1}^{(i)} = \left(\frac{2}{b-a}A - \frac{b+a}{b-a}I_d\right) \mathbf{w}_{0}^{(i)}\\
&\mathbf{w}_{j+1}^{(i)} = 2 \left(\frac{2}{b-a}A - \frac{b+a}{b-a}I_d\right) \mathbf{w}_{j}^{(i)} 
- \mathbf{w}_{j-1}^{(i)}\,.
\end{align*}

In order to apply Theorem~\ref{thm:trace} we need
$B$ to be positive (or negative) semi-definite. In our case $B=\widetilde p_n (A)$ so it is sufficient
for $\widetilde p_n$ to be non-negative (non-positive) on $[a,b]$. The following lemma establishes a sufficient 
condition for non-negativity of $\widetilde p_n$, and a consequence positive (negative) semi-definiteness of 
$\widetilde p_n(A)$.

\begin{lemma} \label{positive}
Suppose $f$ satisfies that $\left| f(x) \right|\ge L$ for $x \in [a,b]$.
Then, linear transformed Chebyshev approximation $\widetilde p_n(x)$ of $f(x)$ is also non-negative on $[a,b]$ if
\begin{align}
\frac{4U}{\left( \rho - 1\right) \rho^n} \le L \label{condition}
\end{align}
holds for all $n\ge1$.
\end{lemma}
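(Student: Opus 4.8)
The plan is to obtain the conclusion as a direct consequence of the uniform error estimate in Corollary~\ref{converge} together with the hypothesis $|f(x)| \ge L$ on $[a,b]$; no delicate estimates are needed beyond what is already established.

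First I would pin down the sign of $f$. The standing assumption (under which the Chebyshev interpolant and its error bound make sense) is that $f$ is analytic on a region containing $[a,b]$, so in particular $f$ is continuous on $[a,b]$. Since $|f(x)| \ge L > 0$ for every $x \in [a,b]$, the intermediate value theorem forbids $f$ from vanishing, hence $f$ has a definite sign: either $f(x) \ge L$ for all $x \in [a,b]$, or $f(x) \le -L$ for all $x \in [a,b]$. I would treat the first case; the second is symmetric (replace $f$ by $-f$, or equivalently conclude non-positivity of $\widetilde p_n$), and either way $\widetilde p_n(A)$ is semi-definite, which is the property actually needed to invoke Theorem~\ref{thm:trace}.

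Next, I would apply Corollary~\ref{converge}: with $U$ the bound on $f$ over the relevant Bernstein ellipse, the degree-$n$ linear-transformed Chebyshev interpolant $\widetilde p_n$ satisfies $\max_{x\in[a,b]} |f(x)-\widetilde p_n(x)| \le \tfrac{4U}{(\rho-1)\rho^n}$, and by hypothesis~\eqref{condition} this is at most $L$. Combining with $f(x)\ge L$, for every $x \in [a,b]$ one gets $\widetilde p_n(x) \ge f(x) - |f(x)-\widetilde p_n(x)| \ge L - L = 0$, so $\widetilde p_n$ is non-negative on $[a,b]$; in the symmetric case the same computation gives $\widetilde p_n(x) \le f(x) + L \le 0$. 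Since the eigenvalues of $\widetilde p_n(A)$ are $\widetilde p_n(\lambda_i)$ with each $\lambda_i \in [a,b]$, this shows $\widetilde p_n(A)$ is positive (resp. negative) semi-definite.

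There is essentially no real obstacle: the only place any care is required is the sign argument in the first step, i.e.\ using continuity of the analytic $f$ and the intermediate value theorem to upgrade "$|\widetilde p_n| $ is close to $|f| \ge L$'' into "$\widetilde p_n$ does not cross zero''—without constant sign of $f$ one could only conclude $\widetilde p_n(x) \ge -2L$ rather than $\widetilde p_n(x)\ge 0$. I would also remark that the phrase "for all $n\ge 1$'' in the statement should be read as: the conclusion holds for any fixed $n$ for which~\eqref{condition} holds.
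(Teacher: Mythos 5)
Your proof is correct and follows essentially the same route as the paper: apply the uniform error bound of Corollary~\ref{converge}, use hypothesis~\eqref{condition} to bound the error by $L$, and conclude $\widetilde p_n(x) \ge f(x) - L \ge 0$ on $[a,b]$. Your extra care about the sign of $f$ (via continuity and the intermediate value theorem, with the non-positive case handled symmetrically) is a small but sensible refinement of the paper's argument, which tacitly takes $\min_{[a,b]} f \ge L$.
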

\begin{proof}
From Corollary \ref{converge}, we have
\begin{align*}
\min_{[a,b]}\widetilde p_n(x) &= \min_{[a,b]} f(x) + \left( \widetilde p_n(x) - f(x) \right) \\
&\ge \min_{[a,b]} f(x) - \max_{[a,b]} \left| \widetilde p_n(x) - f(x) \right| \\
&\ge L - \frac{4U}{\left( \rho - 1\right) \rho^n} \ge 0.
\end{align*}
where the last inequality uses Corollary \ref{converge}.
\end{proof}

\section{Approximating Spectral Sums}\label{sec:main}
%
\subsection{Algorithm Description}
Our algorithm brings together the components discussed in the previous section.
A pseudo-code description appears as {\bf Algorithm \ref{alg1}}.
As mentioned before, we assume that eigenvalues of $A$ are in the interval $[a,b]$ for some $b>a$.

\begin{algorithm}[tbh!]
   \caption{Trace of matrix function $f$ approximation}
\begin{algorithmic}\label{alg1}
   \STATE {\bfseries Input:} 
   symmetric matrix $A \in \mathbb{R}^{d \times d}$ with eigenvalues in $[a,b]$,  sampling number $m$ and polynomial degree $n$
   \STATE {\bfseries Initialize:} $\Gamma \leftarrow 0$
   \FOR{$j=0$ {\bfseries to} $n$}
   \STATE $\widetilde c_j  \leftarrow$ $j$-th coefficient of the Chebyshev interpolation of $f$ on $[a,b]$
   (see equation~\eqref{eq:ctildedef}).
   \ENDFOR
   \FOR {$i = 1$ { \bfseries to } $m$}
   \STATE Draw a random vector $\mathbf{v}^{(i)} \in \{ -1,+1\}^d$ whose entries are uniformly distributed\\

   \STATE $\mathbf{w}_0^{(i)} \leftarrow \mathbf{v}^{(i)}$ and $\mathbf{w}_1^{(i)} \leftarrow \frac{2}{b-a}A\mathbf{v}^{(i)}-\frac{b+a}{b-a} \mathbf{v}^{(i)}$
   \STATE $\mathbf{u} \leftarrow \widetilde c_0 \mathbf{w}_0^{(i)} + \widetilde c_1 \mathbf{w}_1^{(i)}$
   \FOR {$j = 2$ { \bfseries to } $n$}
   \STATE $\mathbf{w}_2^{(i)} \leftarrow \frac{4}{b-a} A \mathbf{w}_1^{(i)} - \frac{2(b+a)}{b-a} \mathbf{w}_1^{(i)} - \mathbf{w}_0^{(i)}$
   \STATE $\mathbf{u} \leftarrow \mathbf{u} + \widetilde c_j \ \mathbf{w}_2$
   \STATE $\mathbf{w}_0^{(i)} \leftarrow \mathbf{w}_1^{(i)}$ and $\mathbf{w}_1^{(i)} \leftarrow \mathbf{w}_2^{(i)}$
   \ENDFOR
   \STATE $\Gamma \leftarrow \Gamma + \mathbf{v}^{(i)\top} \mathbf{u}/ m $

   \ENDFOR
   \STATE {\bfseries Output:} $\Gamma$
\end{algorithmic}
\end{algorithm}
{In Section \ref{sec:application}, we provide five concrete applications of the above algorithm:
approximating the log-determinant, the trace of matrix inverse, the Estrada index, the Schatten $p$-norm and testing positive definiteness, which correspond to $\log x$, $1/x$, $\exp(x)$, $x^{p/2}$ and $1-\mbox{sign}(x)$ respectively.
}

\subsection{Analysis}
We establish the following theoretical guarantee on the proposed algorithm.

\begin{theorem} \label{main}
Suppose function $f$ satisfies the followings:
\begin{itemize}
\item $f$ is non-negative (or non-positive) on $[a,b]$.
\item {$f$ is analytic with $\left| f\left(\frac{b-a}{2}z+\frac{b+a}{2}\right) \right| \le U$  for some $U<\infty$ on the elliptic region $E_\rho$ in the complex plane with foci at $-1,+1$ and $\rho$ as the sum of semi-major and semi-minor lengths.
}
\item 
{$\min_{x\in[a,b]} \left| f(x) \right| \geq L$ for some $L>0$.}
\end{itemize}

Given $\varepsilon, \zeta \in \left(0,1\right)$, if 
\begin{align*}
m &\ge 54 \varepsilon^{-2} \log{\left( 2/\zeta \right)}, \\
n &\ge \log\left( \frac{8}{\varepsilon (\rho-1)}\frac{U}{L} \right)/\log \rho,
\end{align*}
then 
$$
\Pr \left(
\left| \Sigma_f(A) - \Gamma \right|
\le
\varepsilon \left| \Sigma_f(A)  \right|
\right)
\ge 1-\zeta.
$$
where $\Gamma$ is the output of {\bf Algorithm \ref{alg1}}.
\end{theorem}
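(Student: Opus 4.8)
The plan is to decompose the total error into two sources: the \emph{polynomial approximation error} (replacing $f$ by its Chebyshev interpolant $\widetilde p_n$) and the \emph{stochastic estimation error} (replacing ${\tt tr}(\widetilde p_n(A))$ by the Hutchinson estimate $\Gamma = {\tt tr}_m(\widetilde p_n(A))$). Writing $\Sigma \equiv \Sigma_f(A)$ and $\widetilde\Sigma \equiv {\tt tr}(\widetilde p_n(A)) = \sum_i \widetilde p_n(\lambda_i)$, the triangle inequality gives $|\Sigma - \Gamma| \le |\Sigma - \widetilde\Sigma| + |\widetilde\Sigma - \Gamma|$, and I would aim to bound each piece by roughly $\tfrac{\varepsilon}{2}|\Sigma|$ (or a comparable fraction, which is where the constants $54$ and $8$ come from).

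First I would handle the deterministic term. By Corollary~\ref{converge}, $\max_{x\in[a,b]}|f(x)-\widetilde p_n(x)| \le \tfrac{4U}{(\rho-1)\rho^n}$, so summing over the $d$ eigenvalues, $|\Sigma - \widetilde\Sigma| \le \tfrac{4Ud}{(\rho-1)\rho^n}$. On the other hand, since $|f(x)| \ge L$ on $[a,b]$ and $f$ has constant sign there, $|\Sigma| = \sum_i |f(\lambda_i)| \ge Ld$. Combining, $|\Sigma - \widetilde\Sigma| \le \tfrac{4U}{(\rho-1)\rho^n L}|\Sigma|$, and the stated lower bound on $n$, namely $n \ge \log\!\big(\tfrac{8U}{\varepsilon(\rho-1)L}\big)/\log\rho$, forces $\tfrac{4U}{(\rho-1)\rho^n L} \le \tfrac{\varepsilon}{2}$, giving $|\Sigma - \widetilde\Sigma| \le \tfrac{\varepsilon}{2}|\Sigma|$. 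I should also note that this same $n$ (for all $n\ge 1$, as the condition is monotone in $n$) makes Lemma~\ref{positive}'s hypothesis $\tfrac{4U}{(\rho-1)\rho^n}\le L$ hold, so $\widetilde p_n$ is non-negative (resp.\ non-positive) on $[a,b]$ and hence $\widetilde p_n(A) = B$ is positive (resp.\ negative) semidefinite — this is exactly what is needed to invoke Theorem~\ref{thm:trace}.

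Next, the stochastic term. Apply Theorem~\ref{thm:trace} to $B = \widetilde p_n(A)$ with accuracy parameter $\varepsilon' $ chosen so that $m \ge 6\varepsilon'^{-2}\log(2/\zeta)$; taking $\varepsilon' = \varepsilon/3$ matches $m \ge 54\varepsilon^{-2}\log(2/\zeta)$. This yields, with probability at least $1-\zeta$, $|\Gamma - \widetilde\Sigma| \le \tfrac{\varepsilon}{3}|\widetilde\Sigma|$. To convert the right-hand side from $|\widetilde\Sigma|$ to $|\Sigma|$ I would use the first step: $|\widetilde\Sigma| \le |\Sigma| + |\Sigma - \widetilde\Sigma| \le (1+\tfrac{\varepsilon}{2})|\Sigma| \le \tfrac{3}{2}|\Sigma|$, so $|\Gamma - \widetilde\Sigma| \le \tfrac{\varepsilon}{3}\cdot\tfrac{3}{2}|\Sigma| = \tfrac{\varepsilon}{2}|\Sigma|$. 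Adding the two bounds gives $|\Sigma - \Gamma| \le \varepsilon|\Sigma|$ on the event of probability $\ge 1-\zeta$, which is the claim. I would also verify that $\Gamma$ as computed by Algorithm~\ref{alg1} is literally ${\tt tr}_m(\widetilde p_n(A))$, using the Chebyshev recurrence~\eqref{eq:chebreq} applied to the shifted matrix to identify $\mathbf{w}_j^{(i)} = T_j(\tfrac{2}{b-a}A - \tfrac{b+a}{b-a}I_d)\mathbf{v}^{(i)}$ and hence $\mathbf{v}^{(i)\top}\mathbf{u} = \mathbf{v}^{(i)\top}\widetilde p_n(A)\mathbf{v}^{(i)}$.

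The main obstacle is really bookkeeping rather than a deep difficulty: one must be careful that the semidefiniteness hypothesis of Theorem~\ref{thm:trace} is genuinely satisfied (this is the role of Lemma~\ref{positive} and is why the $n$-bound must be read as holding for all $n \ge 1$), and one must track the constants so that $\varepsilon/2 + \varepsilon/2$ closes, including the factor-$3/2$ slack when passing from $|\widetilde\Sigma|$ to $|\Sigma|$ — it is tempting to just write $\varepsilon/2 + \varepsilon/2$ without noticing that the stochastic bound is naturally stated relative to $|\widetilde\Sigma|$, not $|\Sigma|$. A minor additional point worth stating explicitly is that $|\Sigma|>0$ (so that the multiplicative statement is meaningful), which follows from $|\Sigma|\ge Ld > 0$.
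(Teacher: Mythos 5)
Your proposal is correct and follows essentially the same route as the paper's own proof: the same triangle-inequality decomposition, the same use of Corollary~\ref{converge} and the sign/lower-bound condition to get the $\tfrac{\varepsilon}{2}|\Sigma_f(A)|$ deterministic bound, the same invocation of Lemma~\ref{positive} to justify applying Theorem~\ref{thm:trace} with accuracy $\varepsilon/3$, and the same $\tfrac{3}{2}$-factor conversion from $|{\tt tr}(\widetilde p_n(A))|$ to $|\Sigma_f(A)|$. No gaps; the constants and bookkeeping match the paper's argument.
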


The number of matrix-vector products performed by {\bf Algorithm \ref{alg1}} is $O(m n)$, thus the time-complexity  is $O(mn\|A\|_{\tt mv})$, where $\|A\|_{\tt mv}$ is that of the matrix-vector operation.
In particular, if $m,n=O(1)$, the complexity is linear with respect to $\|A\|_{\tt mv}$.
Therefore, Theorem \ref{main} implies that if $U,L=\Theta(1)$, then one can choose
$m,n=O(1)$ for $\varepsilon$-multiplicative approximation with probability of at least $1-\zeta$ given constants $\varepsilon,\zeta>0$.


\begin{proof}
The condition    
$$
n \ge \log\left( \frac{8}{\varepsilon (\rho-1)}\frac{U}{L} \right)/\log \rho
$$
implies that 
\begin{align}
\frac{4U}{\left( \rho - 1\right) \rho^n} \le \frac{\varepsilon}{2} L\,.\label{choice}
\end{align}
Recall that the trace of a matrix is equal to the sum of its eigenvalues and this also holds for a function of the matrix, i.e., $f(A)$. 
Under this observation, we establish a matrix version of Corollary \ref{converge}. 
Let $\lambda_1, \dots, \lambda_d \in [a,b]$ be the eigenvalues of $A$. We have
\begin{align}
\left| \Sigma_f(A) - {\tt tr}\left( \widetilde p_n(A) \right) \right| 
&= \left| \sum_{i=1}^d f(\lambda_i) - \widetilde p_n\left( \lambda_i \right)\right| \nonumber
\le \sum_{i=1}^d \left| f(\lambda_i) - \widetilde p_n\left( \lambda_i \right)\right| \nonumber \\
&\le \sum_{i=1}^d \frac{4U}{\left( \rho - 1\right) \rho^n} = \frac{4dU}{\left( \rho - 1\right) \rho^n} \label{ieq1} \\
&\le \frac{\varepsilon}{2} dL \label{ieq2} \le \frac{\varepsilon}{2} d \min_{[a,b]} \left| f(x) \right| \\
&\le \frac{\varepsilon}{2} \sum_{i=1}^d \left| f(\lambda_i) \right| = \frac{\varepsilon}{2} \left| \Sigma_f(A) \right| \label{bound1}
\end{align}
where the inequality~\eqref{ieq1} is due to Corollary \ref{converge}, inequality \eqref{ieq2} holds due to inequality \eqref{choice}, and the last equality is due to the fact that $f$ is either non-negative or
non-positive.

Moreover, the inequality of \eqref{bound1} shows
\begin{align*}
\left| {\tt tr}\left( \widetilde p_n(A) \right) \right| - \left| \Sigma_f(A) \right| \le \left| \Sigma_f(A) - {\tt tr}\left( \widetilde p_n(A) \right)\right| \le \frac{\varepsilon}{2} \left| \Sigma_f(A) \right|
\end{align*}
which implies for $\varepsilon \in \left(0,1\right)$ that 
\begin{align}
\left| {\tt tr}\left( \widetilde p_n(A) \right) \right| \le \left(\frac{\varepsilon}{2} + 1 \right) \left| \Sigma_f(A) \right| \le \frac{3}{2} \left| \Sigma_f(A) \right|\,. \label{p2f}
\end{align}

A polynomial degree $n$ that satisfies \eqref{choice} also satisfies \eqref{condition},
and from this it follows that $\widetilde p_n(A)$ is positive semi-definite matrix by Lemma \ref{positive}. 
Hence, we can apply Theorem \ref{thm:trace}:  for $m \ge 54 \varepsilon^{-2} \log \left( 2/ \zeta \right)$
we have,
\begin{align*}
\Pr \left( 
\left| {\tt tr} \left( \widetilde p_n(A) \right) - {\tt tr}_m \left( \widetilde p_n(A) \right)\right| 
\le 
\frac{\varepsilon}{3} 
\left| {\tt tr} \left( \widetilde p_n(A) \right) \right| 
\right) 
\ge 
1-\zeta\,.
\end{align*}
In addition, this probability with \eqref{p2f} provides 
\begin{align}
\Pr \left( 
\left| {\tt tr} \left( \widetilde p_n(A) \right) - {\tt tr}_m \left( \widetilde p_n(A) \right)\right| 
\le 
\frac{\varepsilon}{2} 
\left| \Sigma_f(A) \right| 
\right) 
\ge 
1-\zeta. \label{bound2}
\end{align}

Combining \eqref{bound1} with \eqref{bound2} we have
\begin{align*}
1-\zeta 
&\le
\Pr \left( 
\left| {\tt tr} \left( \widetilde p_n(A) \right) - {\tt tr}_m \left( \widetilde p_n(A) \right)\right| 
\le 
\frac{\varepsilon}{2} 
\left| \Sigma_f(A) \right| 
\right) \\
&\le 
\Pr \Big( 
\left| \Sigma_f(A) - {\tt tr}\left( \widetilde p_n(A) \right) \right| + \left| {\tt tr}\left( \widetilde p_n(A) \right) - {\tt tr}_m\left( \widetilde p_n(A) \right) \right|\\
&\qquad \qquad \qquad \qquad  \qquad \qquad \le \frac{\varepsilon}{2} \left| \Sigma_f(A) \right| + \frac{\varepsilon}{2} \left| {\tt tr} \left( f(A) \right) \right|
\Big) \\
&\le 
\Pr \left(
\left| \Sigma_f(A) - {\tt tr}_m\left( \widetilde p_n(A) \right) \right| 
\le \varepsilon \left| \Sigma_f(A) \right|
\right)
\end{align*}
We complete the proof by observing that {\bf Algorithm \ref{alg1}} computes $\Gamma = {\tt tr}_m \left( \widetilde p_n(A) \right)$.
\end{proof}

\section{Applications}\label{sec:application}

In this section, we discuss several applications of {\bf Algorithm \ref{alg1}}: approximating the 
log-determinant, trace of the matrix inverse, the Estrada index, the Schatten $p$-norm
and testing positive definiteness. Underlying these applications is executing {\bf Algorithm \ref{alg1}} with  the following functions: $f(x) = \log x$ (for log-determinant), $f(x) = 1/x$ (for matrix inverse),
$f(x) =\exp(x)$ (for the Estrada index),
$f(x) = x^{p/2}$ (for the Schatten $p$-norm) and 
{
$f(x) = \frac12 \left( 1 + \tanh\left(-\alpha x\right)\right)$, as
a smooth approximation of $1 - \mbox{sign}(x)$ (for testing positive definiteness).
}

\subsection{Log-determinant of Positive Definite Matrices}
Since $\Sigma_{\log}(A) = \log \det A$ our algorithm can naturally be used to approximate the log-determinant.
However, it is beneficial to observe that 
$$
\Sigma_{\log}(A) = \Sigma_{\log}(A/(a+b)) + d \log(a+b)
$$
and use {\bf Algorithm~\ref{alg1}} to approximate $\Sigma_{\log}(\overline{A})$ for $\overline A = A / (a + b)$.
The reason we consider $\overline A$ instead of $A$ as an input of {\bf Algorithm \ref{alg1}}
is because  all eigenvalues of $\overline A$ are strictly less than 1 and the constant $L>0$ in Theorem \ref{main} is guaranteed to exist for $\overline A$. The procedure is summarized in the {\bf Algorithm \ref{alg2}}. In the 
next subsection we generalize the algorithm for general non-singular matrices.


\begin{algorithm}[th!]
   \caption{Log-determinant approximation for positive definite matrices}
\begin{algorithmic}\label{alg2}
   \STATE {\bfseries Input:} positive definite matrix $A \in \mathbb{R}^{d \times d}$ with eigenvalues in $[a,b]$ for some $a,b>0$, sampling number $m$ and polynomial degree $n$\\
   \vspace{0.03in}
   \STATE {\bfseries Initialize:} $\overline{A} \leftarrow A/\left( a+b \right)$
   \vspace{0.03in}
   \STATE $\Gamma \leftarrow$ Output of {\bf Algorithm \ref{alg1}} with inputs $\overline{A}, \left[\frac{a}{a+b},\frac{b}{a+b}\right],m,n$ with $f(x)=\log x$
   \vspace{0.03in}
   \STATE $\Gamma \leftarrow \Gamma + d \log\left( a+b\right)$ 
   \vspace{0.03in}
   \STATE {\bfseries Output:} $\Gamma$ 
\end{algorithmic}
\end{algorithm}

We note that
{\bf Algorithm \ref{alg2}} requires us to know a positive lower bound $a>0$
for the eigenvalues, which is in general
harder to obtain than the upper bound $b$ (e.g. one can choose $b=\|A\|_{\infty}$). 
In some special cases, the smallest eigenvalue of positive definite matrices are known, e.g.,
random matrices \cite{tao2009inverse, tao2010random}
and diagonal-dominant matrices \cite{gershgorin1931uber, moravca2008bounds}.
Furthermore, it is sometimes explicitly given as a parameter in many machine learning log-determinant applications
\cite{wainwright2006log}, e.g., $A= a I_d + B$ for some positive semi-definite matrix $B$
and this includes 
the application involving Gaussian Markov Random Fields (GMRF) in Section \ref{sec:gmrf}.

We provide the following theoretical bound on the sampling number $m$ and the polynomial degree $n$ of {\bf Algorithm \ref{alg2}}.

\begin{theorem}\label{thm:main1}
Given $\varepsilon, \zeta\in (0,1)$,
consider the following inputs for {\bf Algorithm \ref{alg2}}:
\begin{itemize}
\item $A \in \mathbb{R}^{d \times d}$ be a positive definite matrix with eigenvalues in $[a,b]$ for $a,b>0$
\item $ m \ge 54 \varepsilon^{-2} \left( \log\left(1+\frac{b}{a}\right)\right)^2\log{\left(\frac{2}{\zeta}\right)} $
\item $ n \ge \frac{\log{\left( \frac{20}{\varepsilon} \left( \sqrt{\frac{2b}{a}+1} - 1 \right) \frac{ \log\left(1+(b/a)\right) \log \left( 2 + 2 (b/a) \right)}{\log{\left( 1 + (a/b)\right)}} \right)}}{\log{\left( \frac{\sqrt{2 (b/a)+1}+1}{\sqrt{2 (b/a) + 1}-1}\right)}} 
= O\left( \sqrt{\frac{b}{a}} \log \left( \frac{b}{\varepsilon a} \right) \right)$
\end{itemize}
Then, it follows that
{\begin{align*}
\Pr \left[ \ \left| \log \det A - \Gamma \right| \le  \varepsilon 
d
\right] \ge 1 - \zeta
\end{align*}
}
where $\Gamma$ is the output of {\bf Algorithm \ref{alg2}}.
\end{theorem}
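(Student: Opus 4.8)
The plan is to derive Theorem~\ref{thm:main1} as a corollary of the general guarantee in Theorem~\ref{main}, applied to the function $f(x)=\log x$ on the rescaled interval $[a/(a+b),\,b/(a+b)]$, and then to chase through the three quantities $U$, $L$, $\rho$ that appear there to obtain explicit bounds on $m$ and $n$. First I would record the reduction: since $\Sigma_{\log}(A)=\Sigma_{\log}(\overline A)+d\log(a+b)$ with $\overline A=A/(a+b)$, and the eigenvalues of $\overline A$ lie in $[\overline a,\overline b]:=[a/(a+b),\,b/(a+b)]\subset(0,1)$, an additive-$\delta$ approximation of $\Sigma_{\log}(\overline A)$ yields an additive-$\delta$ approximation of $\log\det A$. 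I will run Theorem~\ref{main} with $\varepsilon$ chosen so that the multiplicative error $\varepsilon|\Sigma_{\log}(\overline A)|$ is at most $\varepsilon' d$ for the target $\varepsilon'$; since $|\log x|\le \log(1+b/a)$ on $[\overline a,\overline b]$ (note $\log(\overline b/\overline a)=\log(b/a)$ but the relevant bound on $|\Sigma_{\log}(\overline A)|$ is $d\log((a+b)/a)=d\log(1+b/a)$ because all the rescaled eigenvalues are below $1$, so $\log$ is negative and $|\log x|\le \log(1/\overline a)=\log(1+b/a)$), replacing $\varepsilon$ by $\varepsilon/\log(1+b/a)$ does the job. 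This accounts for the $(\log(1+b/a))^2$ factor in the stated bound on $m$.

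Next I would compute the Bernstein-ellipse parameters. The function $f(x)=\log x$ is analytic except at the origin, so after the affine map $g(z)=\frac{\overline b-\overline a}{2}z+\frac{\overline b+\overline a}{2}$ sending $[-1,1]$ to $[\overline a,\overline b]$, the singularity at $x=0$ corresponds to $z_0=-\frac{\overline b+\overline a}{\overline b-\overline a}=-\frac{b+a}{b-a}$. The largest Bernstein ellipse avoiding this point has $\rho$ equal to the value $|z_0|+\sqrt{z_0^2-1}$ evaluated at $z_0$; a short computation gives $\rho=\frac{\sqrt{2b/a+1}+1}{\sqrt{2b/a+1}-1}$ — precisely the base of the logarithm in the denominator of the stated $n$-bound. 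For $U$, I need a bound on $|\log(g(z))|$ on (the boundary of) this ellipse; the real part of $g(z)$ ranges down to near $0$ and up to roughly $\overline b + (\text{semi-axis})$, and one checks $|\log(g(z))| \le \log(1/\overline a)\cdot\log(2/\overline a)$ or a similar product — this matches the $\log(1+b/a)\log(2+2b/a)$ factor. For $L$, the minimum of $|\log x|$ on $[\overline a,\overline b]$ is attained at $x=\overline b$, giving $L=|\log \overline b|=\log((a+b)/b)=\log(1+a/b)$, which appears in the numerator's denominator.

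With $U$, $L$, $\rho$ in hand, I substitute into the two inequalities of Theorem~\ref{main}: $m\ge 54(\varepsilon/\log(1+b/a))^{-2}\log(2/\zeta)$ gives the claimed $m$-bound directly, and $n\ge \log\!\big(\frac{8}{(\varepsilon/\log(1+b/a))(\rho-1)}\cdot\frac{U}{L}\big)/\log\rho$ gives the claimed $n$-bound after bounding $\rho-1 = \frac{2}{\sqrt{2b/a+1}-1}$ from below (the factor of $20$ absorbing $8$ times a small constant from this manipulation and the rounding of the $U$ estimate). Finally I would justify the asymptotic simplification $n=O(\sqrt{b/a}\,\log(b/(\varepsilon a)))$ by noting $\log\rho = \log\frac{\sqrt{2b/a+1}+1}{\sqrt{2b/a+1}-1}\asymp 1/\sqrt{b/a}$ for large $b/a$ (since $\log\frac{1+u}{1-u}\approx 2u$ with $u=1/\sqrt{2b/a+1}$), while the argument of the outer logarithm is polynomial in $b/a$ and $1/\varepsilon$.

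I expect the main obstacle to be the estimate of $U$: bounding $|\log z|$ over the Bernstein ellipse requires locating where $|\log(g(z))|$ is maximized on the ellipse boundary, which means controlling both how close $g(z)$ can get to $0$ (the ellipse is tangent to $z_0$, so $g(z)$ can be made small but not zero) and how large $|g(z)|$ and $|\arg g(z)|$ can be; getting a clean closed-form upper bound of the form $\log(1+b/a)\log(2+2b/a)$ rather than a messier expression is the delicate part, and I would handle it by parametrizing the ellipse and using $|\log w|\le |\log|w|| + \pi \le \max(\log(1/|w|_{\min}),\,\log|w|_{\max}) + \pi$ together with crude but sufficient bounds on $|w|_{\min}$ and $|w|_{\max}$ in terms of the semi-axes. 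Everything else is bookkeeping: the reduction is immediate, and the $\rho$ and $L$ computations are exact.
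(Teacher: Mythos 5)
Your high-level route is the paper's route (rescale by $a+b$, run Theorem~\ref{main} on $\overline A$ with $\varepsilon$ replaced by $\varepsilon/\log(1+b/a)$, use $|\log\det\overline A|\le d\log(1+b/a)$ to convert to an additive bound, and your $L=\log(1+a/b)$ and the $m$-bound are exactly right), but there is a genuine gap in the ellipse step. You take ``the largest Bernstein ellipse avoiding the singularity,'' i.e.\ the ellipse through $z_0=-\frac{b+a}{b-a}$, and claim its parameter is $\rho=\frac{\sqrt{2b/a+1}+1}{\sqrt{2b/a+1}-1}$. Two problems. First, the computation is wrong: $|z_0|+\sqrt{z_0^2-1}=\frac{a+b+2\sqrt{ab}}{b-a}=\frac{\sqrt{b/a}+1}{\sqrt{b/a}-1}$, which is strictly larger than the stated $\rho$. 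Second, and more fundamentally, that maximal ellipse cannot be used at all: its boundary contains $z_0$, where $g(z_0)=0$ and $\log\circ g$ blows up, so no finite $U$ exists and Theorem~\ref{main} does not apply (your remark that ``$g(z)$ can be made small but not zero'' is exactly what fails on the closed maximal ellipse). The missing idea is to retreat to a strictly smaller ellipse, and the amount of retreat is what produces the stated constants. The paper takes the ellipse with semi-major axis $1/(1-\delta)=(a+b)/b$, where $\delta=a/(a+b)$; its parameter is precisely $\frac{\sqrt{2b/a+1}+1}{\sqrt{2b/a+1}-1}$, its leftmost point is mapped by $g$ to $\frac{\delta}{2(1-\delta)}>0$, and then the bound $|\log w|\le\sqrt{(\log|w|)^2+\pi^2}$ plus the maximum-modulus principle gives $U\le 5\log(2/\delta)=5\log(2+2b/a)$ --- a single logarithm times $5$, not a product of two logarithms.

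A secondary bookkeeping inconsistency follows from this: you attribute the whole factor $\log(1+b/a)\log(2+2b/a)$ to $U$, yet you also substitute the rescaled $\varepsilon/\log(1+b/a)$ into the $n$-condition of Theorem~\ref{main}; if $U$ really were that product, the argument of the outer logarithm would pick up an extra $\log(1+b/a)$ and you would not recover the stated bound. In the correct accounting the numerator splits as $\log(1+b/a)$ from the $\varepsilon$-rescaling, $\log(2+2b/a)$ from $U$ (with its constant $5$), $1/\log(1+a/b)$ from $L$, and $(\sqrt{2b/a+1}-1)/2$ from $\rho-1$, which is exactly where the constant $20=8\cdot 5/2$ comes from. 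Your asymptotic simplification $\log\rho\asymp\sqrt{a/b}$ is fine once the correct $\rho$ is in place.
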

\begin{proof}
The proof of Theorem \ref{thm:main1} is straightforward using Theorem \ref{main} with choice of
upper bound $U$, lower bound $L$ and constant $\rho$ for the function $\log x$.
Denote $\delta = \frac{a}{a+b}$ and eigenvalues of $\overline{A}$ lie in the interval $[\delta,1-\delta]$. 
We choose the ellipse region, denoted by $E_\rho$, in the complex plane with foci at $+1, -1$ 
and its semi-major axis length is $1/(1-\delta)$.
Then, 
$$\rho = \frac1{1-\delta} + \sqrt{\left( \frac1{1-\delta}\right)^2-1}=\frac{\sqrt{2-\delta}+\sqrt{\delta}}{\sqrt{2-\delta}-\sqrt{\delta}}>1$$
and $\log\left( \frac{\left( 1-2\delta\right)x+1}{2}\right)$ is analytic on and inside $E_\rho$ in the complex plane. 


The upper bound $U$ can be obtained as follows:
\begin{align*}
\max_{z \in E_\rho} \left| \log\left( \frac{\left( 1-2\delta\right)z+1}{2}\right) \right|
&\le \max_{z \in E_\rho} \sqrt{\left( \log \left| \frac{\left( 1-2\delta\right)z+1}{2} \right| \right)^2 + \pi ^2} \\
&= \sqrt{ \left( \log \left| \frac{\delta}{2\left( 1-\delta\right)} \right| \right)^2 + \pi ^2} \le 5 \log \left( \frac2{\delta} \right) := U.
\end{align*}
where the inequality in the first line holds because 
$\left| \log z \right| = \left| \log \left| z \right| + i \arg \left( z \right) \right| \le \sqrt{\left( \log \left| z\right| \right)^2 + \pi ^2}$ for any $z \in \mathbb{C}$
and equality in the second line holds by the maximum-modulus theorem.
We also have the lower bound on $\log x$ in $[\delta, 1-\delta]$ as follows:
\begin{align*}
\min_{[\delta,1-\delta]} \left| \log x\right| = \log\left(\frac1{1-\delta}\right) := L
\end{align*}
With these constants, a simple calculation reveals that Theorem~\ref{main} implies that {\bf Algorithm \ref{alg1}} approximates $\left| \log \det \overline{A} \right|$ with $\varepsilon/\log(1/\delta)$-mulitipicative approximation. 

The additive error bound now follows by using the  fact that
$
\left| \log \det \overline{A} \right| \le d \log \left( 1/\delta \right)\,.
$
\end{proof}

The bound on polynomial degree $n$ in the above theorem is relatively tight, e.g., 
$n = 27$ for $\delta=0.1$ and $\varepsilon=0.01$. 
Our bound for $m$ can yield very large numbers for the range of $\varepsilon$ and $\zeta$ we are interested
in. However, numerical experiments revealed that for the matrices we were interested in, the bound is not
tight and $m \approx 50$ was sufficient for the accuracy levels we required in the experiments.

\subsection{Log-determinant of Non-Singular Matrices} \label{sec:nonsingular}
One can apply the algorithm in the previous section to approximate
the log-determinant of a {non-symmetric} non-singular
matrix $C \in \mathbb{R}^{d \times d}$.
The idea is simple: run {\bf Algorithm \ref{alg2}} on the positive definite matrix $C^{\top}C$.
The underlying observation is that
\begin{equation}
    \log|\det C| = \frac12 \log \det C^{\top} C\,.\label{eq:logdetC}
\end{equation}

Without loss of generality, we assume that singular values of $C$ 
are in the interval $[\sigma_{\min}, \sigma_{\max}]$ for some $\sigma_{\min},\sigma_{\max}>0$, i.e.,
the condition number $\kappa(C)$ is at most $\kappa_{\max}: = \sigma_{\max}/\sigma_{\min}$.
The proposed algorithm is not sensitive to tight knowledge of $\sigma_{\min}$ or $\sigma_{\max}$, but
some loose lower and upper bounds on them, respectively, suffice.
A pseudo-code description appears as {\bf Algorithm \ref{alg3}}.

\begin{algorithm}[tbh!]
\caption{Log-determinant approximation for non-singular matrices}
\begin{algorithmic}\label{alg3}
\STATE {\bfseries Input:} non-singular matrix $C \in \mathbb{R}^{d \times d}$ with singular values 
are in the interval $[\sigma_{\min}, \sigma_{\max}]$ for some $\sigma_{\min},\sigma_{\max}>0$,
sampling number $m$ and polynomial degree $n$
\vspace{0.03in}
\vspace{0.03in}
\STATE $\Gamma \leftarrow$ Output of {\bf Algorithm \ref{alg2}} for inputs $C^\top C, \left[ \sigma_{\min}^2,\sigma_{\max}^2\right], m, n$
\vspace{0.03in}
\STATE $\Gamma \leftarrow \Gamma / 2$
\vspace{0.05in}
\STATE {\bfseries Output:} $\Gamma$
\end{algorithmic}
\end{algorithm}


The time-complexity of  {\bf Algorithm \ref{alg3}} is  $O(m n \|C\|_{\tt mv})=
O(m n \|C^{\top}C\|_{\tt mv})$ as well since {\bf Algorithm \ref{alg2}} requires the computation of a products of matrix $C^{\top}C$ and a vector, and that can be accomplished by first multiplying by $C$ and then by $C^\top$.  
We state the following additive error bound of the above algorithm.

\begin{corollary}\label{thm:main2}
Given $\varepsilon, \zeta\in (0,1)$,
consider the following inputs for {\bf Algorithm \ref{alg3}}:
\begin{itemize}
\item $C \in \mathbb{R}^{d \times d}$ be a matrix with singular values in $[\sigma_{\min}, \sigma_{\max}]$ for some $\sigma_{\min},\sigma_{\max}>0$
\item $m \ge \mathcal M \left(\varepsilon, \frac{\sigma_{\max}}{\sigma_{\min}}, \zeta \right)$ and $n \ge \mathcal N \left( \varepsilon, \frac{\sigma_{\max}}{\sigma_{\min}} \right)$, where
\end{itemize}
\vspace{-0.15in}
\begin{align*}
&\mathcal M(\varepsilon, \kappa, \zeta):=\frac{14}{\varepsilon^{2}} \left( \log \left( 1 + \kappa^2\right) \right)^2 \log{ \left( \frac2{\zeta}\right) }\\
&\mathcal N \left( \varepsilon, \kappa \right) := \frac{\log{\left( \frac{10}{\varepsilon} \left( \sqrt{2 \kappa^2 + 1}-1 \right) \frac{\log{ ( 2 + 2\kappa^2 )}}{\log(1+\kappa^{-2})} \right)}}{\log{\left( \frac{{\sqrt{2 \kappa^2 + 1}}+1}{{\sqrt{2 \kappa^2 + 1}}-1} \right)}} 
= O \left( {\kappa} \log{ \frac{\kappa}{\varepsilon} } \right)
\end{align*}
Then, it follows that
\begin{align*}
\Pr \left[ \ \left| \log{\left( \left| \det C \right| \right)}- \Gamma \right| \le  \varepsilon d \ \right] \ge 1 - \zeta
\end{align*}
where $\Gamma$ is the output of {\bf Algorithm \ref{alg3}}.
\end{corollary}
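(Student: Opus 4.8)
The plan is to reduce Corollary~\ref{thm:main2} to Theorem~\ref{thm:main1} applied to the positive definite matrix $B = C^\top C$. The key identity is equation~\eqref{eq:logdetC}, namely $\log|\det C| = \tfrac12 \log \det B$. Since $C$ has singular values in $[\sigma_{\min},\sigma_{\max}]$, the matrix $B$ is positive definite with eigenvalues in $[\sigma_{\min}^2, \sigma_{\max}^2]$. Thus I would invoke Theorem~\ref{thm:main1} with the interval $[a,b] = [\sigma_{\min}^2, \sigma_{\max}^2]$, so that $b/a = \kappa^2$ where $\kappa = \sigma_{\max}/\sigma_{\min}$.

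The first step is to check that the hypotheses on $m$ and $n$ in Corollary~\ref{thm:main2} match (or imply) those of Theorem~\ref{thm:main1} with $b/a$ replaced by $\kappa^2$. Substituting $b/a = \kappa^2$ into the bound of Theorem~\ref{thm:main1}, the sampling bound becomes $m \ge 54\varepsilon^{-2}(\log(1+\kappa^2))^2\log(2/\zeta)$; the definition $\mathcal M(\varepsilon,\kappa,\zeta) = 14\varepsilon^{-2}(\log(1+\kappa^2))^2\log(2/\zeta)$ must be reconciled with this — I expect this is handled because in Corollary~\ref{thm:main2} we only need an \emph{additive} $\varepsilon d$ bound on $\log|\det C| = \tfrac12\log\det B$, i.e.\ a $2\varepsilon d$ additive bound on $\log\det B$, which relaxes the required accuracy by a factor of $2$ and hence the constant in $m$ by a factor of $4$ (yielding $54/4 \approx 13.5 \le 14$). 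Similarly the polynomial-degree function $\mathcal N(\varepsilon,\kappa)$ with $b/a = \kappa^2$ reproduces the $n$-bound of Theorem~\ref{thm:main1} up to the same factor-of-$2$ slack in $\varepsilon$, which changes the constant $20$ there to the constant $10$ appearing in $\mathcal N$. So the second step is to verify these constant bookkeeping details carefully, tracking how the relaxation from $\varepsilon d$ on $\log|\det C|$ to $2\varepsilon d$ on $\log\det B$ propagates through the $U$, $L$, $\rho$ expressions in the proof of Theorem~\ref{thm:main1}.

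The third step is the actual error-propagation argument: Theorem~\ref{thm:main1} applied to $B$ with accuracy parameter $2\varepsilon$ gives, with probability at least $1-\zeta$, that the output $\Gamma_B$ of Algorithm~\ref{alg2} satisfies $|\log\det B - \Gamma_B| \le 2\varepsilon d$. Since Algorithm~\ref{alg3} returns $\Gamma = \Gamma_B / 2$, dividing by $2$ yields $|\log|\det C| - \Gamma| = \tfrac12|\log\det B - \Gamma_B| \le \varepsilon d$, which is exactly the claimed bound. I would also confirm that the time-complexity remark is consistent — a matrix-vector product with $C^\top C$ costs two products (one with $C$, one with $C^\top$), so it is still $O(mn\|C\|_{\tt mv})$.

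The main obstacle is not conceptual but lies in the constant-tracking: one must make sure that the somewhat opaque closed-form expressions for $\mathcal M$ and $\mathcal N$ really do dominate the quantities that Theorem~\ref{thm:main1} demands after the substitution $b/a = \kappa^2$ and the factor-of-$2$ accuracy relaxation, including the asymptotic simplification $\mathcal N(\varepsilon,\kappa) = O(\kappa\log(\kappa/\varepsilon))$, which uses $\log\!\big(\tfrac{\sqrt{2\kappa^2+1}+1}{\sqrt{2\kappa^2+1}-1}\big) = \Theta(1/\kappa)$ for large $\kappa$. Everything else is a direct substitution and a one-line division.
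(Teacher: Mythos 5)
Your proposal is correct and follows exactly the paper's route: the paper's own proof is a one-line reduction via $\log|\det C| = \tfrac12\log\det(C^{\top}C)$ together with the observation that the eigenvalues of $C^{\top}C$ lie in $[\sigma_{\min}^2,\sigma_{\max}^2]$, i.e.\ an application of Theorem~\ref{thm:main1} with $b/a=\kappa^2$ followed by the division by two in {\bf Algorithm \ref{alg3}}. Your factor-of-two bookkeeping (invoking Theorem~\ref{thm:main1} at accuracy $2\varepsilon$, which turns $54/4\le 14$ and $20\to 10$) is in fact more careful than the paper, which passes over these constants silently; the only residual mismatch you would encounter is that the stated $\mathcal N$ omits the $\log(1+\kappa^2)$ factor appearing in Theorem~\ref{thm:main1}'s degree bound, an inconsistency in the paper's statement rather than in your argument.
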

\begin{proof}
Follows immediately from equation~\eqref{eq:logdetC} and Theorem~\ref{thm:main1}, and observing that
all the eigenvalues of $C^{\top} C$ are inside $[\sigma^2_{\min}, \sigma^2_{\max}]$.
\end{proof}

We remark that the condition number $\sigma_{\max}/\sigma_{\min}$ decides the complexity of {\bf Algorithm \ref{alg3}}. 
As one can expect, the approximation quality and algorithm complexity become worse as the condition
number increases, as polynomial approximation for $\log $ near
the point $0$ is challenging and requires higher polynomial degrees.

\subsection{Trace of Matrix Inverse} \label{sec:inverse}
In this section, we describe how to estimate the trace of matrix inverse.  Since this task amounts
to computing $\Sigma_f(A)$ for $f(x)=1/x$, we propose {\bf Algorithm~\ref{alg:inverse}}
which uses 
{\bf Algorithm \ref{alg1}}
as a subroutine.

\begin{algorithm}[th!]
   \caption{Trace of matrix inverse}
\begin{algorithmic}\label{alg:inverse}
   \STATE {\bfseries Input:} positive definite matrix $A \in \mathbb{R}^{d \times d}$ with eigenvalues in $[a,b]$ for some $a,b>0$, sampling number $m$ and polynomial degree $n$
   \vspace{0.03in}
   \STATE $\Gamma \leftarrow$ Output of {\bf Algorithm \ref{alg1}} for inputs $A,[a,b], m,n$ with $f(x)=\frac1{x}$.
   \vspace{0.03in}
   \STATE {\bfseries Output:} $\Gamma$ 
\end{algorithmic}
\end{algorithm}

We provide the following theoretical bounds on sampling number $m$ and polynomial degree $n$ of {\bf Algorithm \ref{alg:inverse}}.
\begin{theorem}\label{thm:inverse}
Given $\varepsilon, \zeta\in (0,1)$,
consider the following inputs for {\bf Algorithm \ref{alg:inverse}}:
\begin{itemize}
\item $A \in \R^{d \times d}$ be a positive definite matrix with eigenvalues in $[a,b]$ 
\item $ m \ge 54 \varepsilon^{-2} \log{\left(\frac{2}{\zeta}\right)} $
\item $ n \ge \log\left( \frac8{\varepsilon} \left( \sqrt{2\left( \frac{b}{a} \right)-1}-1 \right) \frac{b}{a} \right) 
/ \log \left( \frac2{\sqrt{2\left( \frac{b}{a} \right)-1}-1} + 1\right)
= O\left( \sqrt{\frac{b}{a}} \log \left( \frac{b}{\varepsilon a} \right)\right)
$
\end{itemize}
Then, it follows that
{\begin{align*}
\Pr \left[ \ \left| {\tt tr}\left( A^{-1} \right) - \Gamma \right| \le  \varepsilon 
\left| {\tt tr}\left( A^{-1}\right) \right|
\right] \ge 1 - \zeta
\end{align*}
}
where $\Gamma$ is the output of {\bf Algorithm \ref{alg:inverse}}.
\end{theorem}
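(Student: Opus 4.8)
The plan is to apply Theorem~\ref{main} to the function $f(x)=1/x$, for which $\Sigma_f(A)={\tt tr}(A^{-1})$, and to verify its three hypotheses with explicit constants $L$, $U$ and $\rho$. Two of the hypotheses are immediate: since $a>0$, $f$ is positive on $[a,b]$; and $\min_{x\in[a,b]}|1/x|=1/b$, so we take $L=1/b$. The real content is choosing a Bernstein ellipse on which $f$, composed with the affine map $g(z)=\frac{b-a}{2}z+\frac{b+a}{2}$ carrying $[-1,1]$ onto $[a,b]$, is analytic and bounded, and then computing the resulting $U$.

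Here $(f\circ g)(z)=\frac{2}{(b-a)z+(b+a)}$, whose only singularity is the simple pole at $z_0=-\frac{b+a}{b-a}$, a real point with $|z_0|=\frac{b+a}{b-a}>1$. I would take $E_\rho$ to be the ellipse with foci $\pm1$ and semi-major axis $\frac{b}{b-a}$; using that a Bernstein ellipse has semi-axes $\alpha,\beta$ with $\alpha^2-\beta^2=1$ and $\rho=\alpha+\beta$, a short computation gives $\rho=\frac{\sqrt{2(b/a)-1}+1}{\sqrt{2(b/a)-1}-1}$, equivalently $\rho-1=\frac{2}{\sqrt{2(b/a)-1}-1}$. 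Since $\frac{b}{b-a}<\frac{b+a}{b-a}=|z_0|$, the pole lies strictly outside $\overline{E_\rho}$, so $f\circ g$ is analytic on and inside $E_\rho$. To bound it, note $|f\circ g|$ is largest where $|(b-a)z+(b+a)|=(b-a)|z-z_0|$ is smallest; by the maximum-modulus principle (applied just as in the proof of Theorem~\ref{thm:main1}) that minimum is attained on $\partial E_\rho$, and since $z_0$ is real and lies to the left of the ellipse, the nearest boundary point is the left end of the major axis, $-\frac{b}{b-a}$, where $|(b-a)z+(b+a)|=a$. Hence $\max_{z\in E_\rho}|f(g(z))|=2/a=:U$.

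With $L=1/b$, $U=2/a$ and $\rho$ as above, Theorem~\ref{main} applies directly. Its requirement $m\ge 54\varepsilon^{-2}\log(2/\zeta)$ is the stated bound on $m$ verbatim, and its requirement $n\ge\log\!\big(\tfrac{8}{\varepsilon(\rho-1)}\tfrac{U}{L}\big)/\log\rho$ becomes, after substituting $\rho-1=\tfrac{2}{\sqrt{2(b/a)-1}-1}$ and $\tfrac{U}{L}=\tfrac{2b}{a}$, exactly the stated bound on $n$. Theorem~\ref{main} then yields $\Pr\big(|{\tt tr}(A^{-1})-\Gamma|\le\varepsilon|{\tt tr}(A^{-1})|\big)\ge1-\zeta$, where $\Gamma$ is the output of Algorithm~\ref{alg:inverse}, as claimed; the asymptotic $n=O(\sqrt{b/a}\,\log(b/(\varepsilon a)))$ follows since for $\kappa=b/a\to\infty$ one has $\rho=1+\Theta(\kappa^{-1/2})$, hence $\log\rho=\Theta(\kappa^{-1/2})$, while the logarithm in the numerator is $O(\log(\kappa/\varepsilon))$. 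The only genuinely nonroutine step is the second paragraph—choosing the semi-major axis $\tfrac{b}{b-a}$ so that it simultaneously keeps the pole of $f\circ g$ outside the ellipse and makes $U$ collapse to the clean value $2/a$; everything else is arithmetic on top of Theorem~\ref{main}, so I do not anticipate a serious obstacle.
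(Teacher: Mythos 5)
Your proposal is correct and follows essentially the same route as the paper: the paper's proof also applies Theorem~\ref{main} to $\widetilde f(x)=1/(\tfrac{b-a}{2}x+\tfrac{b+a}{2})$ with the ellipse through $\tfrac{b}{b-a}$, giving the same $\rho$, $U=2/a$ and $L=1/b$. Your version merely spells out details the paper leaves implicit (that the pole lies outside $E_\rho$ and that the maximum of $|\widetilde f|$ is attained at the left vertex), so there is nothing to add.
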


\begin{proof}
In order to apply Theorem \ref{main},
we define inverse function with linear transformation $\widetilde f$ as
$$\widetilde f \left(x\right) = \frac1{ \frac{b-a}{2}x + \frac{b+a}{2}} \mbox{\quad for \ } x \in [-1,1].$$
Avoiding singularities of $\widetilde f$,
it is analytic on and inside elliptic region in the complex plane passing through $\frac{b}{b-a}$
whose foci are $+1$ and $-1$.
The sum of length of semi-major and semi-minor axes is equal to
$$
\rho = \frac{b}{b-a} + \sqrt{\frac{b^2}{\left(b-a\right)^2} - 1} = 
\frac{2}{\sqrt{2\left( \frac{b}{a} \right) - 1} - 1} + 1.
$$

For the maximum absolute value on this region,
$\widetilde f$ has maximum value $U = 2/a$ at $- \frac{b}{b-a}$.
The lower bound is $L = 1/b$.
Putting those together, Theorem \ref{main}, 
implies the bounds stated in the theorem statement.
\end{proof}

\subsection{Estrada Index} \label{sec:estrada}

Given a (undirected) graph $G=(V,E)$, the Estrada index ${\rm EE}\left( G \right)$ is defined as
$${\rm EE}\left( G \right) := \Sigma_{\exp}(A_G) = \sum_{i=1}^d \exp(\lambda_i),$$
where $A_G$ is the adjacency matrix of $G$ and $\lambda_1, \dots, \lambda_{|V|}$ are the eigenvalues
of $A_G$. It is a well known result in spectral graph theory that the eigenvalues of $A_G$ are contained
in $[-\Delta_G, \Delta_G]$ where $\Delta_G$ is maximum degree of a vertex in $G$. Thus, the Estrada index $G$ can be computed using {\bf Algorithm \ref{alg1}} with the choice of $f(x)=\exp(x)$, $a=-\Delta_G$, and $b=\Delta_G$.
However, we state our algorithm and theoretical bounds in terms of a general interval $[a,b]$ that bounds
the eigenvalues of $A_G$, to allow for an a-priori tighter bounds on the eigenvalues (note, however, that it is
well known that always $\lambda_{\max} \geq \sqrt{\Delta_G}$).

\begin{algorithm}[th!]
   \caption{Estrada index approximation}
\begin{algorithmic}\label{alg:estrada}
   \STATE {\bfseries Input:} adjacency matrix $A_G \in \mathbb{R}^{d \times d}$ with eigenvalues in $[a,b]$, sampling number $m$ and polynomial degree $n$\\
   \COMMENT{If $\Delta_G$ is the maximum degree of $G$, then $a=-\Delta_G, b=\Delta_G$ can be used as default.} 
   \vspace{0.03in}
   \STATE $\Gamma \leftarrow$ Output of {\bf Algorithm \ref{alg1}} for inputs $A,[a,b], m,n$ with $f(x)=\exp(x)$.
   \vspace{0.03in}
   \STATE {\bfseries Output:} $\Gamma$ 
\end{algorithmic}
\end{algorithm}

We provide the following theoretical bounds on sampling number $m$ and polynomial degree $n$ of {\bf Algorithm \ref{alg:estrada}}.

\begin{theorem}\label{thm:estrada}
Given $\varepsilon, \zeta\in (0,1)$,
consider the following inputs for {\bf Algorithm \ref{alg:estrada}}:
\begin{itemize}
\item $A_G \in \R^{d \times d}$ be an adjacency matrix of a graph with eigenvalues in $[a,b]$.
\item $ m \ge 54 \varepsilon^{-2} \log{\left(\frac{2}{\zeta}\right)} $
\item $ n \ge \log \left( \frac{2}{\pi \varepsilon} (b-a) \exp\left( \frac{\sqrt{16\pi^2 + (b-a)^2} + (b-a)}{2} \right) \right)
/ \log \left( \frac{4\pi}{b-a}  + 1 \right)
= O\left( 
\frac{
b-a  +\log  \frac1{\varepsilon} 
}{
\log \left( \frac1{b-a}\right)
}
\right)
$
\end{itemize}
Then, it follows that
{\begin{align*}
\Pr \left[ \ \left| {\rm EE}\left( G\right) - \Gamma \right| \le  \varepsilon 
\left| {\rm EE}\left( G\right) \right|
\right] \ge 1 - \zeta
\end{align*}
}
where $\Gamma$ is the output of {\bf Algorithm \ref{alg:estrada}}.
\end{theorem}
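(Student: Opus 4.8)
The plan is to invoke Theorem~\ref{main} with the function $f(x) = \exp(x)$ on the interval $[a,b]$, so the whole task reduces to producing valid choices of the three parameters $U$, $L$ and $\rho$ and then checking that the stated bounds on $m$ and $n$ follow from the generic bounds $m \ge 54\varepsilon^{-2}\log(2/\zeta)$ and $n \ge \log\!\bigl(\tfrac{8}{\varepsilon(\rho-1)}\tfrac{U}{L}\bigr)/\log\rho$. First I would note that $\exp$ is entire and positive everywhere, so the non-negativity hypothesis and the lower bound hypothesis of Theorem~\ref{main} are immediate: on $[a,b]$ we have $\min_{x\in[a,b]}|e^x| = e^a =: L$. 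The sampling-number bound $m \ge 54\varepsilon^{-2}\log(2/\zeta)$ is then exactly the one in Theorem~\ref{main}, so nothing more is needed there.

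The substantive step is the choice of the Bernstein ellipse and the bound $U$ on $|f|$ over it. As in the proof of Theorem~\ref{thm:main1}, after the affine change of variables $z \mapsto \tfrac{b-a}{2}z + \tfrac{b+a}{2}$ the relevant function is $g(z) = \exp\!\bigl(\tfrac{b-a}{2}z + \tfrac{b+a}{2}\bigr)$, which is entire, so any $\rho > 1$ is admissible; one picks $\rho$ to balance the two contributions to $U/L$ and to keep the final formula clean. For an ellipse $E_\rho$ with foci $\pm 1$ and semi-axis sum $\rho$, the semi-major axis is $\tfrac12(\rho + \rho^{-1})$, so $\mathrm{Re}(z) \le \tfrac12(\rho + \rho^{-1})$ on $E_\rho$; hence $|g(z)| \le \exp\!\bigl(\tfrac{b-a}{2}\cdot\tfrac12(\rho+\rho^{-1}) + \tfrac{b+a}{2}\bigr)$, and $U$ is taken to be this quantity (or a convenient upper bound on it). With $\rho - 1 = \tfrac{4\pi}{b-a}$, i.e. $\rho = \tfrac{4\pi}{b-a} + 1$ — which is exactly the base appearing in the denominator $\log\rho$ of the theorem's bound on $n$ — one then simplifies $\tfrac12(\rho+\rho^{-1}) \le \tfrac12\rho + \tfrac12 \le \tfrac{2\pi}{b-a} + 1$ and bounds $\tfrac{b-a}{2}\cdot\bigl(\tfrac{2\pi}{b-a}+1\bigr) + \tfrac{b+a}{2} = \pi + \tfrac{b-a}{2} + \tfrac{b+a}{2} = \pi + b$; combined with $L = e^a$ this yields $U/L \le \exp(\pi + b - a)$ up to the choice of constants, which after plugging into $\log\bigl(\tfrac{8}{\varepsilon(\rho-1)}\tfrac{U}{L}\bigr)/\log\rho$ and absorbing the algebra produces the displayed bound on $n$ and its asymptotic form $O\!\bigl(\tfrac{b-a+\log(1/\varepsilon)}{\log(1/(b-a))}\bigr)$.

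The main obstacle is purely bookkeeping: matching the somewhat arbitrary-looking closed form $n \ge \log\bigl(\tfrac{2}{\pi\varepsilon}(b-a)\exp(\cdots)\bigr)/\log(\tfrac{4\pi}{b-a}+1)$ — in particular the exponent $\tfrac{\sqrt{16\pi^2 + (b-a)^2} + (b-a)}{2}$ — to the generic bound. That exponent strongly suggests the authors do not crudely bound $\tfrac12(\rho+\rho^{-1})$ but instead keep the exact value: with $\rho = \tfrac{4\pi}{b-a}+1$ one has $\tfrac{b-a}{2}\cdot\tfrac12(\rho+\rho^{-1})$, and since $\sqrt{16\pi^2+(b-a)^2} = (b-a)\sqrt{(\tfrac{4\pi}{b-a})^2+1}$ relates to $\tfrac12(\rho-1+\ldots)$, one should write the semi-major axis in terms of the original ellipse passing through a specific real point and track it exactly. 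So the careful version would choose the ellipse by its real-axis intercept $r = 1 + \tfrac{4\pi}{b-a}\cdot\tfrac{b-a}{?}$, read off $\rho$ and the max of $\mathrm{Re}(z)$ exactly, and carry the radical through; the inequality $\log\rho \ge$ (something) may or may not be needed for the asymptotic simplification. Once $U$, $L$, $\rho$ are fixed this way, the rest is substitution into Theorem~\ref{main}, and the probability guarantee $\Pr[\,|{\rm EE}(G)-\Gamma|\le\varepsilon|{\rm EE}(G)|\,]\ge 1-\zeta$ follows verbatim.
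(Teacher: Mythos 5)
Your overall strategy is exactly the paper's: invoke Theorem~\ref{main} with $f(x)=\exp(x)$, take $L=e^{a}$, keep the generic $m\ge 54\varepsilon^{-2}\log(2/\zeta)$, choose an ellipse with $\rho=\frac{4\pi}{b-a}+1$, and bound $U$ by the value of $\exp$ at the rightmost (real) vertex via the maximum-modulus principle. Those ingredients are all correct.

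The gap is in the quantitative step you actually carry out. After bounding the semi-major axis by $\frac{\rho+1}{2}=\frac{2\pi}{b-a}+1$ you get $U/L\le e^{\pi+b-a}$, and you assert this "produces the displayed bound on $n$." It does not, in general: the stated bound has $U/L=\exp\bigl(\frac{\sqrt{16\pi^2+(b-a)^2}+(b-a)}{2}\bigr)$, and $\pi+(b-a)\le\frac{\sqrt{16\pi^2+(b-a)^2}+(b-a)}{2}$ holds iff $2\pi+(b-a)\le\sqrt{16\pi^2+(b-a)^2}$, i.e.\ iff $b-a\le 3\pi$. For $b-a>3\pi$ (the typical case for an adjacency matrix, where the default interval is $[-\Delta_G,\Delta_G]$, so $b-a=2\Delta_G$) your simplified sufficient threshold on $n$ is strictly larger than the theorem's, so your argument leaves the theorem unproven exactly for the $n$ in between. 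You do correctly diagnose that the radical must come from keeping the ellipse geometry exact rather than using the crude bound, but you leave that version unexecuted (your candidate real-axis intercept even contains a "?"). The paper's actual choice: take the ellipse with foci $\pm1$ passing through $\frac{4\pi i}{b-a}$, whose semi-minor axis is $\frac{4\pi}{b-a}$ and semi-major axis is $\frac{\sqrt{16\pi^2+(b-a)^2}}{b-a}$; then $U=\exp\bigl(\frac{\sqrt{16\pi^2+(b-a)^2}+(b+a)}{2}\bigr)$, $L=e^{a}$, and one may use $\rho=\frac{4\pi}{b-a}+1$ (a lower bound on the semi-axis sum), which plugged into $n\ge\log\bigl(\frac{8U}{\varepsilon(\rho-1)L}\bigr)/\log\rho$ gives precisely the displayed bound since $\frac{8}{\varepsilon(\rho-1)}=\frac{2(b-a)}{\pi\varepsilon}$. (Alternatively, keeping the exact semi-major axis $\frac12(\rho+\rho^{-1})$ of the $\rho=\frac{4\pi}{b-a}+1$ ellipse also works, since it is dominated by the paper's $U$, so the stated $n$ is still sufficient; but the inequality must be checked in that direction rather than waved through.)
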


\begin{proof}
We consider exponential function with linear transformation as
$$\widetilde f \left(x\right) = \exp \left( \frac{b-a}{2}x + \frac{b+a}{2}\right) \mbox{\quad for \ } x \in [-1,1].$$
The function $\widetilde f$ is analytic on and inside elliptic region in the complex plane 
which has foci $\pm 1$ and passes through $\frac{4\pi i}{\left(b-a\right)}$.
The sum of length of semi-major and semi-minor axes becomes
$$
\frac{4\pi}{b-a} + \sqrt{\frac{16\pi^2}{\left(b-a\right)^2} + 1}
$$
and we may choose $\rho$ as $\frac{4\pi}{({b-a})} + 1$.

By the maximum-modulus theorem, the absolute value of $\widetilde f$ on this elliptic region is maximized at 
$\sqrt{\frac{16\pi^2}{\left(b-a\right)^2} + 1}$
with value $U = \exp\left( \frac{\sqrt{16 \pi^2 + \left( b-a\right)^2} +  (b+a)}{2} \right)$
and the lower bound has the value $L = \exp\left( a\right)$. 
Putting those all together in Theorem \ref{main}, 
we could obtain above the bound for approximation polynomial degree.
This completes the proof of Theorem \ref{thm:estrada}.
\end{proof}

\subsection{Schatten $p$-Norm} \label{sec:schatten}
The Schatten $p$-norm for $p \geq 1$ of a matrix $M \in \R^{d_1\times d_2}$ is defined as
$${\|M \|_{(p)}} = \left( \sum_{i=1}^{\min\{d_1,d_2 \}} \sigma_i^p\right)^{1/p}
$$
where $\sigma_i$ is the $i$-th singular value of $M$ for $1 \leq i \leq \min\{d_1,d_2 \}$.
Schatten $p$-norm is widely used in linear algebric applications such as
nuclear norm (also known as the trace norm) for $p=1$:
$${\|M \|_{(1)}} = {\tt tr}\left( \sqrt{M^\top M} \right) = \sum_{i=1}^{\min\{d_1,d_2 \}} \sigma_i.
$$
The Schatten $p$-norm corresponds to 
the spectral function $x^{p/2}$ of matrix $M^\top M$ since
singular values of $M$ are square roots of eigenvalues of $M^\top M$.
In this section, we assume that general (possibly, non-symmetric) non-singular
matrix $M \in \R^{d_1 \times d_2}$
has singular values in the interval $[\sigma_{\min}, \sigma_{\max}]$ for some $\sigma_{\min},\sigma_{\max}>0$, and
propose {\bf Algorithm \ref{alg:schatten}} which uses
{\bf Algorithm \ref{alg1}} as a subroutine.

\begin{algorithm}[tbh!] 
   \caption{Schatten $p$-norm approximation}
\begin{algorithmic}\label{alg:schatten}
   \STATE {\bfseries Input:} matrix $M \in \R^{d_1 \times d_2}$ with singular values in $\left[ \sigma_{\min}, \sigma_{\max}\right]$, sampling number $m$ and polynomial degree $n$\\
   \vspace{0.03in}
   \STATE $\Gamma \leftarrow$ Output of {\bf Algorithm \ref{alg1}} for inputs $M^\top M,\left[ \sigma_{\min}^2, \sigma_{\max}^2\right], m,n$ with $f(x)=x^{p/2}$.
   \vspace{0.03in}
   \STATE $\Gamma \leftarrow \Gamma^{1/p}$ 
   \STATE {\bfseries Output:} $\Gamma$ 
\end{algorithmic}
\end{algorithm}

%
We provide the following theoretical bounds on sampling number $m$ and polynomial degree $n$ of {\bf Algorithm \ref{alg:schatten}}.

\begin{theorem}\label{thm:schatten}
Given $\varepsilon, \zeta\in (0,1)$,
consider the following inputs for {\bf Algorithm \ref{alg:schatten}}:
\begin{itemize}
\item $M \in \R^{d_1 \times d_2}$ be a matrix with singular values in $\left[ \sigma_{\min}, \sigma_{\max}\right]$
\item $ m \geq 54 \varepsilon^{-2} \log{\left(\frac{2}{\zeta}\right)} $
\item $ n \geq \mathcal N \left( \varepsilon, p, \frac{\sigma_{\max}}{\sigma_{\min}} \right)$, where 
\begin{align*}
\mathcal N \left( \varepsilon, p, \kappa \right) := \log\left( \frac{16 \left( \kappa - 1\right)}{\varepsilon} \left( \kappa^2 + 1\right)^{p/2} \right) / \log \left( \frac{\kappa+1}{\kappa-1} \right) = O\left( \kappa \left( p \log \kappa + \log \frac1{\varepsilon} \right) \right).
\end{align*}
\end{itemize}
Then, it follows that
{\begin{align*}
\Pr \left[ \ \left| \|M\|_{(p)}^p - \Gamma^p \right| \le  \varepsilon 
\|M\|_{(p)}^p 
\right] \ge 1 - \zeta
\end{align*}
}
where $\Gamma$ is the output of {\bf Algorithm \ref{alg:schatten}}.
\end{theorem}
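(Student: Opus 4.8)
The plan is to instantiate Theorem~\ref{main} with the function $f(x) = x^{p/2}$ on the interval $[\sigma_{\min}^2, \sigma_{\max}^2]$, exactly as was done for the logarithm in Theorem~\ref{thm:main1}, the inverse in Theorem~\ref{thm:inverse}, and the exponential in Theorem~\ref{thm:estrada}. First I would note that $f(x) = x^{p/2}$ is non-negative on $[\sigma_{\min}^2, \sigma_{\max}^2] \subset (0,\infty)$, and that after the linear change of variables $x \mapsto \frac{b-a}{2}x + \frac{b+a}{2}$ with $a = \sigma_{\min}^2$, $b = \sigma_{\max}^2$, the composed function $\widetilde f(z) = \left(\frac{b-a}{2}z + \frac{b+a}{2}\right)^{p/2}$ is analytic on and inside any Bernstein ellipse $E_\rho$ that avoids the branch point at $z = -\frac{b+a}{b-a}$. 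This branch point lies on the negative real axis at distance $\frac{b+a}{b-a} = \frac{\sigma_{\max}^2 + \sigma_{\min}^2}{\sigma_{\max}^2 - \sigma_{\min}^2} > 1$ from the origin, so I can pick the ellipse whose semi-major axis equals that value; writing $\kappa = \sigma_{\max}/\sigma_{\min}$, this gives $\rho = \frac{b+a}{b-a} + \sqrt{\left(\frac{b+a}{b-a}\right)^2 - 1} = \frac{\kappa^2+1}{\kappa^2-1} + \frac{2\kappa}{\kappa^2-1} = \frac{(\kappa+1)^2}{\kappa^2-1} = \frac{\kappa+1}{\kappa-1}$, which matches the denominator $\log\frac{\kappa+1}{\kappa-1}$ appearing in $\mathcal N(\varepsilon,p,\kappa)$.

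Next I would compute the constants $U$ and $L$. By the maximum-modulus principle, $|\widetilde f(z)| = \left|\frac{b-a}{2}z + \frac{b+a}{2}\right|^{p/2}$ is maximized over $\overline{E_\rho}$ at the rightmost point of the ellipse, where $\left|\frac{b-a}{2}z + \frac{b+a}{2}\right|$ attains roughly $b \cdot \frac{\rho+\rho^{-1}}{2} + \text{lower order}$; a clean upper bound is to use $\sigma_{\max}^2 + \sigma_{\min}^2 = \frac{(b-a)(\kappa^2+1)}{\kappa^2-1} \cdot \frac{\kappa^2-1}{\kappa^2+1}$... more straightforwardly, the point on the ellipse farthest in modulus after the affine map has magnitude at most $\sigma_{\max}^2 + \sigma_{\min}^2$ scaled appropriately, leading to $U = \left(\sigma_{\max}^2 + \sigma_{\min}^2\right)^{p/2} = (b-a)^{p/2}(\kappa^2+1)^{p/2} / (\kappa^2-1)^{p/2}$, or after absorbing constants, $U \le C\,(\kappa^2+1)^{p/2}$ relative to the natural scale. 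For the lower bound, $\min_{x \in [\sigma_{\min}^2,\sigma_{\max}^2]} |x^{p/2}| = \sigma_{\min}^p =: L$. Feeding $U/L$, $\rho$, $\varepsilon$ into the polynomial-degree requirement $n \ge \log\!\left(\frac{8U}{\varepsilon(\rho-1)L}\right)/\log\rho$ of Theorem~\ref{main}, and simplifying $\rho - 1 = \frac{2}{\kappa-1}$, should reproduce $\mathcal N(\varepsilon,p,\kappa) = \log\!\left(\frac{16(\kappa-1)}{\varepsilon}(\kappa^2+1)^{p/2}\right)/\log\frac{\kappa+1}{\kappa-1}$. The sampling bound $m \ge 54\varepsilon^{-2}\log(2/\zeta)$ is just the bound from Theorem~\ref{main} verbatim, since here the target relative error on $\Sigma_f(M^\top M) = \|M\|_{(p)}^p$ is $\varepsilon$ itself (no rescaling of $\varepsilon$ is needed, unlike in the log-determinant case). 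Finally, Theorem~\ref{main} gives $\Pr\!\left(\left|\Sigma_f(M^\top M) - \Gamma^p\right| \le \varepsilon\,\Sigma_f(M^\top M)\right) \ge 1-\zeta$, and since $\Sigma_f(M^\top M) = \sum_i (\sigma_i^2)^{p/2} = \sum_i \sigma_i^p = \|M\|_{(p)}^p$ and Algorithm~\ref{alg:schatten} outputs $\Gamma$ with $\Gamma^p$ equal to the trace estimate before the $p$-th root, this is exactly the claimed bound.

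The main obstacle I expect is the careful bookkeeping of the constant $U$: getting the maximum-modulus estimate of $\left|\frac{b-a}{2}z + \frac{b+a}{2}\right|^{p/2}$ over the chosen Bernstein ellipse to come out to the clean form $(\kappa^2+1)^{p/2}$ (up to the absolute constant $16$ that is baked into $\mathcal N$) requires either evaluating the affine image of the rightmost ellipse point exactly or bounding it by $\sigma_{\min}^2 + \sigma_{\max}^2$ and checking the slack is absorbed by the constant. A secondary point to verify is analyticity: $z \mapsto z^{p/2}$ for non-integer $p$ has a branch cut, so one must confirm the affine image of $\overline{E_\rho}$ stays in the cut plane, which holds precisely because the semi-major axis was chosen to be $\frac{b+a}{b-a}$, placing the branch point $-\frac{b+a}{b-a}$ on the boundary and not in the interior. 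Once these two checks are in place, plugging into Theorem~\ref{main} and reading off $\mathcal N(\varepsilon,p,\kappa)$ — together with the asymptotic simplification $\frac{1}{\log\frac{\kappa+1}{\kappa-1}} = O(\kappa)$ for large $\kappa$ and $\log\!\left((\kappa^2+1)^{p/2}\right) = O(p\log\kappa)$ — is routine.
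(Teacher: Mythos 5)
Your proposal is correct and follows essentially the same route as the paper's proof: apply Theorem~\ref{main} to $f(x)=x^{p/2}$ on $[\sigma_{\min}^2,\sigma_{\max}^2]$ with the Bernstein ellipse through $-\frac{\sigma_{\max}^2+\sigma_{\min}^2}{\sigma_{\max}^2-\sigma_{\min}^2}$, giving $\rho=\frac{\kappa+1}{\kappa-1}$, $U=(\sigma_{\max}^2+\sigma_{\min}^2)^{p/2}$, $L=\sigma_{\min}^p$, and hence $U/L=(\kappa^2+1)^{p/2}$, exactly as in the paper. Your remaining concerns (the constant in $\mathcal N$, with your computation giving $4(\kappa-1)$ versus the stated $16(\kappa-1)$, and the branch point sitting on the ellipse boundary) are handled no more carefully in the paper itself, so they are not gaps relative to the paper's argument.
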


\begin{proof}
Consider following function as 
$$\widetilde f \left(x\right) =  \left( \frac{\sigma_{\max}^2-\sigma_{\min}^2}{2}x + \frac{\sigma_{\max}^2+\sigma_{\min}^2}{2}\right)^{p/2} \mbox{\quad for \ } x \in [-1,1].$$
In general, $x^{p/2}$ for arbitrary $p \geq 1$ is defined on $x \geq 0$.
We choose elliptic region $E_\rho$ in the complex plane 
such that it is passing through 
$- \left( \sigma_{\max}^2+\sigma_{\min}^2 \right) / \left( \sigma_{\max}^2-\sigma_{\min}^2 \right)$
and having foci $+1, -1$ on real axis
so that $\widetilde f$ is analytic on and inside $E_\rho$.
The length of semi-axes can be computed as
$$
\rho
= \frac{\sigma_{\max}^2+\sigma_{\min}^2}{\sigma_{\max}^2-\sigma_{\min}^2}
+ \sqrt{\left( \frac{\sigma_{\max}^2+\sigma_{\min}^2}{\sigma_{\max}^2-\sigma_{\min}^2}\right)^2 - 1} 
= \frac{\sigma_{\max}+\sigma_{\min}}{\sigma_{\max}-\sigma_{\min}}
= \frac{\kappa_{\max}+1}{\kappa_{\max}-1}
$$
where $\kappa_{\max} = \sigma_{\max} / \sigma_{\min}$.

The maximum absolute value is occurring at 
$\left( \sigma_{\max}^2+\sigma_{\min}^2 \right) / \left( \sigma_{\max}^2-\sigma_{\min}^2 \right)$
and its value is $U = \left( \sigma_{\max}^2+\sigma_{\min}^2 \right) ^{p/2}$.
Also, the lower bound is obtained as $L = {\sigma_{\min}}^{p}$.
Applying Theorem $\ref{main}$ together with choices of $\rho$, $U$ and $L$, 
the bound of degree for polynomial approximation $n$ can be achieved. 
This completes the proof of Theorem \ref{thm:schatten}.
\end{proof}

\subsection{Testing Positive Definiteness} \label{sec:psd}

In this section we consider the problem of determining if a given symmetric matrix $A \in \R^{d \times d}$ 
is positive definite. This can be useful in several scenarios. For example, when solving a linear system
$Ax=b$, determination if $A$ is positive definite can drive algorithmic choices like whether to use Cholesky decomposition or use $LU$ decomposition, or alternatively, if an iterative method is preferred, whether to use 
CG or MINRES. In another example, checking if the Hessian is positive or negative definite can help
determine if a critical point is a local maximum/minimum or a saddle point.

In general, positive-definiteness can be tested in $O(d^3)$ operations by attempting
a Cholesky decomposition of the matrix. If the operation succeeds then the 
matrix is positive definite, and if it fails (i.e., a negative diagonal is encountered) the matrix
is indefinite. If the matrix is sparse, running time can be improved as long
as the fill-in during the sparse Cholesky factorization is not too big, but 
in general the worst case is still $\Theta(d^3)$. More in line with this paper is 
to consider the matrix implicit, that is accessible only via matrix-vector products.
In this case, one can reduce the matrix to tridiagonal form by doing $n$ iterations of Lanczos,
and then test positive definiteness of the reduced matrix. This requires 
$d$ matrix vector multiplications, so running time $\Theta(\|A\|_{\tt mv} \cdot d)$.
However, we note that this algorithm is not a
practical algorithm since it suffers from severe numerical instability.

In this paper we consider testing positive definiteness under the property testing framework. 
Property testing algorithms relax the requirements of decision problems by allowing them 
to issue arbitrary answers for inputs that are on the boundary of the class. That is,
for decision problem on a class $L$ (in this case, the set of positive definite matrices)
the algorithm is required to accept $x$ with high probability if $x \in L$, and reject
$x$ if $x \not\in L$ and $x$ is $\varepsilon$-far from any $y \in L$. For $x$'s that are
not in $L$ but are less than $\varepsilon$ far away, the algorithm is free to return
any answer. We say that such $x$'s are in the {\em indifference region}.
In this section we show that testing positive definiteness in the property testing
framework can be accomplished using $o(d)$ matrix-vector products. 

Using the spectral  norm of a matrix to measure distance, this suggests the following 
property testing variant of determining if a matrix is positive definite.

\begin{problem}
\label{problem:psd_full}
Given a symmetric matrix $A \in \R^{d \times d}$, $\varepsilon > 0$ and $\zeta \in (0, 1)$
\begin{itemize}
\item If $A$ is positive definite, accept the input with probability of at least $1-\zeta$.
\item If $\lambda_{\min} \leq -\varepsilon \| A \|_2$, reject the input with probability of at least $1-\zeta$.
\end{itemize}
\end{problem}

For ease of presentation, it will be more convenient to restrict the norm of $A$ to be at most
$1$, and for the indifference region to be symmetric around $0$.
\begin{problem}
\label{problem:psd_simpler}
Given a symmetric $A \in \R^{n\times n}$ with $\|A\|_2 \leq 1$, $\varepsilon > 0$ and $\zeta \in (0, 1)$
\begin{itemize}
\item If $\lambda_{\min} \geq \varepsilon/2$, accept the input with probability of at least $1-\zeta$.
\item If $\lambda_{\min} \leq -\varepsilon/2$, reject the input with probability of at least $1-\zeta$.
\end{itemize}
\end{problem}
It is quite easy to translate an instance of Problem~\ref{problem:psd_full} to an instance of 
Problem~\ref{problem:psd_simpler}. First we use power-iteration to approximate $\|A\|_2$. Specifically,
we use enough power iterations to guarantee that with a normally distributed random initial vector 
to find a $\lambda'$ such that $|\lambda' - \|A\|_2 | \leq \left( \varepsilon / 2 \right) \|A\|_2 $
with probability of at least $1-\zeta/2$.
Due to a bound by Klien and Lu~\cite[Section 4.4]{klein1996power} we need to perform 
\begin{equation*}
\left\lceil \frac{2}{\varepsilon}\left(\log^2\left(2d\right)+\log\left(\frac{8}{\varepsilon\zeta^{2}}\right)\right)\right\rceil
\end{equation*}
iterations (matrix-vector products) to find such an $\lambda'$. Let $\lambda = \lambda' / (1 - \varepsilon / 2)$ and consider 
\begin{equation*}
    B = \frac{A - \frac{\lambda \varepsilon}{2} I_d}{(1 + \frac{\varepsilon}{2})\lambda}\,.
\end{equation*}
It is easy to verify that $\|B\|_2 \leq 1$ and 
$\lambda / \|A\|_2 \geq 1/2$ for $\varepsilon > 0$.
If $\lambda_{\min}(A) \in [0, \varepsilon \|A\|_2]$ then $\lambda_{\min}(B) \in [-\varepsilon'/2, \varepsilon' / 2]$ where $\varepsilon' = \varepsilon / (1 + \varepsilon/2)$. Therefore, by solving Problem~\ref{problem:psd_simpler} on $B$ with $\varepsilon'$ and $\zeta' = \zeta / 2$ we have a solution to 
Problem~\ref{problem:psd_full} with $\varepsilon$ and $\zeta$.

We call the region $[-1, -\varepsilon/2] \cup [\varepsilon/2, 1]$ the {\em active region} 
$\cal{A}_\varepsilon$, and the interval $[-\varepsilon/2, \varepsilon/2]$ as the {\em indifference region} 
$\cal{I}_\varepsilon$.

Let $S$ be the reverse-step function, that is,
\begin{align*}
S \left( x \right) \ = 
\begin{cases}
1      &  \mbox{if }  x \leq 0, \\
0      &  \mbox{if }  x > 0.
\end{cases}
\end{align*}
Now note that a matrix $A \in \R^{d \times d}$ is positive definite if and only if
\begin{equation}
 \Sigma_S(A) \leq \gamma \label{eqn:exact_eig_test}
\end{equation}
for any fixed $\gamma \in(0, 1)$.
This already suggests using {\bf Algorithm \ref{alg1}} to test positive definite,
however the discontinuity of $S$ at $0$ poses problems. 

To circumvent this issue we use a two-stage approximation. 
First, we approximate the reverse-step function 
using a smooth function $f$  (based on the hyperbolic tangent), 
and then use {\bf Algorithm \ref{alg1}} to approximate
$\Sigma_f(A)$. By carefully controlling the transition in $f$, the degree
in the polynomial approximation and the quality of the trace estimation, 
we guarantee that as long as the smallest eigenvalue is not in the 
indifference region, the {\bf Algorithm \ref{alg1}} will return 
less than 1/4 with high probability if $A$ is positive definite and will return more than 
1/4 with high probability if $A$ is not positive definite.
The procedure is summarized as  {\bf Algorithm \ref{alg:psd}}. 

\begin{algorithm}[tbh!] 
   \caption{Testing positive definiteness}
\begin{algorithmic}\label{alg:psd}
   \STATE {\bfseries Input:} symmetric matrix $A \in \R^{d \times d}$ with eigenvalues in $\left[ -1, 1\right]$, sampling number $m$ and polynomial degree $n$\\
   \STATE Choose $\varepsilon >0$ as the distance of active region
   \STATE $\Gamma \leftarrow$ Output of {\bf Algorithm \ref{alg1}} for inputs $A,\left[ -1, 1\right], m,n$ with $f(x)= \frac1{2}\left(1+\tanh(-\frac{\log\left( 16d\right)}{\varepsilon}x) \right)$.
   \IF{$\Gamma < \frac{1}{4}$}
      \STATE {\bf return} PD 
   \ELSE
      \STATE {\bf return} NOT PD
   \ENDIF
\end{algorithmic}
\end{algorithm}

The correctness of the algorithm is established in the following theorem. While we use
{\bf Algorithm~\ref{alg1}}, the indifference region requires a more careful analysis so the proof
does not rely on Theorem~\ref{main}.
\begin{theorem}\label{thm:psd}
Given $\varepsilon, \zeta\in (0,1)$,
consider the following inputs for {\bf Algorithm \ref{alg:psd}}:
\begin{itemize}
\item $A \in \R^{d \times d}$ be a symmetric matrix with eigenvalues in $\left[ -1, 1\right]$
and $\lambda_{\min}(A)\not\in {\cal I}_\varepsilon$ where $\lambda_{\min}(A)$ is the minimum eigenvalue 
of $A$.
\item $ m \geq 24 \log{\left(\frac{2}{\zeta}\right)} $
\item $ n \geq  \frac{\log\left( 32 \sqrt{2} \log\left( 16d\right)\right)  + \log\left( 1/\varepsilon\right) - \log\left( \pi/8d \right)}{\log\left( 1 + \frac{\pi \varepsilon}{4 \log\left( 16d\right)}\right)} 
= O\left(\frac{\log^2(d) + \log(d)\log(1/\varepsilon)}{\varepsilon}  \right)$
\end{itemize}
Then the answer returned by {\bf Algorithm \ref{alg:psd}} is correct with probability of at least $1-\zeta$.
\end{theorem}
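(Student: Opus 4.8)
Writing $\alpha=\log(16d)/\varepsilon$, Algorithm~\ref{alg:psd} runs Algorithm~\ref{alg1} on $[-1,1]$ with the strictly decreasing, smooth function $f(x)=\tfrac12\bigl(1+\tanh(-\alpha x)\bigr)$, valued in $(0,1)$. The plan is to treat separately the two cases permitted by $\lambda_{\min}(A)\notin{\cal I}_\varepsilon$. The first thing to record is the effect of the particular gain $\alpha$: since $\tanh(\alpha\varepsilon/2)=\tanh(\tfrac12\log(16d))=\tfrac{16d-1}{16d+1}$, we get $f(\varepsilon/2)=\tfrac1{16d+1}$ and $f(-\varepsilon/2)=\tfrac{16d}{16d+1}$. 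Hence if $\lambda_{\min}(A)\ge\varepsilon/2$ (so $A\succ0$ and we should accept), monotonicity gives $\Sigma_f(A)\le d\,f(\varepsilon/2)<\tfrac1{16}$; and if $\lambda_{\min}(A)\le-\varepsilon/2$ (we should reject), the eigenvalue $\lambda_{\min}(A)$ alone contributes $f(\lambda_{\min}(A))\ge f(-\varepsilon/2)$, so $\Sigma_f(A)>\tfrac{15}{16}$ for $d\ge1$. These two targets lie well on opposite sides of the decision threshold $\tfrac14$, so it remains to control the two approximation errors tightly enough to preserve that separation.

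For the Chebyshev error I would first locate the poles of $f$: $\tanh(-\alpha z)$ has poles only at $z=\pm i\pi(2k+1)/(2\alpha)$, so $f$ is analytic in the open strip $|\mathrm{Im}\,z|<\pi/(2\alpha)$. I would then take the Bernstein ellipse $E_\rho$ with $\rho=1+\tfrac{\pi\varepsilon}{4\log(16d)}=1+\tfrac{\pi}{4\alpha}$ — exactly the $\rho$ appearing in the denominator of the stated $n$-bound — and check, via a routine computation of its semi-minor axis $\tfrac12(\rho-\rho^{-1})$, that $E_\rho$ lies inside the strip with $|\mathrm{Im}(-\alpha z)|<\pi/4$ throughout. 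From $|\tanh(u+iv)|^2=\tfrac{\sinh^2 u+\sin^2 v}{\sinh^2 u+\cos^2 v}<1$ for $|v|<\pi/4$ it follows that $|f(z)|<1$ on $E_\rho$, so one may take $U=1$ in Corollary~\ref{converge}. That corollary gives $\max_{[-1,1]}|f-\widetilde p_n|\le\tfrac{4}{(\rho-1)\rho^n}$, and a direct rearrangement shows the stated lower bound on $n$ is exactly equivalent to $\tfrac{4}{(\rho-1)\rho^n}\le\eta:=\tfrac1{16\sqrt2\,d}$; summing over eigenvalues then yields $|{\tt tr}(\widetilde p_n(A))-\Sigma_f(A)|\le d\eta=\tfrac1{16\sqrt2}$.

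The main obstacle is the trace-estimation step: Theorem~\ref{thm:trace} requires a (semi)definite matrix, but $\widetilde p_n(A)$ need not be one — near the positive edge of the active region $f$ drops to $f(1)\ll\eta$, so interpolation error can push some $\widetilde p_n(\lambda_i)$ below $0$, which is precisely why Theorem~\ref{main} cannot be invoked here. My fix is to apply the estimator instead to $B:=\widetilde p_n(A)+\eta I_d$, which is positive definite since its eigenvalues are $\widetilde p_n(\lambda_i)+\eta\ge f(\lambda_i)>0$. Crucially, the $\{-1,+1\}$ vectors drawn by Algorithm~\ref{alg1} satisfy $\|\mathbf v^{(i)}\|_2^2=d$ deterministically, so $\mathbf v^{(i)\top}(\eta I_d)\mathbf v^{(i)}=d\eta$ exactly, and hence ${\tt tr}_m(B)-{\tt tr}(B)=\Gamma-{\tt tr}(\widetilde p_n(A))$. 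Applying Theorem~\ref{thm:trace} to $B$ with accuracy $\varepsilon_0=\tfrac12$ — whence $m\ge 6\cdot 2^2\log(2/\zeta)=24\log(2/\zeta)$ suffices — gives, with probability at least $1-\zeta$, $\bigl|\Gamma-{\tt tr}(\widetilde p_n(A))\bigr|\le\tfrac12{\tt tr}(B)=\tfrac12\bigl({\tt tr}(\widetilde p_n(A))+d\eta\bigr)$.

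Finally I would assemble the inequalities on this event. In the accept case ${\tt tr}(\widetilde p_n(A))<\tfrac1{16}+d\eta$, so $\Gamma<\tfrac32\bigl(\tfrac1{16}+d\eta\bigr)+\tfrac12 d\eta=\tfrac3{32}+2d\eta<\tfrac14$, and Algorithm~\ref{alg:psd} returns ``PD''; in the reject case ${\tt tr}(\widetilde p_n(A))>\tfrac{15}{16}-d\eta$, so $\Gamma\ge\tfrac12{\tt tr}(\widetilde p_n(A))-\tfrac12 d\eta>\tfrac{15}{32}-d\eta>\tfrac14$, and it returns ``NOT PD''. In both cases the answer is correct with probability at least $1-\zeta$, as claimed. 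Beyond the two genuinely delicate points already flagged — verifying that $E_\rho$ stays inside the analyticity strip (to justify $U=1$ and the exact form of the $n$-bound) and the positive-definite shift trick — everything is elementary arithmetic, with the constants ($16$ in the $\tanh$ gain, $\eta=\tfrac1{16\sqrt2\,d}$, $\varepsilon_0=\tfrac12$) chosen so that the two final estimates clear $\tfrac14$ comfortably.
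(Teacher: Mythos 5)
Your proposal is correct and follows essentially the same route as the paper's proof: a Bernstein ellipse with $\rho = 1+\pi/(4\alpha)$ and $U=1$ for the Chebyshev error, the positive-semidefinite shift trick (valid because Rademacher vectors satisfy $\|\mathbf{v}^{(i)}\|_2^2=d$ exactly, so the shift cancels in $\Gamma - {\tt tr}(\widetilde p_n(A))$), and Hutchinson estimation with relative accuracy $1/2$ (hence $m\ge 24\log(2/\zeta)$), with the two cases separated around the threshold $1/4$. The only differences are bookkeeping: you bound $\Sigma_f(A)$ directly through the exact values $f(\varepsilon/2)=\frac{1}{16d+1}$ and $f(-\varepsilon/2)=\frac{16d}{16d+1}$ instead of comparing $\widetilde p_n$ to the step function $S$ on the active region, and you shift by $\frac{1}{16\sqrt{2}\,d}I_d$ rather than the paper's $\frac{1}{8d}I_d$, reaching the same conclusion with the same constants in the theorem statement.
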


The number of matrix-vector products in {\bf Algorithm \ref{alg:psd}} is 
$O\left(\left(\frac{\log^2(d) + \log(d)\log(1/\varepsilon)}{\varepsilon}  \right)\log(1/\zeta)\right)$
as compared with $O(d)$ that are required with non-property testing previous methods.

\begin{proof}
Let $p_n$ be the degree Chebyshev interpolation of $f$. We begin by showing that 
$$
\max_{x \in \cal{A}_\varepsilon}\left|S(x) - p_n(x)\right| \leq \frac{1}{8d}\,.
$$
To see this, we first observe that 
$$
\max_{x \in \cal{A}_\varepsilon}\left|S(x) - p_n(x)\right| \leq \max_{x \in \cal{A}_\varepsilon}\left|S(x) - f(x)\right| + \max_{x \in \cal{A}_\varepsilon}\left|f(x) - p_n(x)\right|
$$
so it is enough to bound each term by $1/16d$.

For the first term, let 
\begin{equation}
\alpha = \frac1{\varepsilon} \log \left( 16d\right) \label{eq:alpha}
\end{equation}
and note that $f(x)= \frac{1}{2} ( 1 + \tanh(- \alpha x))$.
We have 
\begin{align*}
\max_{x \in \cal{A}_\varepsilon}\left|S(x) - f(x)\right| 
&= \frac1{2} \ \max_{x \in [\varepsilon/2, 1]}\left|1 - \tanh(\alpha x)\right| \\
&= \frac1{2} \left(1 - \tanh\left( \frac{\alpha \varepsilon}{2}\right) \right) \\
&= \frac{e^{-\alpha  \varepsilon}}{1 + e^{-\alpha  \varepsilon}} \\
&\leq e^{-\alpha  \varepsilon} = \frac{1}{16d}.
\end{align*} 

To bound the second term we use Theorem~\ref{converge}. 
To that end we need to define an appropriate ellipse. Let $E_\rho$ 
be the ellipse with  foci $-1,+1$ passing through $\frac{i\pi}{4\alpha}$. 
The sum of semi-major and semi-minor axes is equal to
$$
\rho = \frac{\pi + \sqrt{\pi^2 + 16\alpha^2}}{4\alpha}.
$$
The poles of $\tanh$ are of the form $i\pi/2\pm ik\pi$ so $f$ is analytic 
inside $E_\rho$. It is always the case that $|\tanh(z)| \leq 1$ if $\Im(z) \leq \pi / 4$
\footnote{To see this, note that using simple algebraic manipulations it is possible to show that
$|\tanh(z)| = (e^{2\Re(z)} + e^{2\Re(z)} - 2\cos(2\Im(z)))/ (e^{2\Re(z)} + e^{2\Re(z)} - 2\cos(2\Im(z)))$,
from which the bound easily follows.},
so $|f(z)| \leq 1$ for $z \in E_\rho$.
Applying Theorem~\ref{converge} and noticing that $\rho \geq 1 + \pi / 4\alpha$, we have
\begin{align*}
\max_{x \in [-1,1]}\left| p_n(x) - f(x) \right| 
& \leq \frac{4}{(\rho - 1)\rho^{d}}  \leq \frac{16\alpha}{\pi(1 + \frac{\pi}{4\alpha})^d}.
\end{align*}
Thus, $\max_{x \in [-1,1]}\left| p_n(x) - f(x) \right| \leq \frac{1}{16d}$ provided that
$$
n \geq \frac{\log(32 \alpha) - \log(\pi/8d)}{\log(1 + \frac{\pi}{4\alpha})}\,.
$$
which is exactly the lower bound on $n$ in the theorem statement.

Let
$$B = p_n \left(A\right) + \frac{1}{8d} I_d$$ 
then
$B$ is symmetric positive semi-definite since $p_n(x) \geq -1/8d$
due to the fact that $|f(x)| \geq 0$ for every $x$. 
According to Theorem~\ref{thm:trace},
$$
\Pr\left( \left| {\tt tr}_m \left( B \right) - {\tt tr}\left(B\right)  \right| \leq \frac{{\tt tr}\left(B\right)}{2}\right) \geq 1 - \zeta
$$
if $m \geq 24 \log\left( 2/\zeta \right)$ as assumed in the theorem statement.

Since ${\tt tr}_m \left(B\right) = {\tt tr}_m \left(p_n(A) \right) + 1/8$,
${\tt tr} \left(B\right) = {\tt tr} \left(p_n(A)\right) + 1/8$, and 
$\Gamma = {\tt tr}_m \left(p_n(A) \right)$ we have 
\begin{equation}
    \label{eq:gamma_pn}
\Pr \left( \left| \Gamma - {\tt tr} \left(p_n(A)\right) \right| \leq \frac{{\tt tr} \left(p_n(A)\right)}{2} + 1/16 \right) \geq 1 - \zeta\,.
\end{equation}

If $\lambda_{\min}(A) \geq \varepsilon / 2$, then all eigenvalues of $S(A)$ are zero and so
all eigenvalues of $p_n(A)$ are bounded by $1/8d$, so ${\tt tr} \left(p_n(A)\right) \leq 1/8$.
Inequality~\eqref{eq:gamma_pn} then imply that 
$$
\Pr\left( \Gamma \leq 1/4 \right) \geq 1 - \zeta\,. 
$$

If $\lambda_{\min}(A) \leq -\varepsilon / 2$, $S(A)$ has at least one eigenvalue that is 
$1$ and is mapped in $p_n(A)$ to at least $1 - 1/8d \geq 7/8$. All other eigenvalues in $p_n(A)$
are at the very least $-1/8d$ so ${\tt tr} \left(p_n(A)\right) \geq 3/4$. Inequality~\eqref{eq:gamma_pn} then imply that  
$$
\Pr\left(\Gamma \geq 1/4 \right) \geq 1 - \zeta\,. 
$$

The conditions $\lambda_{\min}(A) \geq \varepsilon / 2$ and  $\lambda_{\min}(A) \leq -\varepsilon / 2$ together cover all cases for $\lambda_{\min}(A) \not \in {\cal I}_\varepsilon$ thereby completing the proof.
\end{proof}

\section{Experiments} \label{sec:exp}
The experiments were performed using a machine with 3.5GHz Intel i7-5930K processor with 12 cores and 32 GB RAM.
We choose $m=50$, $n=25$ in our algorithm
unless stated otherwise.

\begin{figure}[t!] 
\begin{center} 
\subfigure[]{\includegraphics[width = 0.33\textwidth]{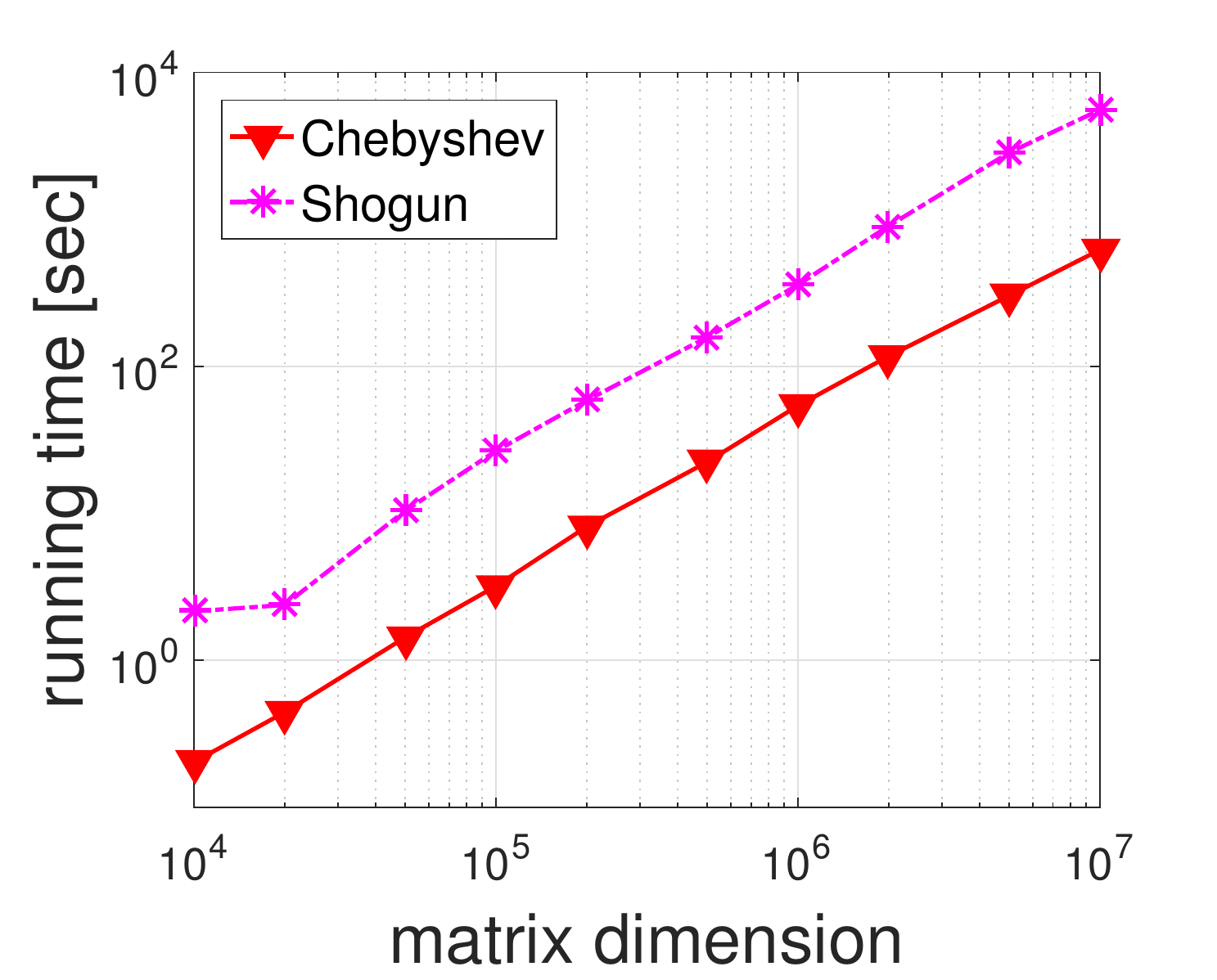}} 
\hskip -0.01in
\subfigure[]{\includegraphics[width = 0.33\textwidth]{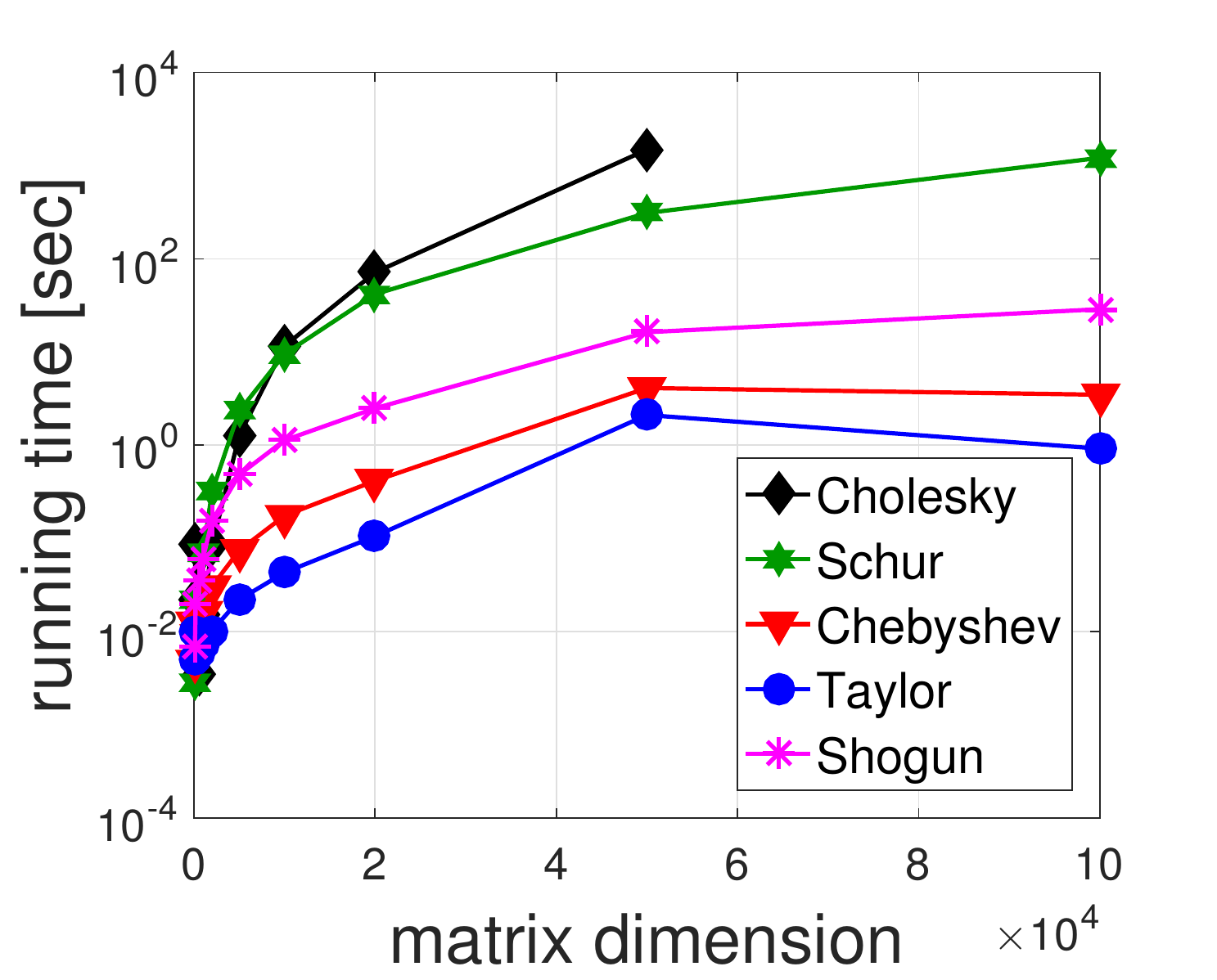}}
\hskip -0.01in
\subfigure[]{\includegraphics[width = 0.33\textwidth]{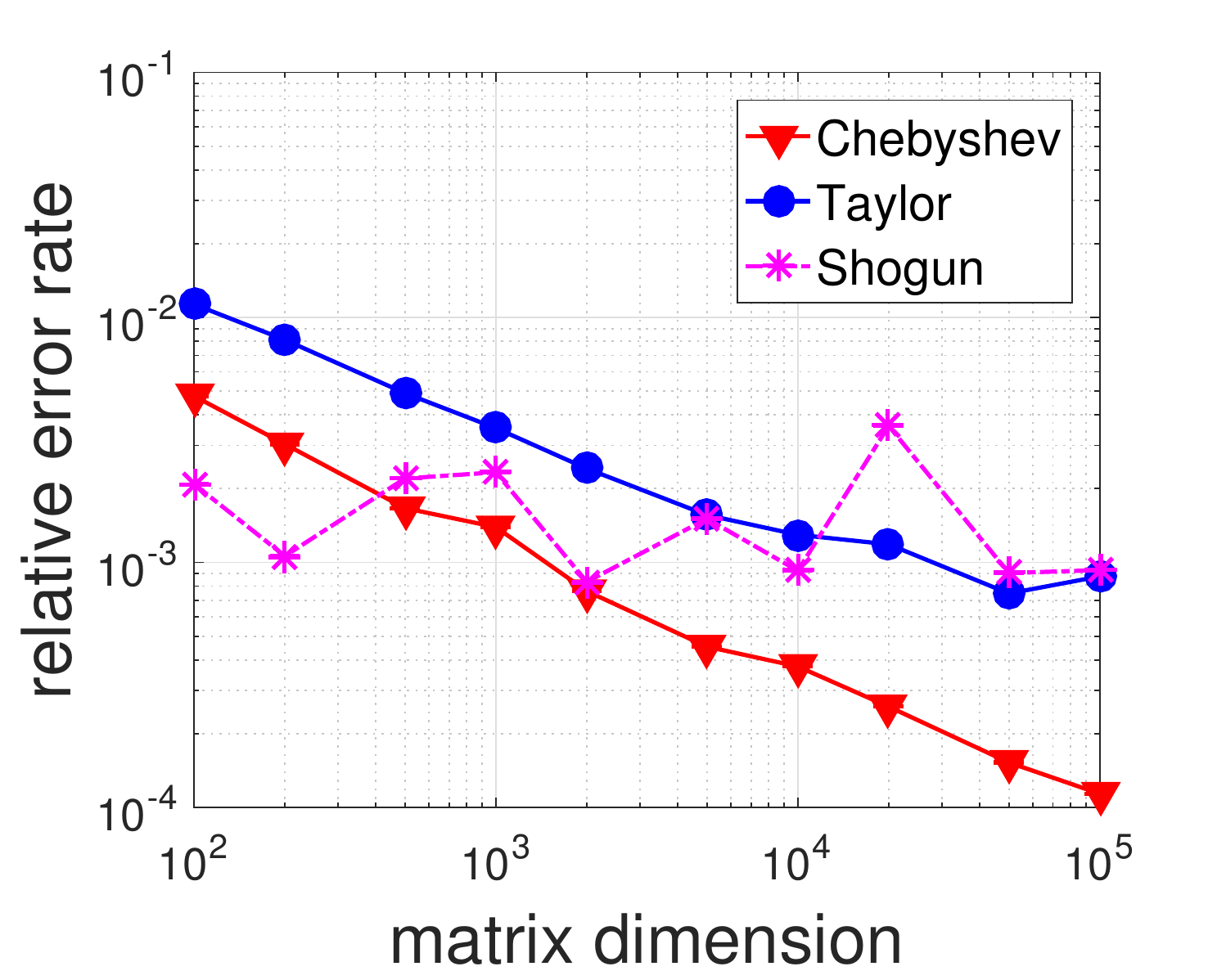}}
\subfigure[]{\includegraphics[width = 0.33\textwidth]{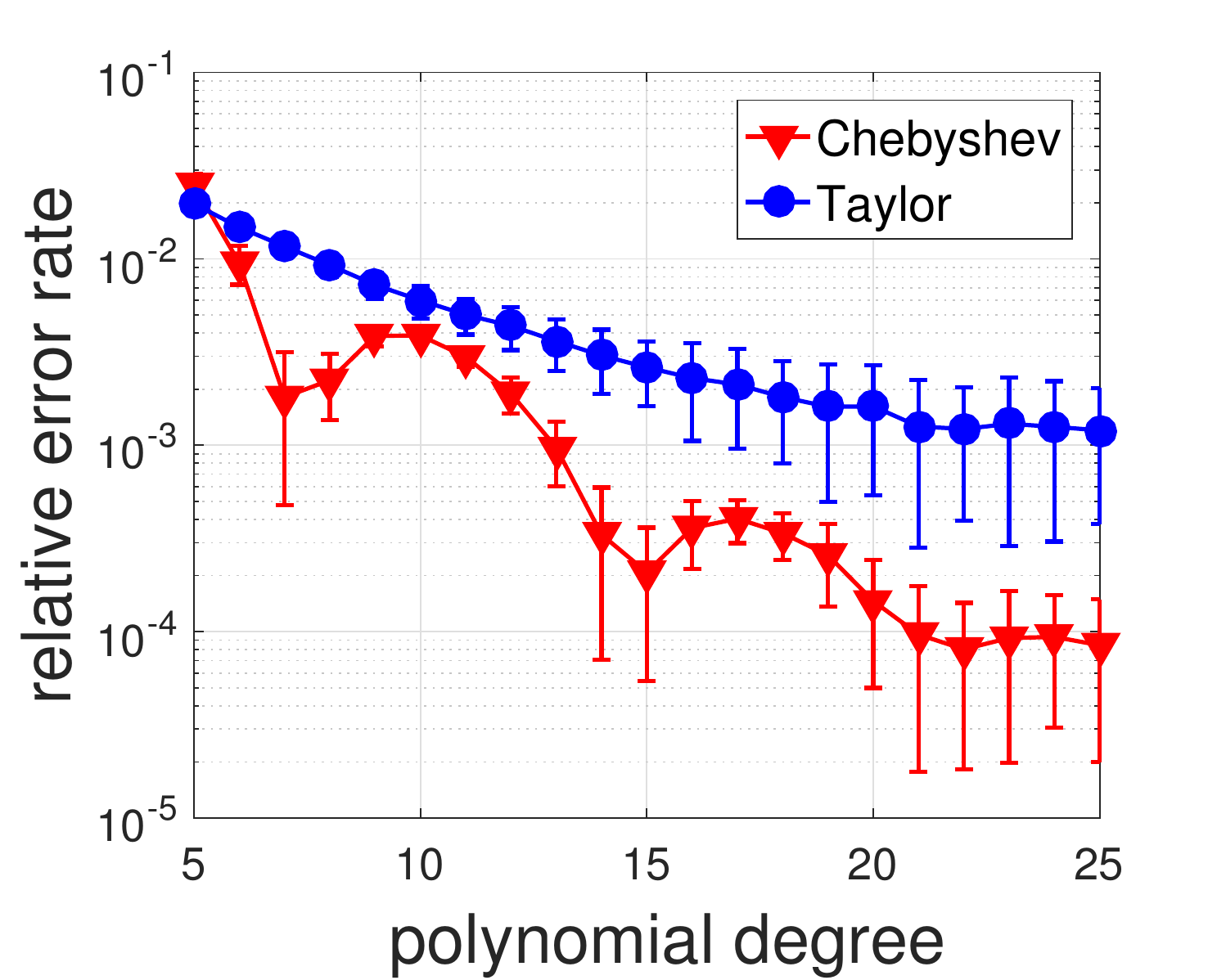}}
\hskip -0.01in
\subfigure[]{\includegraphics[width = 0.33\textwidth]{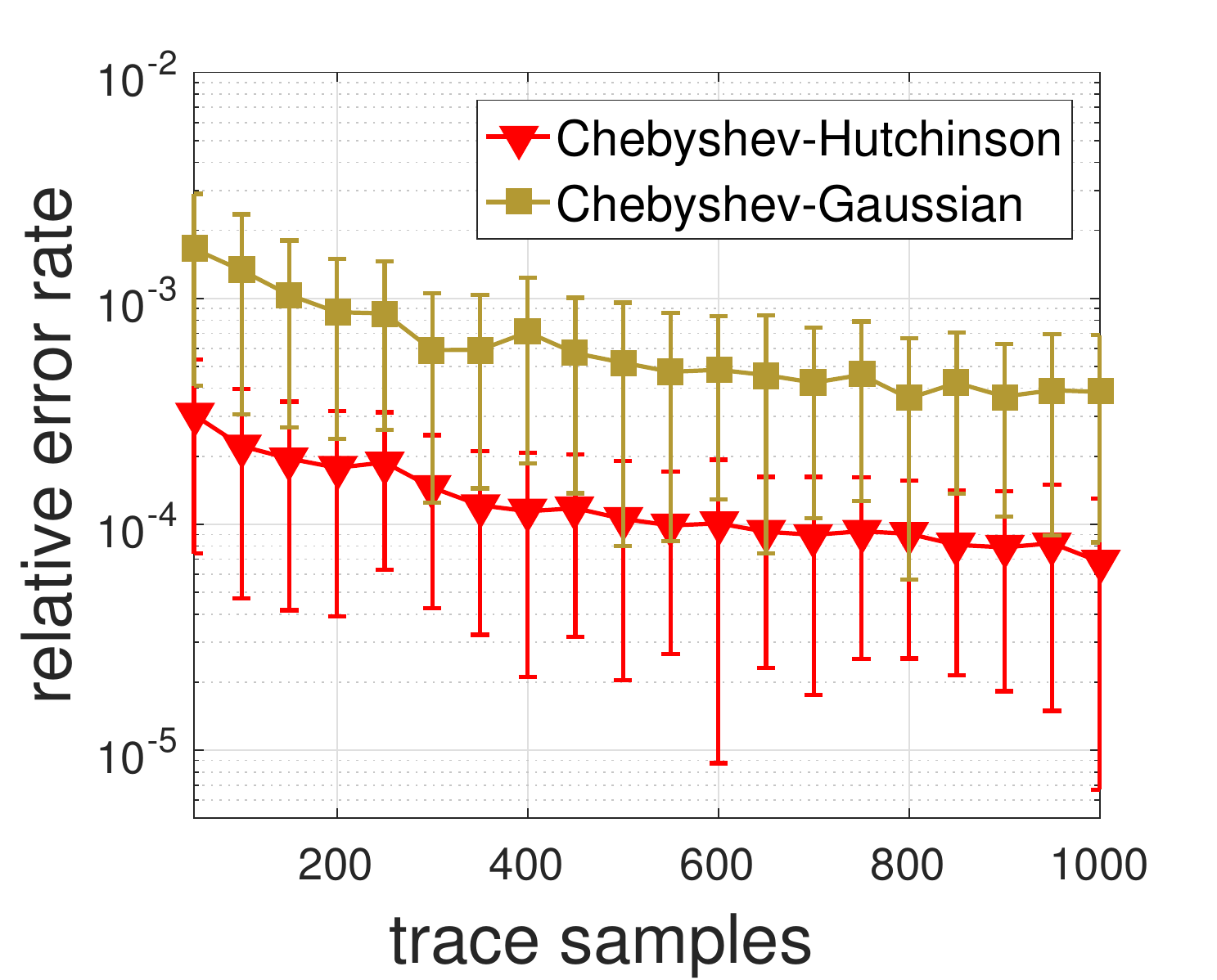}}
\hskip -0.01in
\subfigure[]{\includegraphics[width = 0.33\textwidth]{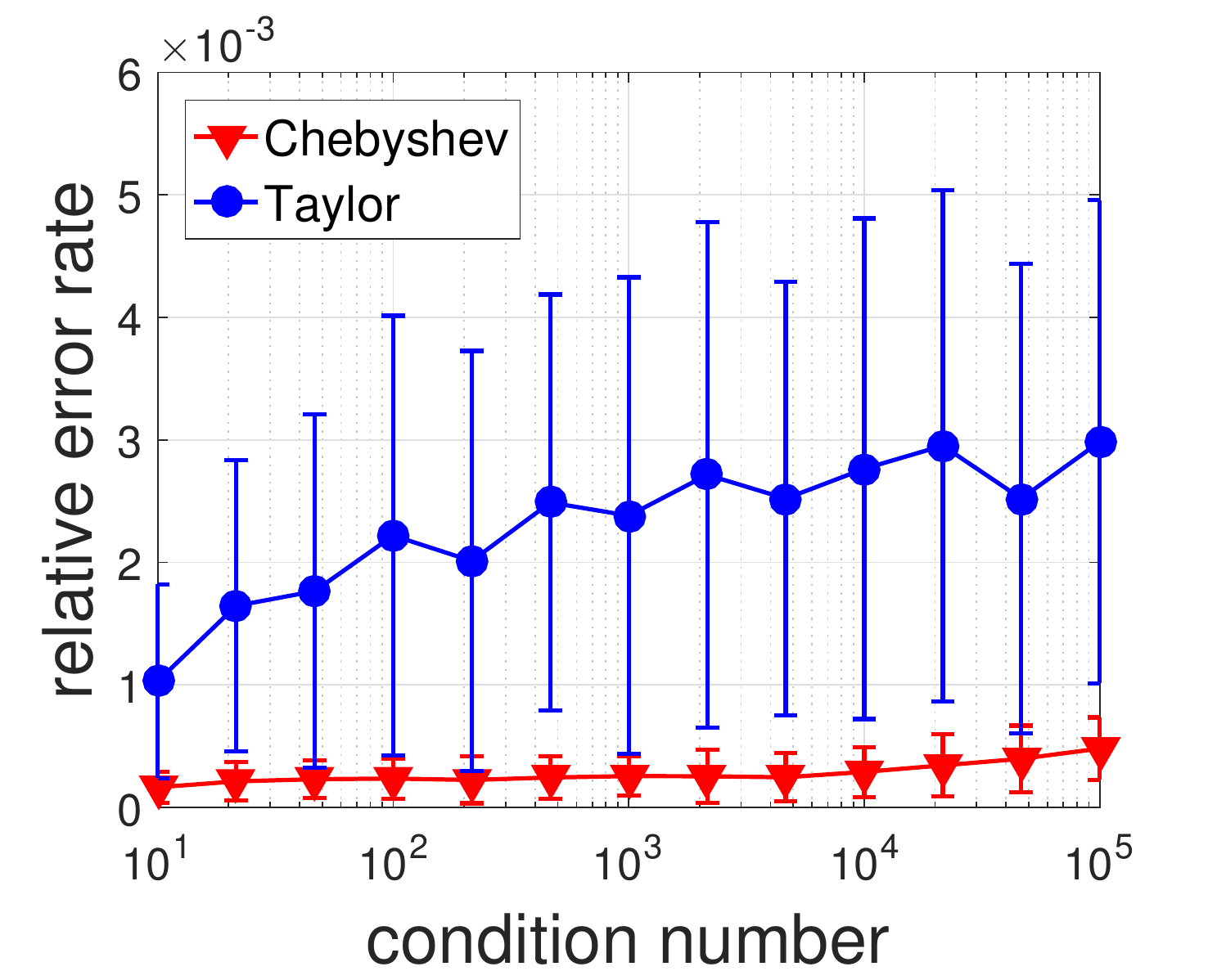}}
\caption{Performance evaluations of {\bf Algorithm \ref{alg2}} (i.e., Chebyshev) and comparisons with other algorithms:
(a) running time varying matrix dimension,  (b) comparison in running time among Cholesky decomposition, Schur complement \cite{hsieh2013big}, Cauchy integral formula \cite{aune2014_GP} and Taylor-based algorithm \cite{leithead_GP_stoch},
(c) relative error varying matrix dimension, 
(d) relative error varying polynomial degree,
(e) relative error varying the number of trace samples,
(f) relative error varying condition number.
The relative error means a ratio between the absolute error of
the output of an  approximation algorithm and the actual value of log-determinant.}\label{fig:performance}
\end{center}
\end{figure}

\subsection{Log-determinant}
In this section, we report the performance of our algorithm compared to other methods for computing
the log-determinant of positive definite matrices.
We first investigate the empirical performance of the proposed algorithm on large sparse random matrices.
We generate a random matrix $A\in \mathbb{R}^{d\times d}$, where the number of non-zero entries per each row is around $10$.
We first select non-zero off-diagonal entries in each row with values drawn from the standard normal distribution.
To make the matrix symmetric, we set the entries in transposed positions to the same values.
Finally, to guarantee positive definiteness, we set its diagonal entries to absolute row-sums and add a small margin value 0.1.
Thus, the lower bound for eigenvalues can be chosen as $a=0.1$ and the upper bound is set to the infinite norm of a matrix.

Figure \ref{fig:performance} (a) shows the running time of 
{\bf Algorithm \ref{alg2}}
from matrix dimension $d=10^4$ to $10^7$. 
The algorithm scales roughly linearly over a large range of matrix sizes as expected. 
In particular, it takes only $600$ seconds for a matrix of dimension $10^7$ with $10^8$ non-zero entries. 
Under the same setup, we also compare the running time of our algorithm
with other ones including Cholesky decomposition and Schur
complement. The latter was used for sparse inverse covariance estimation with
over a million variables \cite{hsieh2013big} and we run the code implemented by the authors.
The running time of the algorithms are reported in Figure \ref{fig:performance} (b). 
Our algorithm is dramatically faster than both exact methods. 
Moreover, our algorithm is an order of magnitude
faster than the recent approach based on Cauchy integral formula~\cite{aune2014_GP},
while it achieves better accuracy as reported in Figure \ref{fig:performance} (c).\footnote{The method \cite{aune2014_GP} is implemented in the SHOGUN machine learning toolbox, http://www.shogun-toolbox.org.}

We also compare the relative accuracies between our algorithm and
that using Taylor expansions \cite{leithead_GP_stoch} {with the same sampling number} $m=50$ and polynomial degree $n=25$, as reported in Figure \ref{fig:performance} (c). 
We see that the Chebyshev interpolation based method more accurate than the one based on Taylor approximations. 
To complete the picture, we also use a large number of samples for trace estimator, $m=1000$, for both algorithms
to focus on the polynomial approximation errors. The results are reported in Figure \ref{fig:performance} (d), 
showing that our algorithm using Chebyshev expansions is superior
in accuracy compared to the Taylor-based algorithm.

In Figure \ref{fig:performance} (e),
we compare two different trace estimators, Gaussian and Hutchinson,
under the choice of 
polynomial degree $n=100$.
We see that the Hutchinson estimator outperforms the Gaussian estimator.
Finally, in Figure \ref{fig:performance} (f) we report the results of experiments
 with varying condition number.
We see that the Taylor-based method is more sensitive to the condition number than the Chebyshev-based
method.

\begin{figure*}[]
\begin{center} 
\subfigure[]{\includegraphics[width = 0.4\textwidth]{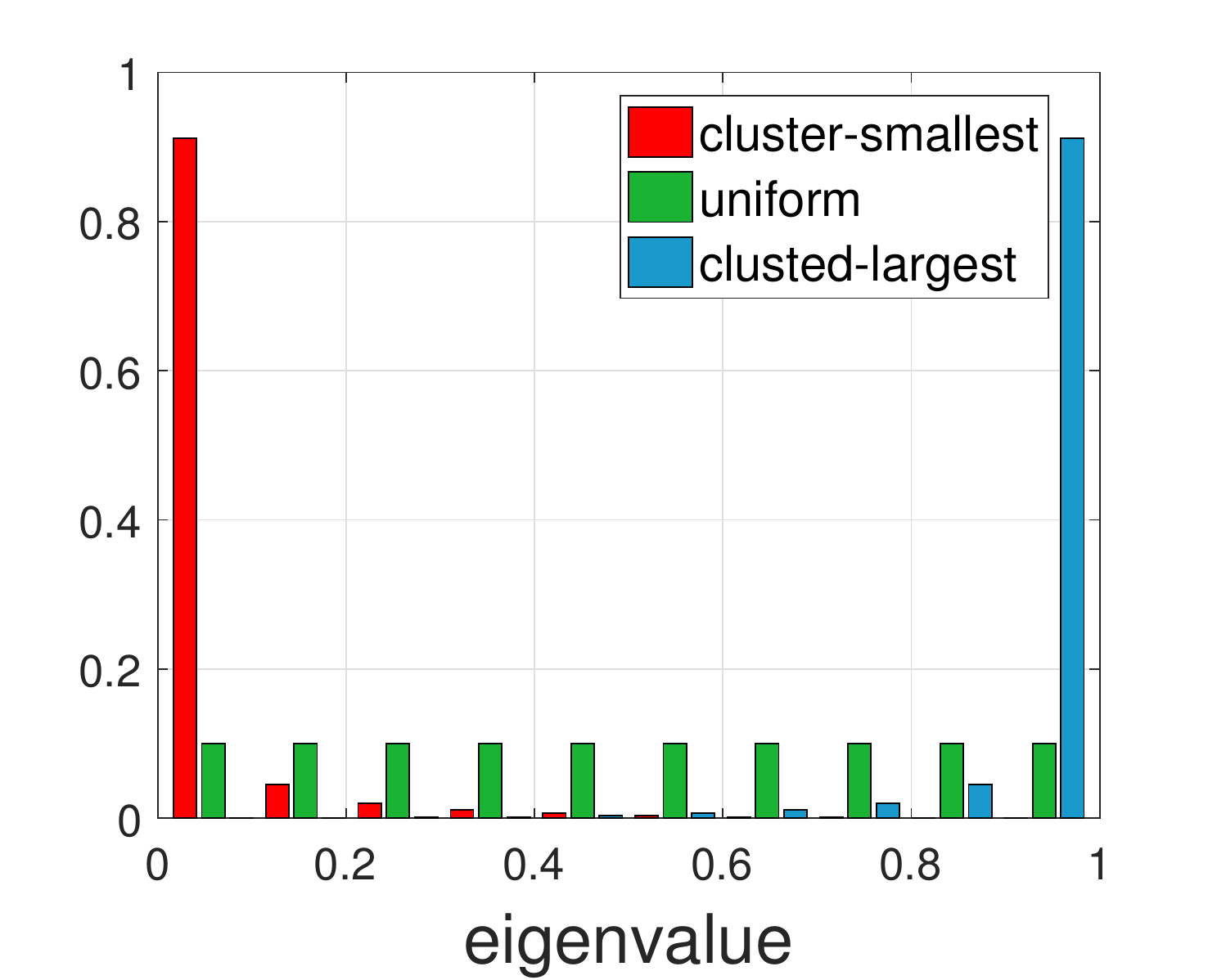}}
\subfigure[]{\includegraphics[width = 0.4\textwidth]{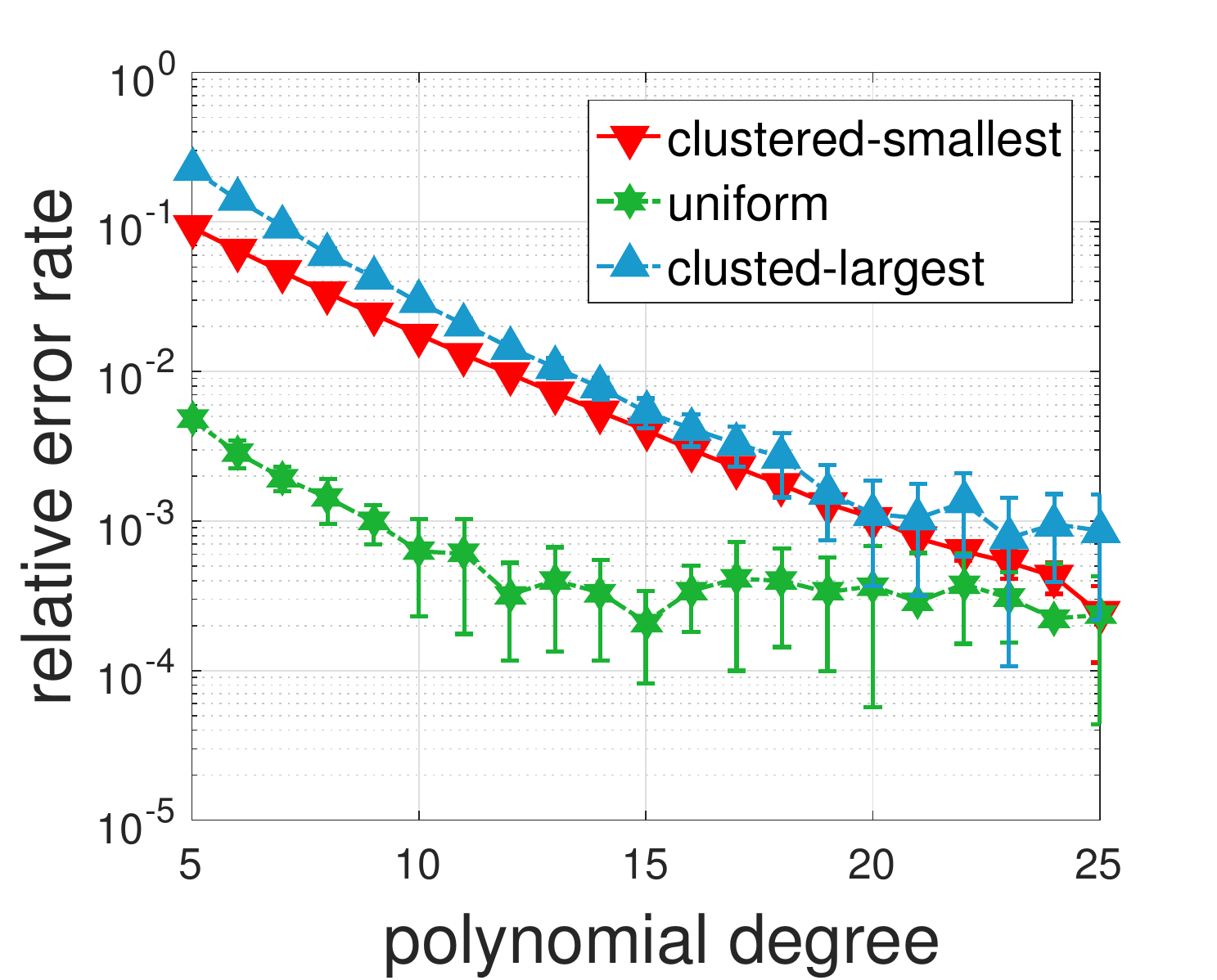}}
\caption{
Performance evaluations of {\bf Algorithm \ref{alg2}}
when eigenvalue distributions are uniform (green), clustered on the smallest one (red)
and clustered on the largest one (cyan):
(a) distribution of eigenvalues,
(b) relative error varying polynomial degree.
} \label{fig:eigenhist}
\end{center}
\end{figure*}

Chebyshev expansions have extreme points more likely 
around the end points of the approximating interval since
the absolute values of their derivatives are larger. 
Hence, one can expect that 
if eigenvalues are clustered on the smallest (or largest) one, 
the quality of approximation becomes worse.
To see this, 
we run {\bf Algorithm \ref{alg2}} for matrices having
uniformly distributed eigenvalues and
eigenvalues clustered on the smallest (or largest) one, 
which is reported in Figure \ref{fig:eigenhist}.
We observe that if the polynomial degree is small, 
the clustering effect cause larger errors, but 
the error decaying rate with respect to polynomial degree
is not sensitive to it.



\subsection{Maximum Likelihood Estimation for GMRF using Log-determinant} \label{sec:gmrf}

In this section, we apply our proposed algorithm approximating log determinants
for maximum likelihood (ML) estimation in Gaussian Markov Random Fields (GMRF) \cite{rue_GMRF}.
GMRF is a multivariate joint Gaussian distribution defined with respect to a graph. 
Each node of the graph corresponds to a random variable in the Gaussian distribution, where the graph captures the
conditional independence relationships (Markov properties) among the random variables. The model has been extensively used in many applications in computer vision, spatial statistics, and other fields.
The inverse covariance matrix $J$ (also called information or precision matrix) is positive definite
and sparse: $J_{ij}$ is non-zero only if the edge $\{i,j\}$ is contained in the graph.
We are specifically interested in the problem of parameter estimation from data 
(fully or partially observed samples from the GMRF), 
where we would like to find the maximum likelihood estimates of 
the non-zero entries of the information matrix. 

\noindent{\bf GMRF with 100 million variables on synthetic data.}
We first consider a GMRF on a square grid of size $5000\times 5000$ with precision matrix 
$J \in \mathbb{R}^{d \times d}$ with $d=25\times 10^6$, 
which is parameterized by $\eta$, i.e., each node has
four neighbors with partial correlation $\eta$. We generate a sample $\mathbf x$ from the GMRF model
(using Gibbs sampler) for parameter $\eta=-0.22$. The log-likelihood of the sample is
$$\log p({\mathbf x}| \eta) = \frac12 \log \det J(\eta) - \frac12 {\mathbf x}^\top J(\eta) {\mathbf x} - \frac{d}{2} \log\left( 2 \pi\right),$$
where $ J(\eta)$ is a matrix of dimension $25\times 10^6$ and $10^8$ non-zero entries.
Hence, the ML estimation requires to solve 
$$\max_{\eta} \left(\frac12 \log \det J(\eta) - \frac12 {\mathbf x}^\top J(\eta) {\mathbf x} - \frac{d}{2} \log\left( 2 \pi\right)\right).$$
We use {\bf Algorithm \ref{alg2}} to estimate the log-likelihood
as a function of $\eta$, as reported in Figure \ref{fig:rho}.
This confirms that the estimated log-likelihood is maximized at
the correct (hidden) value ${\eta} = -0.22$.

\begin{figure*}[tbh]
\begin{center}
\includegraphics[width=0.48\textwidth]{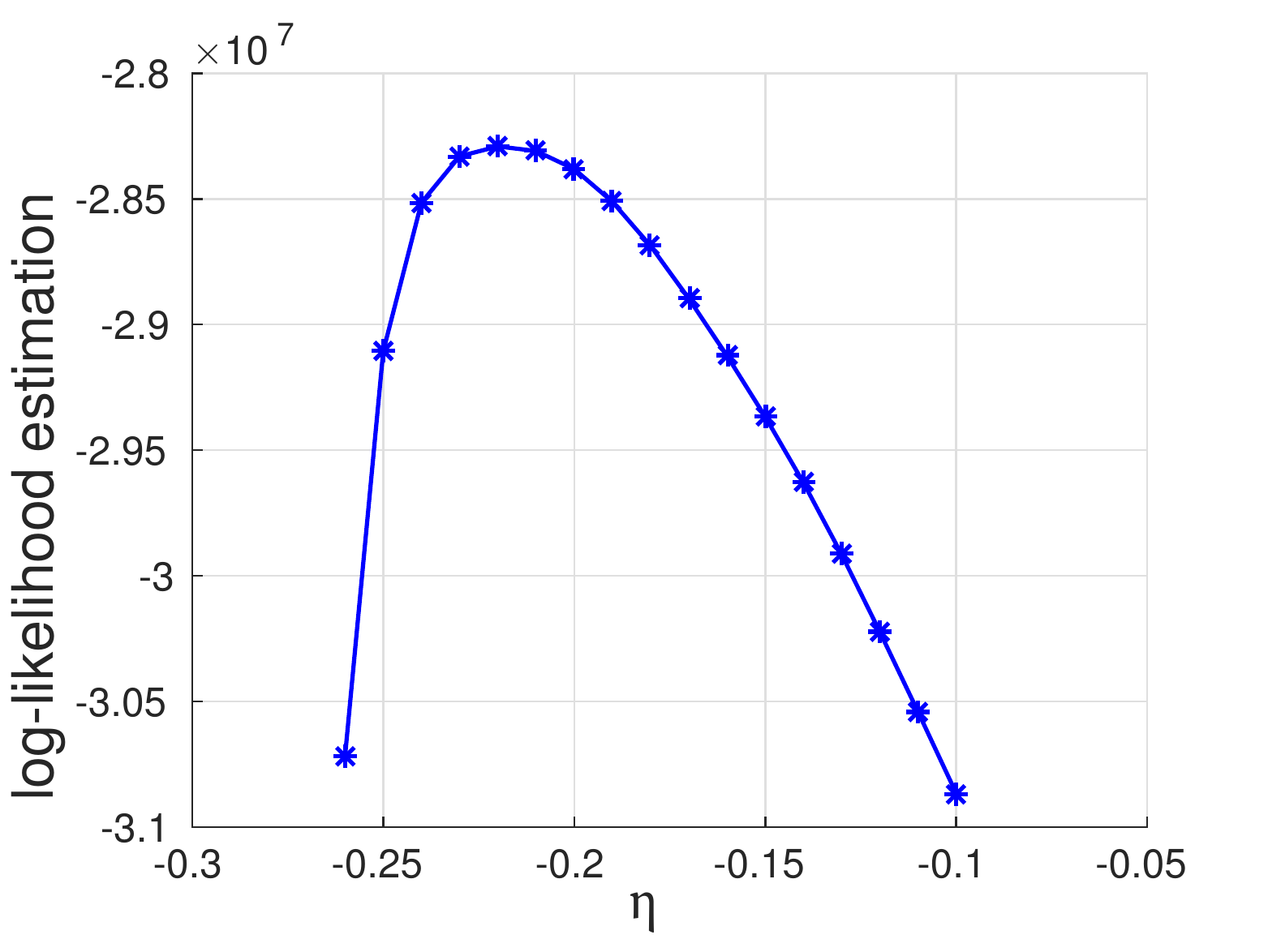}
\vskip -0.15in
\caption{Log-likelihood estimation for hidden parameter $\eta$ for square GMRF model of size $5000\times 5000$.}
\label{fig:rho}
\end{center}
\end{figure*}

\noindent{\bf GMRF with 6 million variables for ozone data.}
We also consider a similar GMRF parameter estimation from real spatial data with missing values.
We use the data-set from \cite{aune2014_GP} that provides satellite measurements of ozone levels
over the entire earth following the satellite tracks. We use a resolution of $0.1$ degrees in lattitude
and longitude, giving a spatial field of size $1681 \times 3601$, with over 6 million variables.
The data-set includes 172,000 measurements. To estimate the log-likelihood in presence
of missing values, we use the Schur-complement formula for determinants. Let the precision matrix
for the entire field be $J = \left( \begin{matrix} J_o & J_{o,z}\\ J_{z,o} & J_z\end{matrix} \right)$,
where subsets $\mathbf{x}_o$ and $\mathbf{x}_z$ denote the observed and unobserved components of
$\mathbf{x}$. 
Then, our goal is find some parameter $\eta$ such that
$$
\max_{\eta} \int_{\mathbf{x}_z} p\left( \mathbf{x}_o , \mathbf{x}_z | \eta\right) d \mathbf{x}_z.
$$
We estimate the marginal probability using the fact that
the marginal precision matrix of $\mathbf{x}_o$ is
$\bar{J}_{o} = J_o - J_{o, z} J_z^{-1} J_{z, o}$
and
its log-determinant is computed as
$\log \det( \bar{J}_{o}) = \log \det(J) - \log \det (J_z)$ via Schur complements. 
To evaluate the quadratic term
$x_o' \bar{J}_o x_o$ of the log-likelihood we need a single linear solve using an iterative solver.
We use a linear combination of the thin-plate model and the thin-membrane models
\cite{rue_GMRF}, with two parameters {$\eta = \left( \alpha,\beta\right)$}:
$J = \alpha I + \beta J_{tp} + (1-\beta) J_{tm}$ and obtain ML estimates using {\bf Algorithm \ref{alg2}}.
Note that smallest eigenvalue of $J$ is equal to $\alpha$. We show the sparse measurements in Figure \ref{fig:ozone} (a) and
the GMRF interpolation using fitted values of parameters in Figure \ref{fig:ozone} (b).
We can see that the proposed log-determinant estimation algorithm allows us to do efficient estimation and inference in GMRFs of very large size, with sparse information matrices of size over 6 millions variables.

\begin{figure*}[t]
\begin{center}
\includegraphics[width=\textwidth]{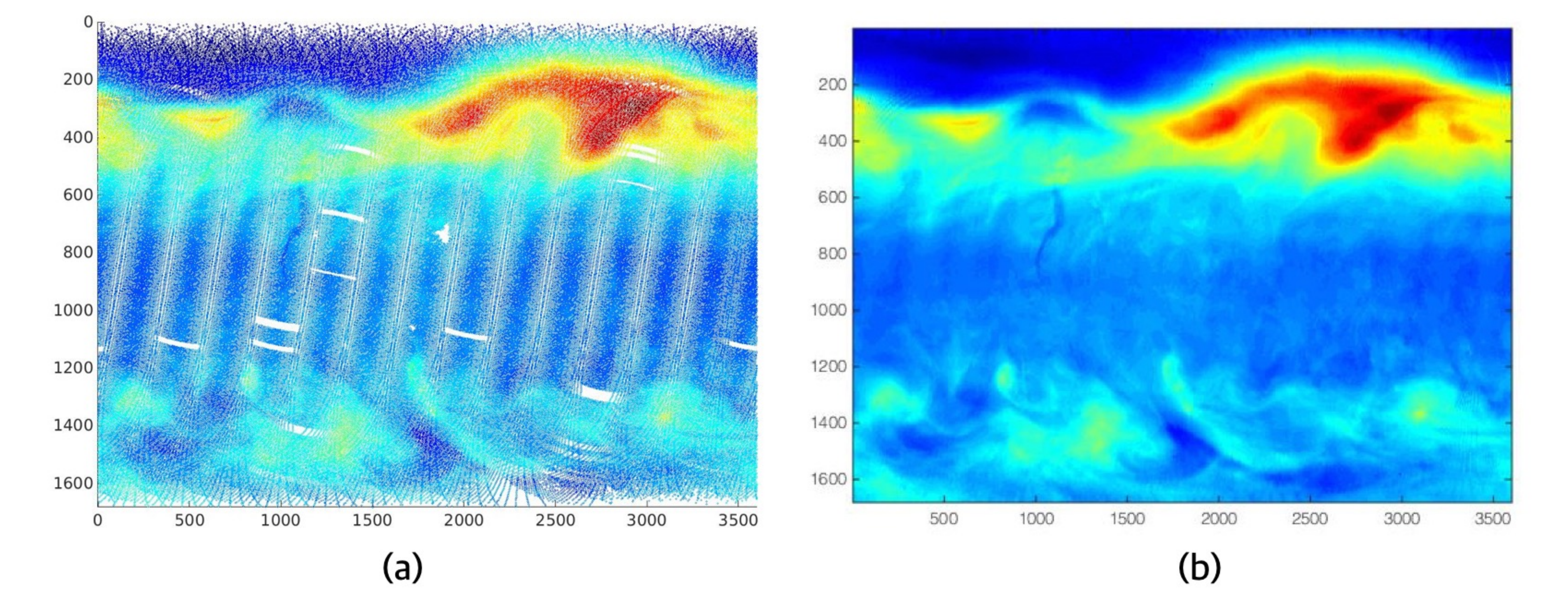}
\caption{GMRF interpolation of ozone measurements: (a) original
sparse measurements and (b) interpolated values using a GMRF with
parameters fitted using {\bf Algorithm \ref{alg2}}.}
\label{fig:ozone}
\end{center}
\end{figure*}

\begin{figure*}[t!]
\begin{center} 
\subfigure[]{\includegraphics[width = 0.4\textwidth]{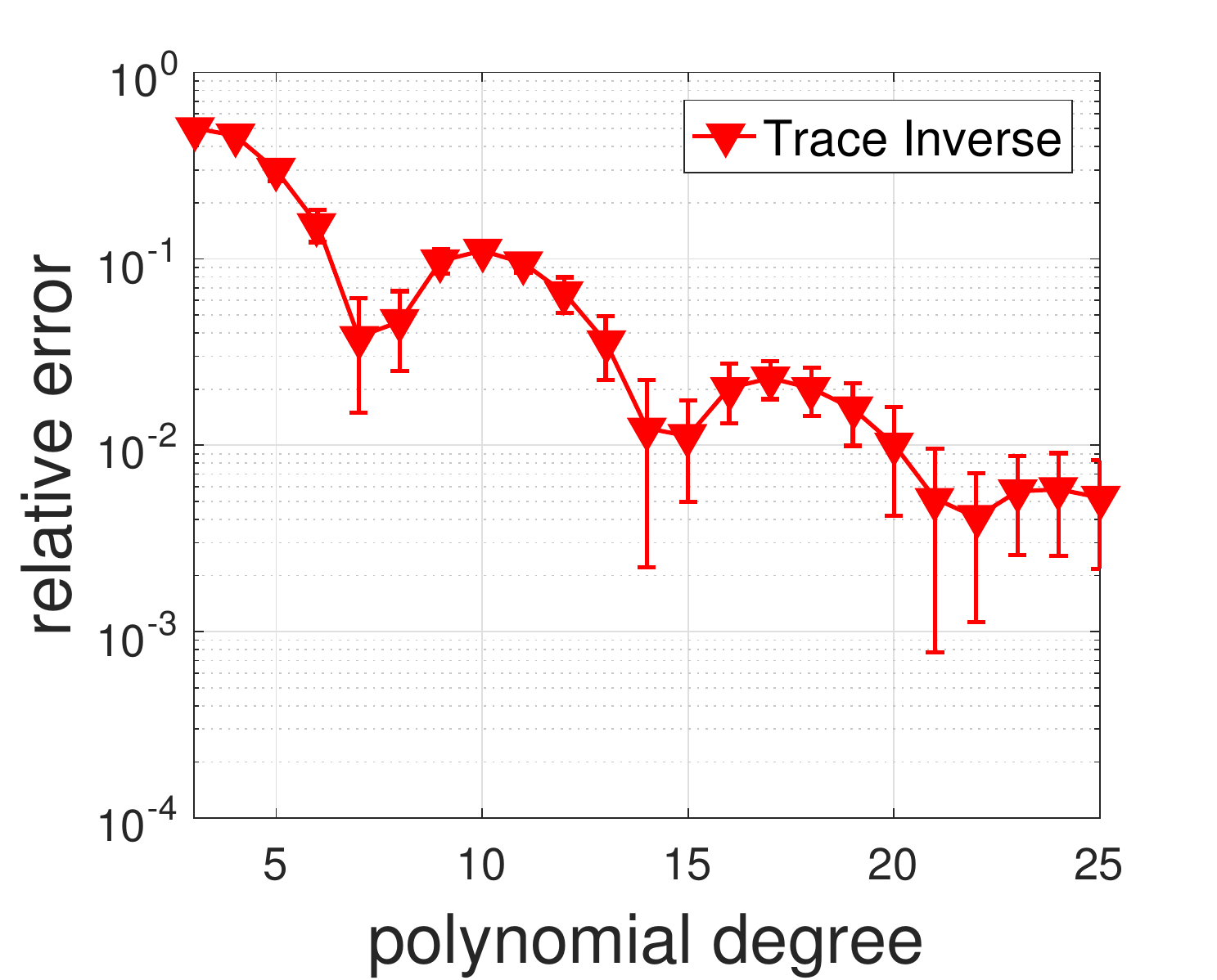}}
\subfigure[]{\includegraphics[width = 0.4\textwidth]{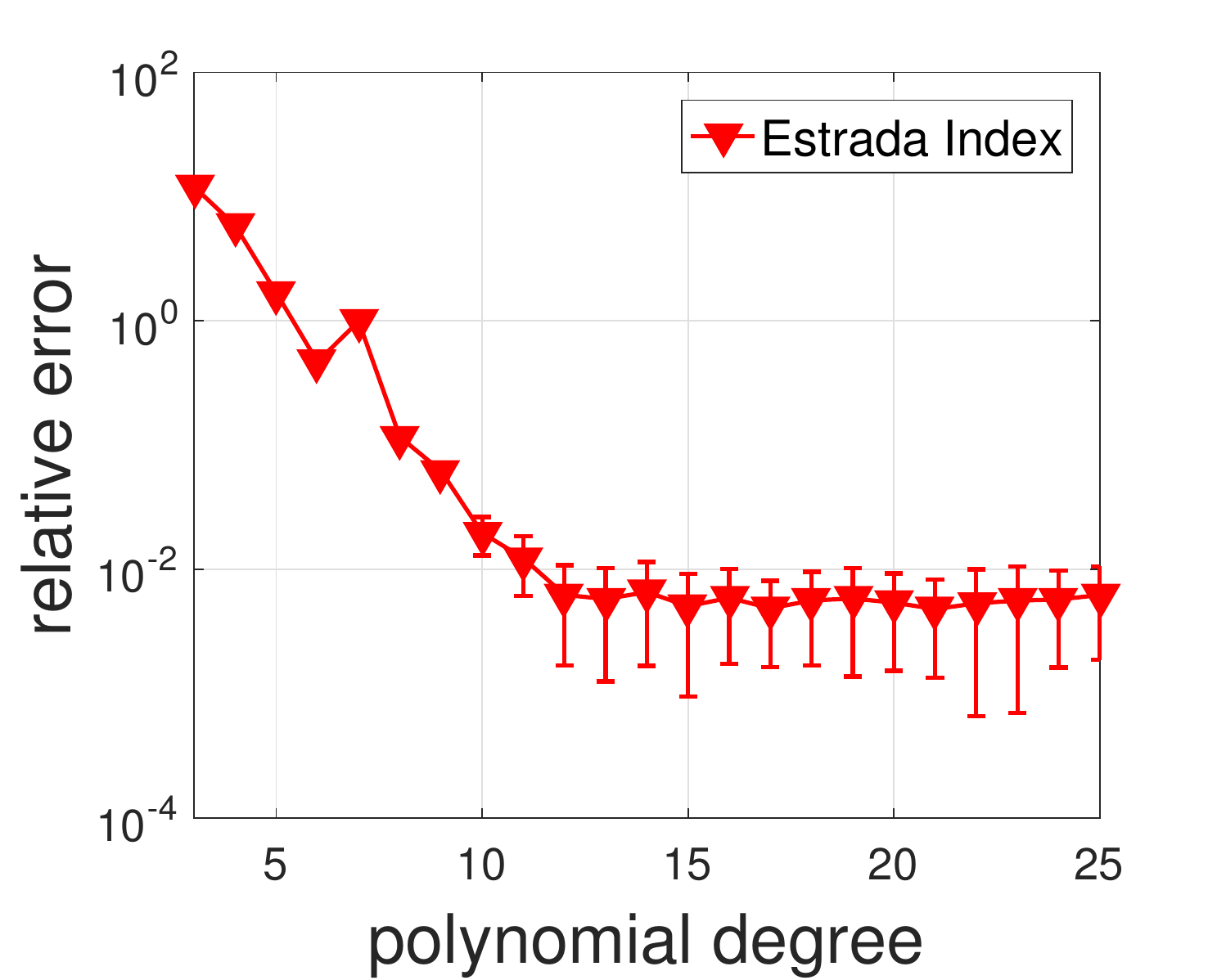}}
\vskip -0.16in
\subfigure[]{\includegraphics[width = 0.4\textwidth]{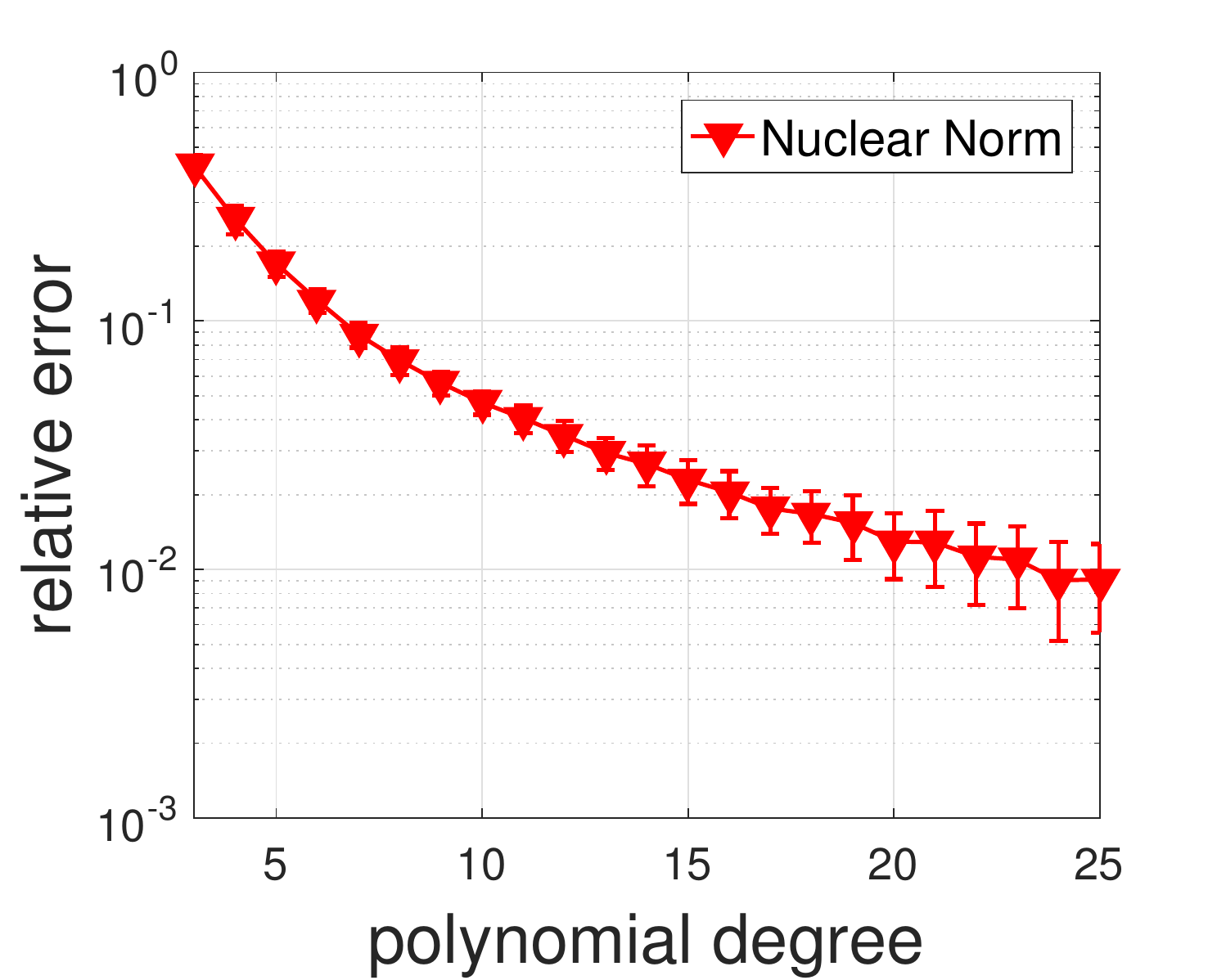}}
\subfigure[]{\includegraphics[width = 0.4\textwidth]{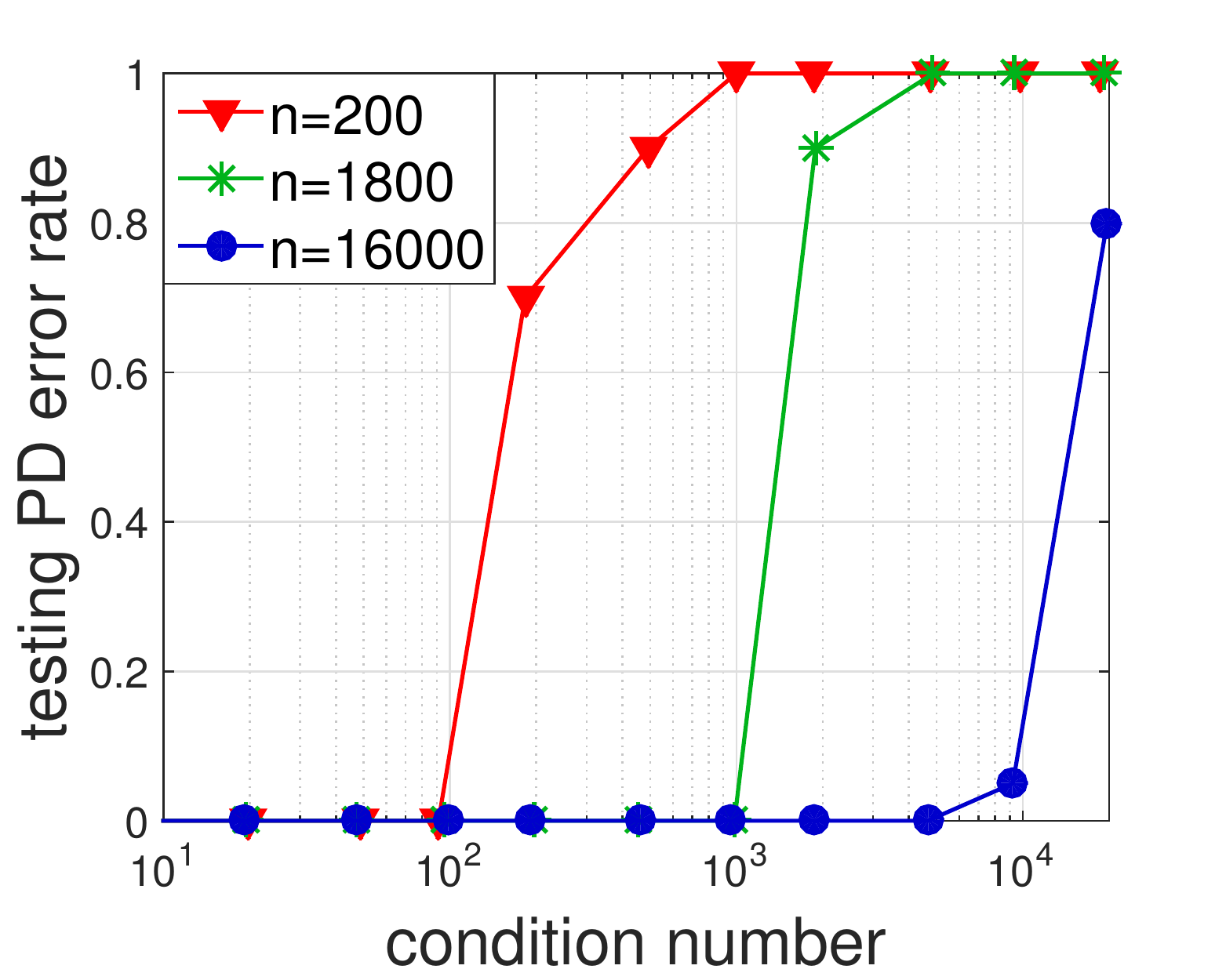}}
\caption{Accuracy of the proposed algorithm:
(a) the trace of matrix inverse,
(b) the Estrada index,
(c) the nuclear norm (Schatten $1$-norm) and
(d) testing positive definiteness.} \label{fig:application}
\end{center}
\end{figure*}

\subsection{Other Spectral Functions}
In this section, we report the performance of our scheme 
for four other choices of function $f$: the trace of matrix inverse, the Estrada index, the matrix nuclear norm
and testing positive definiteness,
which correspond to $f(x)=1/x$, $f(x)=\exp(x)$, $f(x)=x^{1/2}$ and $f(x)= \frac1{2}\left(1+\tanh\left(-\alpha x\right) \right)$, respectively.
The detailed algorithm description for each function
is given in Section \ref{sec:application}.
Since the running time of our algorithms are `almost' independent of the choice of function $f$, i.e.,
it is same as the case $f(x)=\log x$ that reported in the previous section, 
we focus on measuring the accuracy of our algorithm. 

In Figure \ref{fig:application},
we report the approximation error of our algorithm
for the trace of matrix inverse, the Estrada index, the matrix nuclear norm and testing positive definiteness.
All experiments were conducted on random $5000$-by-$5000$ matrices. The particular setup for
the different matrix functions are:
\begin{itemize}
    \item  The input matrix for the trace of matrix inverse is generated in the same way with 
the log-determinant case in the previous section.
    \item For the Estrada index, we generate the random regular graphs with $5000$ vertices and degree $\Delta_G = 10$. 
    \item For the nuclear norm, we generate random non-symmetric matrices and estimate its nuclear norm 
(which is equal to the sum of all singular values).
We first select the $10$ positions of non-zero entries in each row and their values 
are drawn from the standard normal distribution.
The reason why we consider non-symmetric matrices is because
the nuclear norm of a symmetric matrix is much easier to compute, e.g.,
the nuclear norm of a positive definite matrix is just its trace.
We choose $\sigma_{\min}=10^{-4}$ 
and $\sigma_{\max} = \sqrt{\|A\|_1\|A\|_\infty}$ for input matrix $A$.
    \item For testing positive definiteness,
we first create random symmetric matrices whose the smallest eigenvalue varies from $10^{-1}$ to $10^{-4}$ and 
the largest eigenvalue is less than 1 (via appropriate normalizations).
Namely, the condition number is between $10$ and $10^4$.
We choose the same sampling number $m=50$
and three different polynomial degrees: $n = 200, 1800$ and $16000$.
For each degree $n$,
{\bf Algorithm \ref{alg:psd}} detects correctly positive definiteness of matrices
with condition numbers at most $10^2$, $10^3$ and $10^4$, respectively.
The error rate is measured as a ratio of incorrect results among 20 random instances.
\end{itemize}

For experiment of the trace of matrix inverse, the Estrada index and the nuclear norm, 
we plot the relative error of the proposed algorithms varying polynomial degrees in 
Figure \ref{fig:application} (a), (b) and (c), respectively.
Each of them achieves less than 1$\%$ error with polynomial degree at most $n=25$ and sampling number $m=50$.
Figure \ref{fig:application} (d) shows the results of testing positive definiteness.
When $n$ is set according to the condition number the proposed algorithm is almost always 
correct in detecting positive definiteness.
For example, if the decision problem involves with the active region $\cal{A}_{\varepsilon}$ for $\varepsilon = 0.02$,
which is the case that matrices having the condition number at most 100,
polynomial degree $n=200$ is enough for the correct decision.

\begin{table}[h]
\centering
\scriptsize
\begin{tabular}{@{}ccccccccc@{}}
\toprule
matrix & dimension & \begin{tabular}[c]{@{}c@{}}number of\\ nonzeros\end{tabular} & \begin{tabular}[c]{@{}c@{}}positive\\ definite\end{tabular} & \begin{tabular}[c]{@{}c@{}}{\bf Algorithm \ref{alg:psd}}\\ $n=200$\end{tabular} & \begin{tabular}[c]{@{}c@{}}{\bf Algorithm \ref{alg:psd}}\\ $n=1800$\end{tabular} & \begin{tabular}[c]{@{}c@{}}{\bf Algorithm \ref{alg:psd}}\\ $n=16000$\end{tabular} & \begin{tabular}[c]{@{}c@{}} MATLAB \\ ${\tt eigs}$ \end{tabular} & \begin{tabular}[c]{@{}c@{}} MATLAB \\ ${\tt condest}$ \end{tabular}\\ \midrule
${\tt Chem97ZtZ}$	    & 2,541	     & 7,361      & yes       & PD        & PD        & PD      & diverge  & 462.6\\
${\tt fv1}$     	    & 9,604	     & 85,264     & yes       & PD        & PD        & PD      & 0.5122  & 12.76\\
${\tt fv2}$	            & 9,801	     & 87,025     & yes       & PD        & PD        & PD      & 0.5120  & 12.76\\
${\tt fv3}$             & 9,801	     & 87,025     & yes       & NOT PD    & NOT PD    & PD      & 0.0020  & 4420\\
${\tt CurlCurl\_0}$	    & 11,083	 & 113,343    & no        & NOT PD    & NOT PD    & NOT PD  & diverge  & 
6.2
{$\times 10^{21}$}\\
${\tt barth5}$          & 15,606     & 107,362    & no        & NOT PD    & NOT PD    & NOT PD  & -2.1066  & 84292\\
${\tt Dubcova1}$        & 16,129     & 253,009    & yes       & NOT PD    & NOT PD    & PD      & 0.0048   & 2624\\
${\tt cvxqp3}$	        & 17,500	 & 114,962    & no        & NOT PD    & NOT PD    & NOT PD  & diverge  & 
2.2
{$\times 10^{16}$}\\
${\tt bodyy4}$          & 17,546     & 121,550    & yes       & NOT PD    & NOT PD    & PD      & diverge  & 1017\\
${\tt t3dl{\_}e}$	    & 20,360	 & 20360	  & yes       & NOT PD    & NOT PD    & PD      & diverge  & 6031\\
${\tt bcsstm36}$        & 23,052     & 320,060    & no        & NOT PD    & NOT PD    & NOT PD  & diverge  & $\infty$\\
${\tt crystm03}$        & 24,696     & 583,770    & yes       & NOT PD    & PD        & PD      & 
3.7
{$\times 10^{-15}$} & 467.7\\
${\tt aug2d}$           & 29,008     & 76,832     & no        & NOT PD    & NOT PD    & NOT PD  & -2.8281  & $\infty$\\
${\tt wathen100}$	    & 30,401	 & 471,601    & yes       & NOT PD    & NOT PD    & PD      & 0.0636   & 8247\\
${\tt aug3dcqp}$	    & 35,543	 & 128,115    & no        & NOT PD    & NOT PD    & NOT PD  & diverge  & 
4.9
{$\times 10^{15}$}\\
${\tt wathen120}$	    & 36,441	 & 565,761	  & yes       & NOT PD    & NOT PD    & PD      & 0.1433   & 4055\\
${\tt bcsstk39}$	    & 46,772	 & 2,060,662  & no        & NOT PD    & NOT PD    & NOT PD  & diverge  & 
3.1
{$\times 10^{8}$}\\
${\tt crankseg{\_}1}$	& 52,804	 & 10,614,210 & yes       & NOT PD    & NOT PD    & NOT PD  & diverge  & 
2.2
{$\times 10^{8}$}\\
${\tt blockqp1}$        & 60,012     & 640,033    & no        & NOT PD    & NOT PD    & NOT PD  & -446.636 & 8.0{$\times 10^{5}$}\\
${\tt Dubcova2}$	    & 65,025	 & 1,030,225  & yes       & NOT PD    & NOT PD    & PD      & 0.0012   & 10411\\
${\tt thermomech{\_}TC}$ & 102,158   & 711,558    & yes       & NOT PD    & PD        & PD      & 0.0005   & 125.5\\
${\tt Dubcova3}$	    & 146,689	 & 3,636,643  & yes       & NOT PD    & NOT PD    & PD      & 0.0012   & 11482\\
${\tt thermomech{\_}dM}$ & 204,316   & 1,423,116  & yes       & NOT PD    & PD        & PD      & 
9.1
{$\times 10^{-7} $}  & 125.487\\
${\tt pwtk}$            & 217,918    & 11,524,432 & yes       & NOT PD    & NOT PD    & NOT PD  & diverge  & 
5.0
{$\times 10^{12}$}\\ 
${\tt bmw3{\_2}}$	    & 227,362	 & 11,288,630 & no        & NOT PD    & NOT PD    & NOT PD  & diverge  &
1.2
{$\times 10^{20}$}\\
\bottomrule
\end{tabular}
\caption{Testing positive definiteness for real-world matrices. 
{\bf Algorithm \ref{alg:psd}} outputs PD or NOT PD, i.e., the input matrix is either
(1)  positive definite (PD) or
(2) {not positive definite} or its smallest eigenvalue is in the indifference region (NOT PD).
The MATLAB ${\tt eigs}$ and 
${\tt condest}$ functions output the smallest eigenvalue and an estimate for the condition number of
the input matrix, respectively.}
\label{table:collection}
\end{table}

We tested the proposed algorithm for testing positive definiteness on 
real-world matrices from the University of Florida Sparse Matrix Collection \cite{davis2011university},
selecting various symmetric 
matrices.
We use $m=50$ and three choices for $n$: $n=200,1800,16000$.
The results are reported in Table \ref{table:collection}.
We observe that the algorithm is always correct when declaring positive definiteness, but seems to
declare indefiniteness when the matrix is too ill-conditioned for it to detect definiteness correctly.
In addition, with two exceptions (${\tt crankseg{\_}1}$ and ${\tt pwtk}$), when $n=16000$ the algorithm was correct in declaring whether the 
matrix is positive definite or not. We remark that while $n=16000$ is rather large it is still smaller
than the dimension of most of the matrices that were tested (recall that our goal was to develop an 
algorithm that requires a small number of matrix products, i.e., it does not grow with respect to the matrix dimension).
We also note that even when the algorithm fails it still provides useful information
about both positive definiteness and the condition number of an input matrix 
while standard methods such as Cholesky decomposition (as mentioned in Section \ref{sec:psd}) are intractable for large matrices.
Furthermore, one can first run an algorithm to estimate the condition number, e.g.,
the MATLAB ${\tt condest}$ function, and then choose an appropriate degree $n$.
We also run the MATLAB ${\tt eigs}$ function 
which is able to estimate the smallest eigenvalue using iterative methods \cite{ipsen1997computing}
(hence, it can be used for testing positive definitesss).
Unfortunately, the iterative method 
often does not converge, i.e, residual tolerance may not go to zero,
as reported in Table \ref{table:collection}.
One advantage of our algorithm is that it does not depend on a convergence criteria 

\section{Conclusion}
Recent years has a seen a surge in the need for various computations on large-scale unstructured matrices.
The lack of structure poses a significant challenge for traditional decomposition based methods. 
Randomized methods are a natural candidate for such tasks as they are mostly oblivious to structure. 
In this paper, we proposed and analyzed a linear-time approximation algorithm for spectral sums of symmetric matrices, where the exact computation requires cubic-time in the worst case. 
Furthermore, our algorithm is very easy to parallelize since it requires only (separable) 
matrix-vector multiplications. 
{
We believe that the proposed algorithm will find 
an important theoretical and computational roles in a variety of applications ranging from statistics
and machine learning to applied science and engineering.
}

\subsection*{Acknowledgement}
Haim Avron acknowledges the support from the XDATA program of the Defense Advanced Research Projects
Agency (DARPA), administered through Air Force Research Laboratory contract FA8750-12-C-0323. The authors thank Peder Olsen and Sivan Toledo for helpful discussions. 

\bibliography{biblist}
\bibliographystyle{apalike}

\end{document}